\documentclass[11pt,a4paper,reqno]{amsart}
\usepackage{amsthm,amsmath,amsfonts,amssymb,amsxtra,appendix,bookmark,euscript,delarray,dsfont,mathrsfs}
\usepackage{enumerate}
\usepackage{amssymb}
\usepackage[utf8]{inputenc}
\usepackage{xcolor}
\usepackage{graphicx}
\usepackage{upgreek}
\usepackage[normalem]{ulem}
\usepackage{ stmaryrd }
\usepackage{array}
\usepackage{longtable}
\newcolumntype{L}[1]{>{\raggedright\arraybackslash}p{#1}}
\usepackage{url}
\usepackage{verbatim}

\allowdisplaybreaks

\usepackage{tikz}
\usetikzlibrary{decorations.pathreplacing, calc}

\theoremstyle{plain}
\newtheorem{theorem}{Theorem}[section]
\newtheorem{lemma}[theorem]{Lemma}
\newtheorem{remark}[theorem]{Remark}

\newtheorem{proposition}[theorem]{Proposition}

\newtheorem{corollary}[theorem]{Corollary}
\newtheorem{assumption}{Assumption}[section]

\numberwithin{equation}{section}

\renewcommand{\d}{\mathrm{d}}
\newcommand{\dd}{\,\mathrm{d}}
\newcommand{\ii}{\mathrm{i}}

\DeclareMathOperator{\tr}{Tr}

\renewcommand{\leq}{\leqslant}
\renewcommand{\geq}{\geqslant}
\renewcommand{\ge}{\geqslant}
\renewcommand{\le}{\leqslant}
\def\nn{\nonumber}
\renewcommand{\to}{\rightarrow}
\renewcommand{\phi}{\varphi}

\newcommand{\norm}[1]{\left\lVert #1 \right\rVert}
\newcommand{\abs}[1]{\left\lvert#1\right\rvert}
\renewcommand{\phi}{\varphi}
\newcommand{\ket}[1]{|#1 \rangle}
\newcommand{\bra}[1]{\langle #1 |}
\newcommand		{\lt}			{\left}
\newcommand		{\rt}			{\right}
\newcommand		{\bangle}[1]	{\lt\langle #1\rt\rangle}
\newcommand		{\inprod}[2]	{\bangle{#1, #2}}

\newcommand\R{{\ensuremath {\mathbb R} }}
\newcommand{\CC}{{\ensuremath {\mathbb C} }}
\newcommand\Z{{\ensuremath {\mathbb Z} }}
\newcommand\1{{\ensuremath {\mathds 1} }}
\newcommand{\gH}{\mathfrak{H}}
\newcommand{\dG}{\d\Gamma}
\newcommand{\gX}{\mathfrak{X}}
\newcommand{\gF}{\mathfrak{F}}
\newcommand{\bH}{\mathbb{H}}
\newcommand{\bW}{\mathbb{W}}
\newcommand{\gS}{\mathfrak{S}}
\newcommand{\cN}{\mathcal{N}}
\newcommand{\cD}{\mathcal{D}}
\newcommand{\cU}{\mathcal{U}}
\newcommand{\cH}{\mathcal{H}}
\newcommand{\cM}{\mathcal{M}}
\newcommand{\cE}{\mathcal{E}}
\newcommand{\bT}{\mathbb{T}}
\newcommand{\cX}{\mathcal{X}}
\newcommand{\cl}{\mathrm{cl}}
\newcommand{\ren}{\mathrm{ren}}

\title[Nonlocal Cubic Density Gibbs states with three-body interactions]{Nonlocal Cubic Density Gibbs Measures from Bosonic Gibbs States with Three-Body Interactions}
\author[H. Liang]{Hao Liang}
\address{School of Mathematical Sciences, Peking University, Beijing, 100871, China}
\email{leunghao@stu.pku.edu.cn}

\begin{document}

\begin{abstract}
We study the high-temperature mean-field limit of grand-canonical bosonic
Gibbs states on the torus with renormalized nonlocal three-body interactions.  In dimensions two and three, we
construct the limiting nonlinear classical Gibbs measure and prove
convergence of the relative free energy and of the reduced density matrices
of every fixed order; in three dimensions, a smallness condition on the
interaction is imposed.  The proof combines the density-channel
representation of the interaction with a coherent-state variational method
based on the upper-symbol representation of the free Gibbs state.
The same framework also contains, as a special case, the
homogeneous positive-type model studied by Lewin, Nam, and Rougerie \cite{LNR21}.
\end{abstract}

\subjclass[2020]{81V70 (Primary) 35Q55, 35Q40, 81P16 (Secondary)}

\keywords{Many-body quantum Gibbs states, Three-body interaction, Nonlinear Gibbs measures, Gibbs variational principle, Mean-field limit}

\maketitle

\tableofcontents

\section{Introduction}

A nonlinear Gibbs measure is formally written as 
\[
    \dd\mu(u)
    =
    z^{-1}e^{-\cE(u)}\dd\mu_0(u),
\]
where $\mu_0$ is a Gaussian reference measure, $\cE$ is a nonlinear
energy, and $z$ is the corresponding partition function.  In the
infinite-dimensional setting this expression is only symbolic: a typical
field sampled from $\mu_0$ is too singular for the nonlinear energy to be
defined pointwise, and both the interaction and the Gibbs weight generally
require a suitable renormalization.  

Nonlinear Gibbs measures arise naturally in constructive quantum field
theory, see, for example \cite{nelson1973construction,nelson2007probability,guerra1975p,simon2015p}. 
In suitable settings, nonlinear Gibbs measures are invariant under the
flows of nonlinear Schrödinger equations
\cite{bourgain1994periodic,bourgain1996invariant,bourgain1997invariant,
bourgain2000invariant}.  They have also been used to construct global
solutions with rough random initial data; see, for example,
\cite{burq2008randomA,burq2008randomB,bourgain2014almostA,
bourgain2014almostB,bringmann2024invariant,deng2012two,
deng2021invariant,deng2024invariant}.

In this paper, we study the emergence of nonlinear classical Gibbs measures
from grand-canonical bosonic quantum Gibbs states.
More precisely, let
\[
    \Gamma_\lambda
    =
    Z_\lambda^{-1}e^{-\bH_\lambda^{\ren}}
\]
denote the interacting quantum Gibbs state.  Here
$\lambda\downarrow0$ parametrizes the high-temperature mean-field regime;
see \cite{frohlich2020path} for a discussion of this scaling and related
asymptotic regimes.  The renormalized many-body Hamiltonian
$\bH_\lambda^{\ren}$ is defined in
Section~\ref{sec:main-results}. The corresponding free Gibbs state and
partition function are denoted by $\Gamma_0$ and $Z_0$, respectively.
On the classical side, the limiting nonlinear Gibbs measure has the form
\[
    \dd\mu(u)
    =
    z^{-1}e^{-\cD_0[u]}\dd\mu_0(u),
\]
where $\mu_0$ is the Gaussian free field, $\cD_0$ is the renormalized
classical interaction, and $z$ is the classical partition function.  The
derivation problem is to prove
\[
    -\log\frac{Z_\lambda}{Z_0}
    \longrightarrow
    -\log z
\]
and, for every fixed $k\geq1$,
\[
    k!\lambda^k\Gamma_\lambda^{(k)}
    \longrightarrow
    \gamma_\mu^{(k)},
    \qquad
    \gamma_\mu^{(k)}
    =
    \int
    \ket{u^{\otimes k}}\bra{u^{\otimes k}}
    \dd\mu(u).
\]
Thus the relative quantum free energy converges to the classical free
energy, while the fixed-order reduced density matrices converge to the correlation
operators of the nonlinear Gibbs measure.  The precise statement, including
the topology of convergence and the assumptions in dimensions two and
three, is given in Theorem~\ref{thm:quantum-classical-main}.

This question belongs to a broader program initiated in
\cite{lewin2015derivation} and subsequently developed in a variety of
settings
\cite{frohlich2017gibbs,lewin2018gibbs,frohlich2019microscopic,LNR21,
frohlich2022mean,sohinger2022microscopic,frohlich2024euclidean,
nam2025phi,nam2026derivation,lu2026derivation,caraci2026euclidean,
jougla2026varphi24,farhatPerturbativeMicroscopicDerivation2026,NYZ26}. Several complementary approaches have been developed, including
variational, perturbative, functional-integral, and stochastic methods.

Among these works, \cite{LNR21} is particularly relevant to the present
paper. It treats defocusing two-body interactions in dimensions two
and three and, in the homogeneous torus setting, assumes positivity and
suitable summability of the Fourier coefficients of the interaction.  We
observe that these assumptions reveal a translated density-channel
structure: the two-body interaction becomes a quadratic functional of such
channels, and its Fourier-space renormalization is equivalent to centering
the same channels in position space.  This reformulation is the motivation
for the general channel framework and its cubic extension developed in the
next subsection.

\subsection{Motivation and the channel formulation}

Let $d\in\{2,3\}$ and
$\bT^d=(\R/2\pi\Z)^d$.  A real-valued translation-invariant two-body
interaction $w$ on $\bT^d$ is said to be of positive type if its Fourier
coefficients satisfy
\[
    \widehat w(k)\geq0,
    \qquad
    \widehat w(-k)=\widehat w(k),
    \qquad
    k\in\Z^d,
\]
where
\[
    w(x)
    =
    \frac1{(2\pi)^d}
    \sum_{k\in\Z^d}
    \widehat w(k)e^{\ii k\cdot x}.
\]
In the torus setting of \cite[(3.6)]{LNR21}, the stronger weighted
summability condition
\[
    \sum_{k\in\Z^d}
    (1+|k|^2)\widehat w(k)<\infty
\]
is imposed.  Positivity of $\widehat w$ is the natural defocusing
assumption for a quadratic density interaction.  It also reveals a useful
position-space structure which motivates the present work.

Let
\[
    e_k(x)=(2\pi)^{-d/2}e^{\ii k\cdot x},
    \qquad k\in\Z^d,
\]
and, for every $k\in\Z^d$, define the complex-valued mode channel
\begin{equation}\label{eq:intro-mode-channels}
    J_k=(2\pi)^{-d/2}\widehat w(k)^{1/2}e_k.
\end{equation}
Writing
\[
    (\tau_r f)(x)=f(x-r),
\]
a direct calculation gives
\begin{align}
    \sum_{k\in\Z^d}
    \int_{\bT^d}
    J_k(x-r)\overline{J_k(y-r)}\dd r
    =
    \frac1{(2\pi)^d}
    \sum_{k\in\Z^d}
    \widehat w(k)e^{\ii k\cdot(x-y)}
    =
    w(x-y).
\end{align}
Moreover,
\begin{equation}\label{eq:intro-mode-channel-norm}
    \sum_{k\in\Z^d}
    \norm{J_k}_{L^\infty}^2
    =
    (2\pi)^{-2d}
    \sum_{k\in\Z^d}\widehat w(k).
\end{equation}
Thus, for positive-type interactions, absolute summability of the Fourier
coefficients is equivalent to square-summability of the $L^\infty$-norms
of the associated mode channels.

This observation suggests replacing the Fourier-mode family by a general
measurable family of channels.  Let $(\Omega,\nu)$ be a measure space and
let $J_\omega\in L^\infty(\bT^d)$ be complex-valued.  The quadratic
channel kernel is
\begin{equation}\label{eq:intro-general-quadratic-channel}
    w(x-y)
    =
    \int_\Omega\dd\nu(\omega)
    \int_{\bT^d}
    J_\omega(x-r)\overline{J_\omega(y-r)}\dd r,
    \qquad
    \int_\Omega
    \norm{J_\omega}_{L^\infty}^2\dd\nu(\omega)<\infty.
\end{equation}
The mode decomposition \eqref{eq:intro-mode-channels} is a particular case
of \eqref{eq:intro-general-quadratic-channel}.  In particular, the quadratic
channel framework contains the homogeneous positive-type torus interactions
of \cite{LNR21}; for results on the relative free energy and fixed-order reduced density matrices
considered here, only the unweighted summability appearing in
\eqref{eq:intro-mode-channel-norm} is needed.

The channel formulation also identifies the position-space centering used
in this paper with the Fourier-space renormalization of \cite{LNR21}.  To
state this equivalence, let $\cM_0^A$ denote the centered quadratic
Gaussian field associated with a one-body operator $A$, as defined
precisely in Section~\ref{sec:classical-model}.  Since
\[
    \tau_rJ_k:=J_k(\cdot-r)
    =
    \widehat w(k)^{1/2}\overline{e_k(r)}\,e_k,
\]
linearity of the centering and Parseval's identity yield
\begin{equation}\label{eq:intro-classical-renormalization-equivalence}
    \frac12
    \sum_{k\in\Z^d}
    \int_{\bT^d}
    \left|
    \cM_0^{\tau_rJ_k}[u]
    \right|^2\dd r
    =
    \frac12
    \sum_{k\in\Z^d}
    \widehat w(k)
    \left|
    \cM_0^{e_k}[u]
    \right|^2.
\end{equation}
The corresponding quantum identity is obtained in the same way.  With
$\rho_k=\dG(e_k)$ denoting the density Fourier modes and $\gamma_0$
the one-body density matrix of the free quantum Gibbs state, both defined in
Section~\ref{sec:main-results}, one has
\begin{align}
    \frac12
    \sum_{k\in\Z^d}
    \int_{\bT^d}
    \left|
    \lambda\dG(\tau_rJ_k)
    -
    \lambda\tr_{\gH}\big((\tau_rJ_k)\gamma_0\big)
    \right|^2\dd r
    =
    \frac12
    \sum_{k\in\Z^d}
    \widehat w(k)
    \left|
    \lambda\rho_k
    -
    \lambda\langle\rho_k\rangle_0
    \right|^2.
    \label{eq:intro-quantum-renormalization-equivalence}
\end{align}
Here $|X|^2=X^\dagger X$.  The right-hand sides of
\eqref{eq:intro-classical-renormalization-equivalence} and
\eqref{eq:intro-quantum-renormalization-equivalence} are the Fourier-space
renormalizations used in the homogeneous setting of \cite{LNR21}, whereas
the left sides perform the same operation directly on translated density
channels. Thus the two renormalizations are equivalent under the Fourier transform.

Once a positive two-body interaction is written as a quadratic functional
of translated density channels, a higher-order extension becomes natural.
The cubic analogue is generated by real-valued nonnegative channels through
\[
    v(x-y,x-z)
    =
    \int_\Omega\dd\nu(\omega)
    \int_{\bT^d}
    J_\omega(x-r)J_\omega(y-r)J_\omega(z-r)\dd r.
\]
The integration in $r$ preserves translation invariance, while
nonnegativity of $J_\omega$ is the defocusing condition appropriate to an
odd density power.  The precise assumptions on the channels, the associated
renormalized Hamiltonian, and the limiting classical Gibbs measure are
introduced in Section~\ref{sec:main-results}.

The main results of the paper concern this cubic model.  We prove the 
convergence of the relative free energy and the Hilbert--Schmidt
convergence of reduced density matrices of every fixed order; see
Theorem~\ref{thm:quantum-classical-main}. 

\subsection{Organization of the paper}

The remainder of the paper is organized as follows.  Section~\ref{sec:main-results}
introduces the many-body framework, defines the quantum and classical models, and
states the main convergence theorem.  Section~\ref{sec:strategy} outlines the
coherent-state variational argument, first in the simpler quadratic channel
setting and then for the three-body interaction considered in the main result.

Sections~\ref{sec:classical-model} and
\ref{sec:convergence-classical-partition} construct the nonlinear classical
Gibbs measure and establish the convergence of the associated partition
functions.  Section~\ref{sec:quantum-model} develops the corresponding quantum
Gibbs states and the estimates needed for the semiclassical comparison.
Finally, Sections~\ref{sec:Q-C-free-energy} and \ref{sec:Q-C-density} prove,
respectively, the convergence of the relative free energies and of the fixed-order
reduced density matrices.  

\medskip

\subsection*{Acknowledgments} The author is grateful to Zhenfu Wang for many helpful discussions, and to Phan Th\`anh Nam for valuable comments on the manuscript and for sharing related work. This work was partially supported by the National Key R\&D Program of China (Project No.~2024YFA1015500), and the author was partially supported by the NSFC (Grant Nos.~12595282 and 12171009).

\bigskip

\section{Setting and main results}\label{sec:main-results}

Throughout the paper, the notation \(A\lesssim_p B\) means that
\(A\leq C_p B\) for some constant \(C_p>0\) depending only on \(p\).
Unless otherwise specified, \(\norm{\cdot}\) denotes the operator norm.
When no confusion can arise, a function \(f\in L^\infty\) is identified
with the corresponding multiplication operator.  For a Hilbert space
\(\gH\), we denote by \(\gS^1(\gH)\) and \(\gS^2(\gH)\) the spaces of
trace-class and Hilbert--Schmidt operators on \(\gH\), respectively, and
we omit the underlying Hilbert space from the notation when it is clear
from the context.

\subsection{Many-body framework}

We recall the grand-canonical bosonic framework used throughout the paper.
Our conventions follow \cite[Section~2.1]{NZZ26}. Let $d\in\{2,3\}$ and 
\[
    \gH=L^2(\bT^d),
    \qquad
    \bT^d=(\R/2\pi\Z)^d,
\]
and let
\[
    \gF
    =
    \bigoplus_{n=0}^{\infty}\gH^{\otimes_s n},
    \qquad
    \gH^{\otimes_s0}=\CC .
\]
Here $\otimes_s$ denotes the symmetric tensor product.  For a one-body operator $A$ on $\gH$, its additive
second quantization is
\[
    \dG(A)
    =
    0\oplus
    \bigoplus_{n\geq1}
    \sum_{j=1}^n A_j,
\]
where $A_j$ acts as $A$ on the $j$-th variable and as the identity
on the other variables. We also use the multiplicative
second quantization
\[
    \Gamma(B)
    :=
    1\oplus
    \bigoplus_{n\geq1}
    B^{\otimes n}\big|_{\gH^{\otimes_s n}}
\]
on its natural domain.  Thus
$\Gamma(B_1)\Gamma(B_2)=\Gamma(B_1B_2)$ whenever the products are
defined, and $\Gamma(e^B)=e^{\dG(B)}$ for bounded self-adjoint $B$.
In particular,
\[
    \cN=\dG(\1)
\]
is the number operator.

For $f\in L^2(\bT^d)$, the annihilation and creation operators
$a(f)$ and $a^\dagger(f)$ are defined on the finite-particle
subspace by
\[
\begin{aligned}
    (a(f)\psi)(x_1,\ldots,x_{n-1})
    &=
    \sqrt n
    \int_{\bT^d}
    \overline{f(x)}
    \psi(x_1,\ldots,x_{n-1},x)\dd x,
    \\
    (a^\dagger(f)\psi)(x_1,\ldots,x_{n+1})
    &=
    \frac1{\sqrt{n+1}}
    \sum_{j=1}^{n+1}
    f(x_j)
    \psi(x_1,\ldots,x_{j-1},x_{j+1},\ldots,x_{n+1}).
\end{aligned}
\]
Equivalently, in the distributional notation,
\[
    a(f)=\int_{\bT^d}\overline{f(x)}a_x\dd x,
    \qquad
    a^\dagger(f)=\int_{\bT^d}f(x)a_x^\dagger\dd x .
\]
They satisfy the canonical commutation relations
\[
    [a(f),a(g)]=0,\qquad
    [a^\dagger(f),a^\dagger(g)]=0,\qquad
    [a(f),a^\dagger(g)]=\inprod{f}{g}.
\]

We use the normalized Fourier basis
\[
    e_k(x)=(2\pi)^{-d/2}e^{\ii k\cdot x},
    \qquad
    k\in\Z^d.
\]
We write
\[
    a_k=a(e_k),
    \qquad
    a_k^\dagger=a^\dagger(e_k),
\]
so that
\[
    [a_k,a_\ell]=0,
    \qquad
    [a_k^\dagger,a_\ell^\dagger]=0,
    \qquad
    [a_k,a_\ell^\dagger]=\delta_{k,\ell}.
\]

A state on $\gF$ is a positive trace-class operator $\Gamma$ with
$\tr_{\gF}\Gamma=1$.  Writing
$\Gamma=\bigoplus_{n\geq0}\Gamma_n$ with respect to the particle-number
decomposition, we use the fixed-order reduced density matrices normalized by
\[
    \Gamma^{(k)}
    =
    \sum_{n\geq k}
    \binom nk
    \tr_{k+1\to n}\Gamma_n,
    \qquad
    k\geq1.
\]
Equivalently, for every bounded self-adjoint operator $A_k$ on
$\gH^{\otimes_s k}$,
\[
    \tr_{\gH^{\otimes_s k}}(A_k\Gamma^{(k)})
    =
    \tr_{\gF}(\mathbb A_k\Gamma),
\]
where $\mathbb A_k$ is the usual second-quantized lift of $A_k$ defined by
\[
    \mathbb{A}_{k}:=0\oplus\cdots\oplus\bigoplus_{n=k}^{\infty}\Big( \sum_{1\leq i_{1}<\cdots<i_{k}\leq n}(A_{k})_{i_{1},..., i_{k}} \Big)  .
\]
In particular,
\[
    \tr_{\gH}(\Gamma^{(1)})
    =
    \tr_{\gF}(\cN\Gamma).
\]
Define the one-body operator
\[
    h=-\Delta+1,\qquad \dG(h)=\sum_{k\in\Z^d}h(k)a_k^{\dagger}a_k,
\]
where $h(k)=|k|^2+1$. Throughout the paper,
\[
    x_-:=\max\{-x,0\},
    \qquad x\in\R,
\]
denotes the negative part.

\subsection{Free Gibbs states}

For $\lambda>0$, the free Gibbs state is
\begin{equation}\label{eq:def-free-gibbs-state}
    \Gamma_0=Z_0^{-1}e^{-\lambda\dG(h)},\qquad
    Z_0=\tr_{\gF}(e^{-\lambda\dG(h)}).
\end{equation}
Its one-body reduced density matrix is
\[
    \gamma_0=\frac{1}{e^{\lambda h}-1}.
\]

We use
\[
    N_0=N_0(\lambda):=\tr_{\gF}(\cN \Gamma_0)=\tr_{\gH}(\gamma_0)
    =
    \sum_{p\in\Z^d}\frac1{e^{\lambda h(p)}-1}
\]
to denote the expected particle number of the free Gibbs state. 

\subsection{Renormalized three-body interaction and Gibbs states}

For a translation-invariant three-body interaction, the unrenormalized
many-body Hamiltonian has the form
\[
    \bH_\lambda
    =
    \lambda\dG(h)+\bW_\lambda,
\]
where
\[
\begin{aligned}
    \bW_\lambda
    &=
    \frac{\lambda^3}{6}
    \int_{\bT^{3d}}
    v(x-y,x-z)
    a_x^\dagger a_y^\dagger a_z^\dagger
    a_z a_y a_x
    \dd x\dd y\dd z .
\end{aligned}
\]
Equivalently, using the Fourier convention fixed above, one may write
\[
    \bW_\lambda
    =
    \frac{\lambda^3}{6(2\pi)^{2d}}
    \sum_{k,\ell\in\Z^d}\widehat v(k,\ell)
    \sum_{p,q,s\in\Z^d}
    a_{p+k+\ell}^\dagger
    a_{q-k}^\dagger
    a_{s-\ell}^\dagger
    a_s a_q a_p .
\]

Let $(\Omega,\nu)$ be a $\sigma$-finite measure
space.  For $\nu$-a.e. $\omega$, let $J_\omega$ be a function on $\bT^d$. For $r\in\bT^d$, we write
\[
    (\tau_rJ_\omega)(x)=J_\omega(x-r).
\]
We consider three-body potentials generated by translated density channels;
namely, $v$ is assumed to have the form
\begin{equation}\label{eq:def-int-v}
    v(x-y,x-z)
    =
    \int_\Omega\dd\nu(\omega)
    \int_{\bT^d}
    J_\omega(x-r)J_\omega(y-r)J_\omega(z-r)\dd r .
\end{equation}
In Fourier variables this gives
\[
    \widehat v(k,\ell)
    =
    \int_\Omega
    \widehat{J_\omega}(k+\ell)
    \widehat{J_\omega}(-k)
    \widehat{J_\omega}(-\ell)
    \dd\nu(\omega).
\]

The quantum renormalization is performed at the level of each translated
density channel.  Since the free state is translation invariant, the number
\[
    m_{\lambda,\omega}
    =
    \lambda\tr_{\gH}\big((\tau_rJ_\omega)\gamma_0\big)
    =
    \frac{\lambda\widehat{J_\omega}(0)}{(2\pi)^d}N_0
\]
is independent of $r$.  We define
\[
    \bW_\lambda^{\ren}
    =
    \frac16
    \int_\Omega\dd\nu(\omega)
    \int_{\bT^d}
    \left[
    \lambda\dG(\tau_rJ_\omega)-m_{\lambda,\omega}
    \right]^3\dd r
\]
and
\begin{equation}\label{eq:def-H-ren}
    \bH_\lambda^{\ren}
    =
    \lambda\dG(h)+\bW_\lambda^{\ren}.
\end{equation}
The interacting Gibbs state is defined by
\[
    \Gamma_\lambda
    =
    Z_\lambda^{-1}e^{-\bH_\lambda^{\ren}},
    \qquad
    Z_\lambda=\tr_{\gF}(e^{-\bH_\lambda^{\ren}}).
\]

\subsection{Classical Gibbs measures}

Let $\mu_0$ be the centered complex Gaussian free field on $\bT^d$
with covariance $h^{-1}$.  For a bounded operator $A$, we denote by
$\cM_0^A$ the centered quadratic field defined by the cutoff limit
\[
    \cM_{0,K}^A[u]
    =
    \inprod{P_Ku}{AP_Ku}
    -
    \mathbb E_{\mu_0}\inprod{P_Ku}{AP_Ku},
    \qquad
    P_K=\mathds1_{\{h\leq K\}}.
\]
In particular, the density channel associated with $\tau_rJ_\omega$
is renormalized by
\[
    \cM_0^{\tau_rJ_\omega}[u]
    =
    \lim_{K\to\infty}
    \cM_{0,K}^{\tau_rJ_\omega}[u],
\]
where the limit is understood in the $L^p(\mu_0)$-sense established
below, see Lemma~\ref{lem:Mass-Lp}.

We define the classical cubic channel interaction by
\[
    \cD_0[u]
    =
    \frac16
    \int_\Omega\dd\nu(\omega)
    \int_{\bT^d}
    \big(
    \cM_0^{\tau_rJ_\omega}[u]
    \big)^3\dd r ,
\]
and the corresponding nonlinear Gibbs measure by
\begin{equation}\label{eq:def-nonlin-gibbs-measure}
    \dd\mu(u)
    =
    z^{-1}e^{-\cD_0[u]}\dd\mu_0(u),
    \qquad
    z=
    \int e^{-\cD_0[u]}\dd\mu_0(u).
\end{equation}
The existence of $\cD_0$, the finiteness of $z$, and the convergence
of the corresponding cutoff measures are proved in
Section~\ref{sec:classical-model}.

For $k\geq1$, the associated classical $k$-particle correlation
operator is denoted by
\begin{equation}\label{eq:classical-correlation-formal}
    \gamma_\mu^{(k)}
    =
    \int
    \ket{u^{\otimes k}}\bra{u^{\otimes k}}
    \dd\mu(u),
\end{equation}
where the integral is understood as the Hilbert--Schmidt limit of its
Fourier cutoffs, see the proof of Proposition~\ref{prop:classical-correlation-operators}.

We will use the Gibbs variational principle, and we recall it here. The quantum Gibbs variational principle gives
\[
    -\log\frac{Z_\lambda}{Z_0}
    =
    \inf_{\substack{\Gamma\geq0\\ \tr_{\gF}\Gamma=1}}
    \left\{
    \cH(\Gamma,\Gamma_0)
    +
    \tr_{\gF}\big(\bW_\lambda^{\ren}\Gamma\big)
    \right\},
\]
where the quantum relative entropy is
\[
    \cH(\Gamma,\Gamma')
    =
    \begin{cases}
    \tr_{\gF}\!\left[
    \Gamma\big(\log\Gamma-\log\Gamma'\big)
    \right],
    & \operatorname{supp}\Gamma\subseteq\operatorname{supp}\Gamma',
    \\[0.3em]
    +\infty,
    & \text{otherwise}.
    \end{cases}
\]
The corresponding classical variational principle is
\[
    -\log z
    =
    \inf_{\nu}
    \left\{
    \cH_{\cl}(\nu,\mu_0)
    +
    \int\cD_0[u]\dd\nu(u)
    \right\},
\]
where the infimum is taken over probability measures and
\[
    \cH_{\cl}(\nu,\mu)
    =
    \begin{cases}
    \displaystyle
    \int
    \log\!\left(\frac{\dd\nu}{\dd\mu}\right)\dd\nu,
    & \nu\ll\mu,
    \\[0.6em]
    +\infty,
    & \text{otherwise}.
    \end{cases}
\]
The minimizers in these two variational principles are the corresponding
quantum Gibbs state and classical Gibbs measure, respectively.

\subsection{Main results}

The following quantity will appear in the smallness condition for the three-dimensional case:

\begin{equation}
    \Theta_{d,\delta}
    =
    \frac{1+\delta}{6}(2\pi)^d
    \int_\Omega\norm{J_\omega}_{L^\infty}^3\dd\nu(\omega).
\end{equation}

\begin{assumption}\label{ass:J}
The measure space $(\Omega,\nu)$ is $\sigma$-finite.  The family
$(J_\omega)_{\omega\in\Omega}$ admits a jointly measurable representative
\[
    J:\Omega\times\bT^d\longrightarrow[0,\infty),
    \qquad
    J(\omega,x)=J_\omega(x),
\]
and the map $\omega\mapsto\norm{J_\omega}_{L^\infty(\bT^d)}$ is
measurable.  For $\nu$-a.e. $\omega$, $J_\omega$ is real-valued,
nonnegative, and belongs to $L^\infty(\bT^d)$.  Moreover,
\begin{equation}\label{eq:J-Linf-assumption}
    \int_\Omega \norm{J_\omega}_{L^\infty(\bT^d)}^3\dd\nu(\omega)<\infty.
\end{equation}
If $d=3$, we assume in addition that the three-body interaction is small
enough in the sense that $\Theta_{3,\delta}<1/43200$ for some $\delta>0$.
\end{assumption}

We can now state the main theorem.

\medskip

\begin{theorem}[Quantum-to-classical limit]\label{thm:quantum-classical-main}
Let $d=2,3$. Let the three body interaction $v$ satisfies \eqref{eq:def-int-v} and assume Assumption~\ref{ass:J}. Consider the Gibbs state $\Gamma_\lambda=Z_\lambda^{-1}e^{-\bH_{\lambda}^{{\rm ren}}}$ associated with the renormalized Hamiltonian $\bH_{\lambda}^{{\rm ren}}$ in \eqref{eq:def-H-ren} with $h=-\Delta+1$. 
Let $\Gamma_0=Z_0^{-1}e^{-\lambda\dG(-\Delta+1)}$ be the free Gibbs state as in \eqref{eq:def-free-gibbs-state}. Let $\mu_0$ be the Gaussian free field with covariance $(-\Delta+1)^{-1}$ and let $\dd\mu=z^{-1}e^{-\cD_0[u]}\dd\mu_0$ be the associated nonlinear Gibbs measure as in \eqref{eq:def-nonlin-gibbs-measure}. Then,
\begin{equation}\label{eq:free-energy-main-limit}
    -\log\frac{Z_\lambda}{Z_0}
    \to
    -\log z=-\log\Big( \int e^{-\cD_0[u]}\dd\mu_0(u) \Big)
    \qquad(\lambda\downarrow0).
\end{equation}
Moreover, for every fixed $k\geq1$,
\begin{equation}\label{eq:density-main-limit}
    k!\lambda^k\Gamma_\lambda^{(k)}
    \to
    \gamma_\mu^{(k)}=\int \ket{u^{\otimes k}}\bra{u^{\otimes k}}\dd\mu (u)
    \qquad\text{in }\gS^2(\gH^{\otimes_s k}).
\end{equation}
\end{theorem}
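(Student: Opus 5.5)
The plan follows the variational strategy of \cite{LNR21}, adapted to the cubic channel structure. The first step is to construct the classical objects. For each channel $\omega$ and translation $r$, Lemma~\ref{lem:Mass-Lp} gives $\cM_0^{\tau_rJ_\omega}\in L^p(\mu_0)$ for all $p$, with bounds depending only on $\norm{J_\omega}_{L^\infty}$. The bound is uniform in $r$ by translation invariance. Integrating against $\nu$ using \eqref{eq:J-Linf-assumption} shows $\cD_0\in L^p(\mu_0)$. The next point is $e^{-\cD_0}\in L^1(\mu_0)$. Because $\cD_0$ is cubic and odd, it is not bounded below, so this is the first place nonnegativity of $J_\omega$ is used. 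Write $\cM_0^{\tau_rJ_\omega}=\langle u,\tau_rJ_\omega u\rangle-E_K$ at cutoff $K$. Since $\langle u,\tau_rJ_\omega u\rangle\geq0$, each cube is bounded below by $-E_K^3$. In $d=2$ the counterterm $E_K\sim\log K$, and a Nelson-type argument applies: the tail $\mu_0(\cD_{0,K}<-C E_K^3)$ is small, and it is combined with the $L^p$ convergence $\cD_{0,K}\to\cD_0$ at rate $K^{-\epsilon}$ to get exponential integrability. In $d=3$ one has $E_K\sim K^{1/2}$, and the same scheme only works when the hypercontractive constant times the interaction strength is small. This is where the condition $\Theta_{3,\delta}<1/43200$ enters, through a Hölder and hypercontractivity estimate for the cubic chaos. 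The same estimates give $z_K\to z$ and convergence of the cutoff measures. They also give the Hilbert--Schmidt convergence of the cutoff correlation operators to $\gamma_\mu^{(k)}$ (Proposition~\ref{prop:classical-correlation-operators}).

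For the free energy I will prove matching bounds. For the upper bound on $-\log(Z_\lambda/Z_0)$, I take as trial state the quantization of the cutoff classical Gibbs measure $\mu_K$: low modes are a coherent-state superposition weighted by $\mu_K$, tensored with the free Gibbs state on high modes $P_K^\perp$. Its relative entropy with respect to $\Gamma_0$ is at most $\cH_\cl(\mu_K,\mu_{0,K})$ plus $o(1)$, by Berezin--Lieb on the upper symbol of the free state. Its interaction energy is computed by Wick expansion of $[\lambda\dG(\tau_rJ_\omega)-m_{\lambda,\omega}]^3$ in coherent states. This gives $\int\cD_{0,K}\dd\mu_K$ plus error terms from commutators and from the high-mode part. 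The error terms vanish as $\lambda\to0$ and then $K\to\infty$, because $\lambda\gamma_0\to h^{-1}$ in a Schatten sense and $\norm{\tau_rJ_\omega}\leq\norm{J_\omega}_{L^\infty}$. For the lower bound, I use the lower symbol (Husimi measure) of $\Gamma_\lambda$ on $P_K\gH$ and the monotonicity of relative entropy, which bounds the entropy from below by the classical entropy of the de Finetti measure.

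The hard part is the interaction term in the lower bound. The quantum cubic $\bW^{\ren}_\lambda$ must be bounded below by its classical cutoff symbol minus controllable errors, and this cannot be done directly because it is unbounded below. I would first prove an operator inequality of the form
\[
\bW^{\ren}_\lambda\geq-\epsilon\lambda\dG(h)-C_\epsilon,
\]
with $C_\epsilon$ uniform in $\lambda$. This would use nonnegativity of $\lambda\dG(\tau_rJ_\omega)$, the bound $x^3\geq -m^3+\ldots$, and a quantum analogue of the Nelson estimate. Such a bound gives a priori control $\tr(\lambda\dG(h)^{1-\epsilon}\Gamma_\lambda)$ and moment bounds on $\lambda^k\Gamma_\lambda^{(k)}$ in Schatten norms. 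With those, I would split $[\lambda\dG(\tau_rJ_\omega)-m]$ into low- and high-mode parts. The low-high cross terms are controlled by the a priori bounds, while the low-low part converges to the classical symbol through the quantitative de Finetti theorem. I expect this step, especially in $d=3$ where the smallness must close the argument, to be the main obstacle.

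Finally, for \eqref{eq:density-main-limit}, I apply the free-energy convergence to perturbed Hamiltonians $\bH^\ren_\lambda+\eta\lambda^k\mathbb A_k$ with bounded finite-rank $A_k$. Differentiating the free energy in $\eta$ at $\eta=0$ and using convexity (the Feynman--Hellmann argument of \cite{LNR21}) yields weak convergence of $k!\lambda^k\Gamma^{(k)}_\lambda$ to $\gamma_\mu^{(k)}$. This is upgraded to $\gS^2$ convergence using uniform Hilbert--Schmidt bounds from the a priori estimate together with convergence of the Hilbert--Schmidt norms, following the argument of \cite{LNR21}.
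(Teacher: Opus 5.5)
Your upper-bound construction is workable: it is a heavier, cutoff version of the paper's coherent trial state. The lower bound, however, rests on the operator inequality $\bW_\lambda^{\ren}\geq-\epsilon\lambda\dG(h)-C_\epsilon$ with $C_\epsilon$ uniform in $\lambda$, and that inequality is false. Test it on the vacuum, where $\dG(\tau_rJ_\omega)$ and $\dG(h)$ both vanish:
\[
    \bra{0}\bW_\lambda^{\ren}\ket{0}
    =
    -\frac{(2\pi)^d}{6}\int_\Omega m_{\lambda,\omega}^3\dd\nu(\omega),
    \qquad
    m_{\lambda,\omega}=\frac{\lambda\widehat{J_\omega}(0)}{(2\pi)^d}N_0 .
\]
By Lemma~\ref{lem:Clambda-trace-bound}, $\lambda N_0\sim\pi\abs{\log\lambda}$ for $d=2$ and $\lambda N_0\sim\pi^{3/2}\zeta(3/2)\lambda^{-1/2}$ for $d=3$, so the left side tends to $-\infty$ unless $J\equiv0$. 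The renormalized cubic is bounded below only on average against states close to $\Gamma_0$. Your a priori bound on $\tr(\lambda\dG(h)^{1-\epsilon}\Gamma_\lambda)$, the Schatten bounds on $\lambda^k\Gamma_\lambda^{(k)}$, the control of the low--high cross terms, and the Hilbert--Schmidt bounds in your last step all come from this inequality, so none of them is available.

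The missing ingredient is the uniform quantum exponential moment $\sup_{0<\lambda\leq1}\tr_{\gF}(e^{-(1+\delta)\bW_\lambda^{\ren}}\Gamma_0)<\infty$ of Proposition~\ref{prop:quantum-exponential}. The paper proves it channel by channel from the quasi-free identity of Lemma~\ref{lem:quasi-free-trace-identity}, a Chernoff bound on the lower tail of $\lambda\dG(\tau_rJ_\omega)-m_{\lambda,\omega}$ (where $J_\omega\geq0$ enters), and Jensen's inequality over $(\omega,r)$ in the joint spectral calculus. With it no de Finetti step is needed. Golden--Thompson and the coherent representation of $\Gamma_0$ give
\[
    \frac{Z_\lambda}{Z_0}
    \leq
    \int\big\langle\xi(u/\sqrt\lambda),e^{-\bW_\lambda^{\ren}}\xi(u/\sqrt\lambda)\big\rangle\dd\mu_\lambda(u),
\]
and coherent-state moment bounds plus H\"older compare the right side with $z_\lambda$. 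Inside your own scheme, the same moment bound together with Golden--Thompson and the Gibbs principle gives the correct replacement for your a priori bound: $\tr_{\gF}(\bW_\lambda^{\ren}\Gamma)\geq-(1+\delta)^{-1}\big[\cH(\Gamma,\Gamma_0)+\log\tr_{\gF}(e^{-(1+\delta)\bW_\lambda^{\ren}}\Gamma_0)\big]$.

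The classical integrability argument in $d=3$ has a related defect. A Nelson argument built on $\cD_{0,K}\geq-CK^{3/2}$ and hypercontractive $L^p$ bounds on $\cD_0-\cD_{0,K}$ (growing like $p^3$ times a small negative power of $K$) yields, after choosing $K\sim R^{2/3}$ and optimizing $p$, only a stretched-exponential tail $\mu_0(\cD_0<-R)\lesssim\exp(-cR^{\gamma})$ with $\gamma<1$. This does not give $e^{-\cD_0}\in L^1(\mu_0)$, and shrinking the coupling merely rescales $R$, so no smallness condition can close it. What is needed is the sharp lower tail of a \emph{positive} quadratic form. Here $\cM_{\lambda,K}^{\tau_rJ_\omega}\stackrel{\mathrm{law}}{=}\sum_j\alpha_j(\abs{g_j}^2-1)$ with $0\leq\alpha_j$ dominated by $\norm{J_\omega}_{L^\infty}$ times the eigenvalues of $h^{-1}$, and the Laplace transform gives $\mu_\lambda\big(-\cM_{\lambda,K}^{\tau_rJ_\omega}\geq R\norm{J_\omega}_{L^\infty}\big)\leq e^{-R^3/43200}$. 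Thus $(\cM_{\lambda,K}^{\tau_rJ_\omega})_-^3$ has a genuinely exponential tail, and $\Theta_{3,\delta}<1/43200$ is exactly the condition that the Jensen-averaged exponent stays below that rate. In $d=2$ your Nelson argument does work, since the $(\log K)^3$ lower bound gives super-exponential tails.

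Finally, Feynman--Hellmann gives only weak convergence, and you give no mechanism for convergence of the Hilbert--Schmidt norms. The paper instead combines $\norm{\Gamma_\lambda-\widetilde\Gamma_\lambda}_{\gS^1}\to0$ (Pinsker plus the matching free-energy bounds) with Lemma~11.4 of \cite{LNR21}. The required bounds $\lambda^\ell\norm{\Gamma_\lambda^{(\ell)}}_{\gS^2}\leq(\tr h^{-2})^{\ell/2}$ come from the pointwise kernel domination $0\leq\Gamma_\lambda^{(\ell)}\leq\Gamma_0^{(\ell)}$ of Proposition~\ref{prop:Ginibre-domination}. That domination is obtained from the Ginibre loop representation and the monotonicity of $\bW_{\lambda,n}^{\ren}$ under adding particles, not from any operator lower bound on $\bW_\lambda^{\ren}$.
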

The two assertions are proved in Sections~\ref{sec:Q-C-free-energy}
and~\ref{sec:Q-C-density}, respectively.

\medskip

\begin{remark}
    Our method also applies to quadratic channel interactions, which contains the
positive-type interactions on the torus considered in \cite{LNR21}. In the quadratic case, positivity removes the need
for exponential estimates and the proof is much shorter, see Section~\ref{sec:strategy}. 
\end{remark}

\begin{remark}[On the local limit]\label{rem:local-Phi6}
Formally replacing the nonlocal three-body kernel by
delta function leads, up to lower-order counterterms, to the
local \(\Phi_d^6\) measure.  In \(d=3\), this is energy-critical, and
the smallness condition in Assumption~\ref{ass:J} is incompatible with the
approximate-identity scaling required for a direct delta function-limit; while in
\(d=2\), no smallness condition is imposed, and the quantitative errors,
which are controlled by powers of \(\lambda^2N_0\) and terms vanishing with
\(\lambda\) (see Section~\ref{sec:Q-C-free-energy} and \ref{sec:Q-C-density}), leave room for a joint limit
\(\varepsilon=\varepsilon(\lambda)\downarrow0\).  We do not pursue this
problem here. For related results, we refer to the work of Nam, Yang and Zhu \cite{NYZ26}, concerning $P(\Phi)_2$ measure arising from bosonic Gibbs states with general $p$-body interactions in dimension two, and the references therein.
\end{remark}

\medskip

Combining the pointwise kernel domination in
Proposition~\ref{prop:Ginibre-domination} with the Hilbert--Schmidt
convergence above, one may further upgrade the convergence of the
fixed-order reduced density matrices to convergence of their integral kernels, which yields the same range of integral-kernel $L^p$-convergence as in
\cite{frohlich2022mean}.  More precisely, we have

\begin{corollary}\label{co:Lr-kernal-convergence}
  Let the integral kernels be
viewed as functions on \((\bT^d)^{2k}\), then
\begin{equation}\label{eq:kernal-Lr-convergence}
    k!\lambda^k
    \Gamma_\lambda^{(k)}(\,\cdot\,;\,\cdot\,)
    \longrightarrow
    \gamma_\mu^{(k)}(\,\cdot\,;\,\cdot\,)
    \quad\text{in }L^r((\bT^d)^{2dk})
\end{equation}
for \(1\leq r<\infty\) when \(d=2\), and for \(1\leq r<3\) when \(d=3\).
\end{corollary}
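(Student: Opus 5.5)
The corollary is proved by interpolating between the Hilbert--Schmidt convergence \eqref{eq:density-main-limit} of Theorem~\ref{thm:quantum-classical-main} and a uniform higher-integrability bound on the kernels, which will come from the pointwise domination in Proposition~\ref{prop:Ginibre-domination}. Set
\[
F_\lambda:=k!\lambda^k\Gamma_\lambda^{(k)}(\cdot;\cdot)-\gamma_\mu^{(k)}(\cdot;\cdot).
\]
The Hilbert--Schmidt norm of an operator on $L^2((\bT^d)^k)$ equals the $L^2((\bT^d)^{2k})$ norm of its kernel. Hence \eqref{eq:density-main-limit} gives $\norm{F_\lambda}_{L^2}\to0$ directly. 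Since the torus has finite measure, Hölder's inequality gives $\norm{F_\lambda}_{L^r}\lesssim\norm{F_\lambda}_{L^2}\to0$ for every $1\le r\le2$, so only $r>2$ needs work.

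For $r>2$, I will use the interpolation inequality
\[
\norm{F}_{L^r}\le\norm{F}_{L^2}^{\theta}\norm{F}_{L^s}^{1-\theta},\qquad \frac1r=\frac\theta2+\frac{1-\theta}{s},
\]
with any $s\in(r,\infty)$ when $d=2$, and any $s\in(r,3)$ when $d=3$. It then suffices to show
\[
\sup_{\lambda\le\lambda_0}\norm{k!\lambda^k\Gamma_\lambda^{(k)}}_{L^s}<\infty
\qquad\text{and}\qquad
\norm{\gamma_\mu^{(k)}}_{L^s}<\infty .
\]

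For the uniform bound, I take Proposition~\ref{prop:Ginibre-domination} to say that the kernels are dominated pointwise by Gaussian (Ginibre-type) ones: $|k!\lambda^k\Gamma_\lambda^{(k)}(X;Y)|$ is bounded by a constant times a permanent-type sum $\sum_{\sigma\in S_k}\prod_j G_\lambda(x_j-y_{\sigma(j)})$. Here $G_\lambda$ is the kernel of $\lambda\gamma_0=\lambda(e^{\lambda h}-1)^{-1}$, or of a comparable operator such as $h^{-1}$. The key one-body input is then a bound on $\norm{G_\lambda}_{L^s(\bT^d)}$ that is uniform in $\lambda$. This holds because $\lambda(e^{\lambda h}-1)^{-1}\le h^{-1}$ in Fourier space, and its kernel behaves like the Green function of $h$. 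That Green function is $\sim\log|x|$ when $d=2$, which lies in every $L^s$, and $\sim|x|^{-1}$ when $d=3$, which lies in $L^s$ exactly for $s<3$. This is the source of the restriction $r<3$ in three dimensions.

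To get pointwise bounds uniform in $\lambda$, I will write the kernel as the positive Fourier series $(2\pi)^{-d}\sum_p \lambda(e^{\lambda h(p)}-1)^{-1}e^{\ii p\cdot x}$ and compare it with the heat kernel representation $\sum_{n\ge1}\lambda e^{-n\lambda h}$. This Riemann sum approximates $\int_0^\infty e^{-th}\dd t$, and each heat kernel is positive with Gaussian bounds. Once $G_\lambda$ is controlled, each product $\prod_j G_\lambda(x_j-y_{\sigma(j)})$ is a product of functions of distinct pairs of variables, so its $L^s((\bT^d)^{2k})$ norm factorizes as $\norm{G_\lambda}_{L^s}^k$.

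For the classical side, I will obtain $\gamma_\mu^{(k)}$ as the limit of the quantum kernels. Along a subsequence the kernels converge almost everywhere, because they converge in $L^2$, so Fatou's lemma transfers the uniform $L^s$ bound to the limit. Alternatively, one can apply the analogous Gaussian domination at the classical level. The main obstacle I expect is the uniform-in-$\lambda$ pointwise estimate on $G_\lambda$ near the diagonal, which has to capture the critical $|x|^{-1}$ singularity when $d=3$. If Proposition~\ref{prop:Ginibre-domination} already provides domination by the kernel of $h^{-1}$, this step reduces to the known integrability of the Green function.
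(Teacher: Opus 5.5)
Your proposal is correct and follows essentially the same route as the paper. Proposition~\ref{prop:Ginibre-domination} gives the pointwise bound $0\le k!\lambda^k\Gamma_\lambda^{(k)}\le k!\lambda^k\Gamma_0^{(k)}=\sum_{\pi}\prod_j\mathsf C_\lambda(x_j,y_{\pi(j)})$, the $L^s$ norm then factorizes into $\norm{\mathsf C_\lambda}_{L^s}^k$ with uniform one-body bounds in the stated ranges of $s$, and one concludes by H\"older on the finite torus for $r\le2$ and interpolation with the Hilbert--Schmidt convergence for $r>2$. Your Fatou argument for $\gamma_\mu^{(k)}\in L^s$ fills in a step the paper leaves implicit, and the uniform one-body bound you flag as the main obstacle follows more quickly from Hausdorff--Young, $\norm{\mathsf C_\lambda}_{L^s}\lesssim\norm{(c_\lambda(p))_p}_{\ell^{s'}}\le\norm{(h(p)^{-1})_p}_{\ell^{s'}}<\infty$ for $s\ge2$, $s'>d/2$, than from a Riemann-sum comparison of heat kernels, which needs extra care near the diagonal because $t\mapsto G_t(x,y)$ is not monotone.
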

The proof is given in Section~\ref{sec:Q-C-density}.

\bigskip

\section{Strategy of the proof}\label{sec:strategy}

We explain the main idea of the proof before entering the technical
estimates.  The argument is variational, as in \cite{LNR21}, and relies on
the Gibbs variational principle.

The main difference is that we do not
extract a classical measure from the interacting Gibbs state through lower
symbols and a quantitative de Finetti theorem.  Instead, we use the
coherent-state representation of the free Gibbs state,
\[
    \Gamma_0
    =
    \int
    \left|\xi(u/\sqrt\lambda)\right\rangle
    \left\langle\xi(u/\sqrt\lambda)\right|
    \dd\mu_\lambda(u),
\]
where $\mu_\lambda$ is the Gaussian measure with covariance
$\mathsf C_\lambda=\lambda(e^{\lambda h}-1)^{-1}$, see Lemma~\ref{lem:free-coherent-representation}.  The measure
$\mu_\lambda$ is a regularized version of $\mu_0$, and the quantum
interaction is compared directly with its classical coherent-state symbol.

We first illustrate the method in the simpler quadratic channel setting.
In this case the channels $J_\omega$ may be complex-valued. The corresponding positive-type two-body interaction is
\[
    w(x-y)
    =
    \int_\Omega\dd\nu(\omega)
    \int_{\bT^d}
    J_\omega(x-r)\overline{J_\omega(y-r)}\dd r .
\]
We define the renormalized quantum interaction by
\[
    \bW_{\lambda,\mathrm{quad}}^{\ren}
    =
    \frac12
    \int_\Omega\dd\nu(\omega)
    \int_{\bT^d}
    \left|
    \lambda\dG(\tau_rJ_\omega)
    -
    \lambda\tr_{\gH}\big((\tau_rJ_\omega)\gamma_0\big)
    \right|^2
    \dd r,
\]
where $|X|^2=X^*X$.  Its classical counterpart is
\[
    \cD_{\lambda,\mathrm{quad}}[u]
    =
    \frac12
    \int_\Omega\dd\nu(\omega)
    \int_{\bT^d}
    \left|
    \cM_\lambda^{\tau_rJ_\omega}[u]
    \right|^2
    \dd r ,
\]
where $\cM_{\lambda}^{\tau_rJ_\omega}$ is defined in
Lemma~\ref{lem:Mass-Lp}; see also
Remarks~\ref{rem:M-lambda-positive} and
\ref{rem:Mass-L2-complex-channels}.

Let
\[
    Z_{\lambda,\mathrm{quad}}
    =
    \tr_{\gF}
    \Big(e^{-\lambda\dG(h)-\bW_{\lambda,\mathrm{quad}}^{\ren}}\Big),
    \qquad
    z_{\lambda,\mathrm{quad}}
    =
    \int e^{-\cD_{\lambda,\mathrm{quad}}[u]}\dd\mu_\lambda(u),
\]
and set
\[
    z_{\mathrm{quad}}
    :=
    \int e^{-\cD_{0,\mathrm{quad}}[u]}\dd\mu_0(u).
\]
The corresponding quantum state and limiting classical measure are
\[
\begin{aligned}
    \Gamma_{\lambda,\mathrm{quad}}
    =
    Z_{\lambda,\mathrm{quad}}^{-1}
    e^{-\lambda\dG(h)-\bW_{\lambda,\mathrm{quad}}^{\ren}},\qquad
    \dd\mu_{\mathrm{quad}}(u)
    =
    z_{\mathrm{quad}}^{-1}
    e^{-\cD_{0,\mathrm{quad}}[u]}\dd\mu_0(u).
\end{aligned}
\]

Under the sole assumption
\[
    \int_\Omega\norm{J_\omega}_{L^\infty}^2\dd\nu(\omega)<\infty,
\]
the quadratic model satisfies the analogues of the two conclusions in
Theorem~\ref{thm:quantum-classical-main}.

\begin{remark}
This is the only assumption imposed on the quadratic channel interaction,
both in $d=2$ and in $d=3$; in particular, no smallness condition is
required.  As discussed in the Introduction, this contains the
positive-type interactions on the torus considered in \cite{LNR21}.
\end{remark}

\medskip

\begin{proof}[\textbf{Sketch of the proof}] 
Since this interaction is nonnegative, the convergence
\[z_{\lambda,\mathrm{quad}}\to z_{\mathrm{quad}}\] follows by the same
$L^1$-argument as Proposition~\ref{prop:tilted-density-common}, with no
exponential-integrability estimate.

\noindent\textbf{Relative free-energy lower bound:}
The lower bound for the relative free energy follows directly from
Golden--Thompson and the coherent-state representation.  Indeed,
\begin{equation}\label{eq:quad-upper-channel}
\begin{aligned}
    &\quad\frac{Z_{\lambda,\mathrm{quad}}}{Z_0}
    -
    z_{\mathrm{quad}}
    \leq
    \tr_{\gF}
    \big(e^{-\bW_{\lambda,\mathrm{quad}}^{\ren}}\Gamma_0\big)
    -
    z_{\lambda,\mathrm{quad}}
    +
    z_{\lambda,\mathrm{quad}}
    -
    z_{\mathrm{quad}}
    \\
    &=
    \int
    \left[
    \left\langle
    \xi(u/\sqrt\lambda),
    e^{-\bW_{\lambda,\mathrm{quad}}^{\ren}}
    \xi(u/\sqrt\lambda)
    \right\rangle
    -
    e^{-\cD_{\lambda,\mathrm{quad}}[u]}
    \right]\dd\mu_\lambda(u)
    +
    o(1)\\
    &\leq 
    \int
    \left\langle
    \xi(u/\sqrt\lambda),\Big|
    \bW_{\lambda,\mathrm{quad}}^{\ren}
    -\cD_{\lambda,\mathrm{quad}}[u]\Big\vert
    \xi(u/\sqrt\lambda)\right\rangle
    \dd\mu_\lambda(u)
    +
    o(1)\\
    &\leq
    \iiint\dd\nu\dd r\dd\mu_\lambda
    \left\langle
\xi(u/\sqrt\lambda),
\left|
\left|
\lambda\dG(\tau_rJ_\omega)
-
m_{\lambda,\omega}
\right|^2
-
\left|
\cM_\lambda^{\tau_rJ_\omega}[u]
\right|^2
\right|
\xi(u/\sqrt\lambda)
\right\rangle
+o(1)
\end{aligned}
\end{equation}
since both $\bW_{\lambda,\mathrm{quad}}^{\ren}$ and
$\cD_{\lambda,\mathrm{quad}}$ are nonnegative, and the map $x\mapsto e^{-x}$
is Lipschitz on $[0,\infty)$ with Lipschitz constant one. 
We next estimate the right-hand side of \eqref{eq:quad-upper-channel}. The coherent-state identities give 
\begin{equation}\label{eq:quad-indentity-1}
    \left\langle
    \xi(u/\sqrt\lambda),\big|
    \dG(\tau_r J_\omega)
    \big\vert^2
    \xi(u/\sqrt\lambda)\right\rangle
    =
    \lambda^{-2}\big|\inprod{u}{\tau_r J_\omega u}\big\vert^2+\lambda^{-1}\inprod{u}{|\tau_r J_\omega\vert^2 u},
\end{equation}
\begin{equation*}
    \left\langle
    \xi(u/\sqrt\lambda),
    \dG(\tau_r J_\omega)
    \xi(u/\sqrt\lambda)\right\rangle
    =\lambda^{-1}\inprod{u}{\tau_r J_\omega u},
\end{equation*}
which imply 
\begin{equation}\label{eq:quad-channel-identity}
   \begin{aligned}
\left\langle
\xi(u/\sqrt\lambda),
\left|
\lambda\dG(\tau_rJ_\omega)
-
m_{\lambda,\omega}
\right|^2
\xi(u/\sqrt\lambda)
\right\rangle
=\big|\cM_\lambda^{\tau_r J_\omega}[u]\big|^2+\lambda \inprod{u}{\left|\tau_r J_\omega\right|^2u}.
   \end{aligned}
\end{equation}
For each channel, the scalar inequality 
$\big||z|^2-|a|^2\big|\leq2|a||z-a|+|z-a|^2$ and \eqref{eq:quad-channel-identity} give
\[
\begin{aligned}
&
\left\langle
\xi(u/\sqrt\lambda),
\Big|
\left|
\lambda\dG(\tau_rJ_\omega)
-
m_{\lambda,\omega}
\right|^2
-
\big|
\cM_\lambda^{\tau_rJ_\omega}[u]
\big|^2
\Big|
\xi(u/\sqrt\lambda)
\right\rangle
\\
&\leq
2\big|\cM_\lambda^{\tau_rJ_\omega}[u]\big|
\left\langle
\xi(u/\sqrt\lambda),
\left|
\lambda\dG(\tau_rJ_\omega)
-
\inprod{u}{\tau_r J_\omega u}
\right|
\xi(u/\sqrt\lambda)
\right\rangle\\
&+
\left\langle
\xi(u/\sqrt\lambda),
\left|
\lambda\dG(\tau_rJ_\omega)
-
\inprod{u}{\tau_r J_\omega u}
\right|^2
\xi(u/\sqrt\lambda)
\right\rangle\\
&\leq
2\big|\cM_\lambda^{\tau_rJ_\omega}[u]\big|
\left(\lambda\inprod{u}{|\tau_rJ_\omega|^2u}\right)^{1/2}
+
\lambda\inprod{u}{|\tau_rJ_\omega|^2u}.
\end{aligned}
\]
The last inequality follows from the spectral theorem, the
Cauchy--Schwarz inequality, and \eqref{eq:quad-indentity-1}.

After integration in $u,r,\omega$, Lemma~\ref{lem:Mass-Lp} with $p=2$ and the Cauchy--Schwarz inequality give
\[  
    \frac{Z_{\lambda,\mathrm{quad}}}{Z_0}
    -
    z_{\mathrm{quad}}\lesssim
    \Big(\int_\Omega\norm{J_\omega}_{L^\infty}^2\dd\nu(\omega)\Big)
    \big[
    (\tr_{\gH}h^{-2})^{1/2}(\lambda^2N_0)^{1/2}
    +
    \lambda^2N_0
    \big]+o(1),
\]
which tends to zero by the asymptotic behavior of $N_0$, see Lemma~\ref{lem:Clambda-trace-bound}. Hence,
\[
    \liminf_{\lambda\downarrow0}
    \Big(-\log\frac{Z_{\lambda,\mathrm{quad}}}{Z_0}\Big)
    \geq
    -\log z_{\mathrm{quad}},
\]
and this gives the relative free-energy lower bound.

\medskip

\noindent\textbf{Relative free-energy upper bound:}
For the upper bound, define the coherent trial state by
\[
    \widetilde\Gamma_{\lambda,\mathrm{quad}}
    =
    \int
    \left|\xi(u/\sqrt\lambda)\right\rangle
    \left\langle\xi(u/\sqrt\lambda)\right|
    \dd\tilde{\mu}_\lambda (u),\qquad 
    \dd\tilde{\mu}_\lambda(u)=
    z_{\lambda,\mathrm{quad}}^{-1}
    e^{-\cD_{\lambda,\mathrm{quad}}[u]}
    \dd\mu_\lambda(u).
\]
The monotonicity of relative entropy under this coherent-state preparation
map gives
\[
\begin{aligned}
    \cH(\widetilde\Gamma_{\lambda,\mathrm{quad}},\Gamma_0)
    \leq\cH_{\cl}(\tilde{\mu}_\lambda,\mu_\lambda)=
    -\log z_{\lambda,\mathrm{quad}}
    -
    \int
    \cD_{\lambda,\mathrm{quad}}[u]
    \dd\tilde{\mu}_\lambda(u).
\end{aligned}
\]
Moreover, \eqref{eq:quad-channel-identity} gives
\[
    \tr_{\gF}
    \big(\bW_{\lambda,\mathrm{quad}}^{\ren}
    \widetilde\Gamma_{\lambda,\mathrm{quad}}\big)
    =
    \int
    \cD_{\lambda,\mathrm{quad}}[u]
    \dd\tilde{\mu}_\lambda(u)
    +
    \frac{\lambda}{2}
\int_\Omega\dd\nu(\omega)\int_{\bT^d}
\int_{\gH}
\bangle{u,|\tau_rJ_\omega|^2u}\dd\tilde{\mu}_\lambda(u)\dd r.
\]
Since $\cD_{\lambda,\mathrm{quad}}\geq0$ and $z_{\lambda,\mathrm{quad}}\to z_{\mathrm{quad}}>0$,
\[
\begin{aligned}
&\frac{\lambda}{2}
\int_\Omega\dd\nu(\omega)\int_{\bT^d}
\int_{\gH}
\bangle{u,|\tau_rJ_\omega|^2u}\dd\tilde{\mu}_\lambda(u)\dd r
\\
&\quad\leq
\frac{\lambda}{2z_{\lambda,\mathrm{quad}}}
\int_\Omega\dd\nu(\omega)\int_{\bT^d}
\int_{\gH}
\bangle{u,|\tau_rJ_\omega|^2u}\dd\mu_\lambda(u)\dd r
\\
&\quad\leq
\frac{(2\pi)^d\lambda^2 N_0}{2z_{\lambda,\mathrm{quad}}}
\int_\Omega\norm{J_\omega}_{L^\infty}^2\dd\nu(\omega)
\longrightarrow0 .
\end{aligned}
\]
The Gibbs variational principle therefore yields
\[
\begin{aligned}
    -\log\frac{Z_{\lambda,\mathrm{quad}}}{Z_0}
    &\leq
    \cH(\widetilde\Gamma_{\lambda,\mathrm{quad}},\Gamma_0)
    +
    \tr_{\gF}(\bW_{\lambda,\mathrm{quad}}^{\ren}\widetilde\Gamma_{\lambda,\mathrm{quad}})
    \\
    &\leq\cH_{\cl}(\tilde{\mu}_\lambda,\mu_\lambda)+\int
    \cD_{\lambda,\mathrm{quad}}[u]
    \dd\tilde{\mu}_\lambda(u)+o(1)\\
    &=
    -\log z_{\lambda,\mathrm{quad}}+o(1)
    \longrightarrow -\log z_{\mathrm{quad}} .
\end{aligned}
\]
Together with the lower bound, this proves the relative free-energy
convergence in the quadratic case.

\medskip 

\noindent\textbf{Hilbert--Schmidt convergence:}
As in \cite{LNR21}, we will use the following estimate:
\begin{align*}
\norm{\Gamma_{\lambda,\mathrm{quad}}^{(k)}-\widetilde\Gamma_{\lambda,\mathrm{quad}}^{(k)}}_{\gS^2}^2
\lesssim_k
\norm{\Gamma_{\lambda,\mathrm{quad}}-\widetilde\Gamma_{\lambda,\mathrm{quad}}}_{\gS^1}
\sum_{\ell=k}^{2k}
\left(
\norm{\Gamma_{\lambda,\mathrm{quad}}^{(\ell)}}_{\gS^2}
+
\norm{\widetilde\Gamma_{\lambda,\mathrm{quad}}^{(\ell)}}_{\gS^2}
\right).
\end{align*}

The quadratic interaction is
nonnegative, and the positivity of the heat kernel and the Trotter product
formula give the pointwise kernel domination
\[
    0\leq
    \Gamma_{\lambda,\mathrm{quad}}^{(\ell)}
    (\underline X_\ell;\underline Y_\ell)
    \leq
    C_\ell\,
    \Gamma_0^{(\ell)}
    (\underline X_\ell;\underline Y_\ell).
\]
This supplies the uniform Hilbert--Schmidt bounds required in
Lemma~11.4 of \cite{LNR21}.  Combining these bounds with the trace-norm
convergence of the Gibbs state to the coherent trial state gives the
convergence of reduced density matrices of every fixed order.  
\end{proof}

\medskip

\noindent\textbf{Additional issues in the cubic case.}
The three-body interaction treated in this paper follows the same coherent-state
principle, but two new issues appear.  First, neither the centered classical
cubic channel interaction nor the centered quantum three-body interaction is
nonnegative.  The convergence of the Gibbs weights therefore requires 
exponential estimates, both in the classical and the quantum
settings; these are proved in Propositions~\ref{prop:full-exp} and
\ref{prop:quantum-exponential}.  Second, the pointwise kernel domination for the
interacting density matrices can no longer be obtained directly from
the Trotter product formula.  The substitute is the monotonicity of the
three-body interaction, combined with the Ginibre loop
representation in Appendix~\ref{app:Ginibre-loop-representation}.  This
yields the pointwise kernel domination in Proposition~\ref{prop:Ginibre-domination},
which plays the corresponding role in the quadratic case.

\bigskip

\section{Classical model: properties of the nonlinear Gibbs measures}\label{sec:classical-model}

In this section, we construct the nonlinear Gibbs measure and establish
the estimates needed below.

\subsection{Gaussian fields with general covariance}

We first recall the notation for Gaussian fields with a general covariance
operator.  Let $\mathsf C$ be a nonnegative self-adjoint compact operator
on $\mathfrak H$, and assume that $\mathsf C\in\mathfrak S^p(\mathfrak H)$
for some $p\ge 1$. Let $(\varphi_j,c_j)$ be a spectral decomposition of the positive
spectral subspace of $\mathsf C$, so that
\[
    \mathsf C\varphi_j=c_j\varphi_j,
    \qquad c_j>0,
\]
We denote by $\mu_{\mathsf C}$ the centered complex Gaussian field with
covariance $\mathsf C$.  In spectral coordinates, this field may be
realized as
\[
    u_{\mathsf C}
    =
    \sum_j c_j^{1/2}g_j\varphi_j,
\]
where $(g_j)_j$ are independent standard complex Gaussian variables.

When $p=1$, the field is $\mathfrak H$-valued.  For $p>1$, the series
above is not necessarily convergent in $\mathfrak H$.  It is naturally
realized as a random element of the Hilbert space
\[
    \mathfrak H_{\mathsf C}^{1-p}
    :=
    \left\{
    u=\sum_j u_j\varphi_j:
    \sum_j c_j^{p-1}|u_j|^2<\infty
    \right\},
\]
because
\[
    \int_{\mathfrak H_{\mathsf C}^{1-p}}
    \|u\|_{\mathfrak H_{\mathsf C}^{1-p}}^2
    \,\mathrm d\mu_{\mathsf C}(u)
    =
    \sum_j c_j^p<\infty.
\]
In the application $\mathsf C=h^{-1}$, this is the usual Sobolev-type
space
\[
    \mathfrak H_{h^{-1}}^{1-p}
    =
    \left\{
    u=\sum_{k\in\mathbb Z^d}u_ke_k:
    \sum_{k\in\mathbb Z^d}
    (1+|k|^2)^{1-p}|u_k|^2<\infty
    \right\}.
\]

We shall use the following standard consequence of Wick's theorem.  If
$P$ is a finite-rank spectral projection of $\mathsf C$, then
\[
    \int_{\mathfrak H_{\mathsf C}^{1-p}}
    |(Pu)^{\otimes k}\rangle
    \langle (Pu)^{\otimes k}|
    \,\mathrm d\mu_{\mathsf C}(u)
    =
    k!\,(P\mathsf C P)^{\otimes k}
\]
as an operator on $(P\mathfrak H)^{\otimes_s k}$.  Passing to the limit in $\mathfrak S^p$ yields
\[
    \int_{\mathfrak H_{\mathsf C}^{1-p}}
    |u^{\otimes k}\rangle
    \langle u^{\otimes k}|
    \,\mathrm d\mu_{\mathsf C}(u)
    =
    k!\,\mathsf C^{\otimes k}
\]
as an element of $\mathfrak S^p(\mathfrak H^{\otimes_s k})$, with
\[
    \left\|
    k!\,\mathsf C^{\otimes k}
    \right\|_{\mathfrak S^p(\mathfrak H^{\otimes_s k})}
    \le
    k!\,\|\mathsf C\|_{\mathfrak S^p(\mathfrak H)}^k.
\]
A rigorous formulation of this identity is given in
\cite[Section~3.1]{lewin2015derivation}. Throughout this paper we consider the following covariances
\[
    \mathsf C_\lambda
    =
    \lambda(e^{\lambda h}-1)^{-1}=\lambda\gamma_0, \quad\lambda>0\quad;
    \qquad
    \mathsf C_0=h^{-1}.
\]
In these two cases we keep the shorter notation
\[
    \mu_{\mathsf C_\lambda}=\mu_\lambda,
    \qquad
    \mu_{h^{-1}}=\mu_0.
\]

We next define the finite-dimensional centered interaction. For $K<\infty$ and $\lambda\in[0,1]$, define
\begin{align}
    \cD_{\lambda,K}[u]
    =
    \frac16
    \int_\Omega\dd\nu(\omega)
    \int_{\bT^d}
    \left(\cM_{\lambda,K}^{\tau_rJ_\omega}[u]\right)^3
    \dd r ,
\end{align}
where
\[
    \cM_{\lambda,K}^A[u]
    =
    \inprod{P_Ku}{AP_Ku}
    -
    \mathbb E_{\mu_\lambda}\inprod{P_Ku}{AP_Ku},
    \qquad
    P_K=\mathds1_{\{h\leq K\}}.
\]
We then remove the cutoff and establish the exponential integrability
needed to normalize the Gibbs measure.

We begin with the uniform $L^p$-control and convergence
of the mass renormalization.

\subsection{Mass renormalization}

\medskip

\begin{lemma}[Mass renormalization in $L^p$]\label{lem:Mass-Lp}
Let $A$ be a bounded self-adjoint operator on $\gH$.  Then
$\cM_{\lambda,K}^A$ is a Cauchy sequence in $L^p(\mu_\lambda)$, uniformly for
$\lambda\in[0,1]$, for every $1\leq p<\infty$. The limit is denoted by $\cM_{\lambda}^A$. Moreover,
\begin{equation}\label{eq:MK-uniform-Lp}
    \sup_K\sup_{\lambda\in[0,1]}
    \norm{\cM_{\lambda,K}^A}_{L^p(\mu_\lambda)}
    \leq
    \left[2\left(1+p\,4^p\Gamma(p)\right)\right]^{1/p}
    \norm{A}
    \left(\tr h^{-2}\right)^{1/2}.
\end{equation}
In particular, for every $\omega,r$,
\begin{equation}\label{eq:MK-channel-uniform-Lp}
    \sup_K\sup_{\lambda\in[0,1]}
    \norm{\cM_{\lambda,K}^{\tau_rJ_\omega}}_{L^p(\mu_\lambda)}
    \leq
    \left[2\left(1+p\,4^p\Gamma(p)\right)\right]^{1/p}
    \norm{J_\omega}_{L^\infty}
    \left(\tr h^{-2}\right)^{1/2}.
\end{equation}
\end{lemma}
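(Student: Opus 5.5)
The plan is to reduce everything to a centered quadratic form in finitely many independent complex Gaussians and to use Gaussian concentration of Hanson--Wright type. Under $\mu_\lambda$ the Fourier coefficients $u_k=\langle e_k,u\rangle$ are independent centered complex Gaussians with variance $c_k(\lambda)$, where $c_k(\lambda)$ is the eigenvalue of $\mathsf C_\lambda$ on $e_k$. For $\lambda\in(0,1]$ we have $c_k(\lambda)=\lambda/(e^{\lambda h(k)}-1)\le h(k)^{-1}$ because $e^x-1\ge x$, and for $\lambda=0$ we have $c_k(0)=h(k)^{-1}$. Hence $\mathsf C_\lambda\le h^{-1}$ as commuting operators, uniformly in $\lambda$. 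Writing $u=\mathsf C_\lambda^{1/2}g$ with $g$ a standard complex Gaussian vector, we get
\[
\cM_{\lambda,K}^A[u]=\langle g,Bg\rangle-\tr B,\qquad B=B_{\lambda,K}:=\mathsf C_\lambda^{1/2}P_KAP_K\mathsf C_\lambda^{1/2},
\]
which is a finite-rank self-adjoint operator. Its Hilbert--Schmidt norm satisfies
\[
\|B\|_{\gS^2}^2=\tr\big(P_KA\,\mathsf C_\lambda P_K A\,\mathsf C_\lambda\big)\le\|A\|^2\tr(\mathsf C_\lambda^2)\le\|A\|^2\tr h^{-2},
\]
and $\|B\|\le\|B\|_{\gS^2}$. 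The bound on $\tr h^{-2}$ is finite precisely because $d\le3$.

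\textbf{Moment bound.} Diagonalize $B=\sum_j b_j|\psi_j\rangle\langle\psi_j|$. Then $X:=\langle g,Bg\rangle-\tr B=\sum_j b_j(|\gamma_j|^2-1)$, where the $|\gamma_j|^2$ are i.i.d.\ $\mathrm{Exp}(1)$. A Chernoff bound, using $\log\mathbb E e^{t(|\gamma|^2-1)}=-t-\log(1-t)\le t^2$ for $|t|\le1/2$, gives
\[
\mathbb P(|X|>s)\le 2\exp\!\big(-\min\{s^2/(4\|B\|_{\gS^2}^2),\,s/(4\|B\|)\}\big).
\]
Since $\|B\|\le\|B\|_{\gS^2}=:\sigma$, this tail is dominated by $2e^{-s/(4\sigma)}$ for $s\ge\sigma$; the case $s\le\sigma$ is handled trivially. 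Then
\[
\mathbb E|X|^p=p\int_0^\infty s^{p-1}\mathbb P(|X|>s)\,\dd s\le \sigma^p+2p\int_0^\infty s^{p-1}e^{-s/(4\sigma)}\,\dd s=\sigma^p\big(1+2p4^p\Gamma(p)\big),
\]
which is at most $2(1+p4^p\Gamma(p))\sigma^p$. This yields \eqref{eq:MK-uniform-Lp}. The precise constant is bookkeeping in the Chernoff step; I expect the choices of $t$ to need tuning to land exactly on the stated constant, and that is the only fiddly point.

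\textbf{Cauchy property.} For $K<K'$, the difference $\cM_{\lambda,K'}^A-\cM_{\lambda,K}^A$ is again a centered quadratic form, now with operator $\mathsf C_\lambda^{1/2}(P_{K'}AP_{K'}-P_KAP_K)\mathsf C_\lambda^{1/2}$. Write $P_{K'}AP_{K'}-P_KAP_K=Q A P_{K'}+P_KAQ$ with $Q=P_{K'}-P_K$. Its Hilbert--Schmidt norm squared is bounded by
\[
4\|A\|^2\,\|\mathsf C_\lambda Q\|_{\gS^2}\|\mathsf C_\lambda\|_{\gS^2}\le 4\|A\|^2\Big(\sum_{h(k)>K}h(k)^{-2}\Big)^{1/2}(\tr h^{-2})^{1/2}.
\]
This bound is uniform in $\lambda$ and tends to zero as $K\to\infty$. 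The same moment bound (which holds for any finite-rank $B$, up to a finite-dimensional approximation) then gives the Cauchy property in $L^p(\mu_\lambda)$ uniformly in $\lambda\in[0,1]$, and the limit $\cM^A_\lambda$ inherits the bound.

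\textbf{Channel case.} For \eqref{eq:MK-channel-uniform-Lp}, apply the above with $A=\tau_rJ_\omega$, which is a real bounded multiplication operator, hence self-adjoint, with $\|A\|=\|J_\omega\|_{L^\infty}$.
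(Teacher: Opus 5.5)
Your proposal is correct and takes essentially the paper's route. You use the same reduction to $\sum_j b_j(|g_j|^2-1)$ with $\|B\|_{\gS^2}^2\le\|A\|^2\tr(\mathsf C_\lambda^2)\le\|A\|^2\tr h^{-2}$, the same Chernoff input $-t-\log(1-t)\le t^2$ for $|t|\le 1/2$ (the content of Lemma~\ref{lem:Gaussian-Lp}), and the same splitting $QAP_{K'}+P_KAQ$ with $\gS^4$ H\"older bounds for the uniform Cauchy property. Also, your layer-cake computation already yields $\sigma^p\bigl(1+2p\,4^p\Gamma(p)\bigr)\le 2\bigl(1+p\,4^p\Gamma(p)\bigr)\sigma^p$, so no further tuning of $t$ is needed to reach the stated constant.
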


\begin{remark}\label{rem:M-lambda-positive}
    Since $\mu_\lambda$ is supported on $\gH$ for $\lambda>0$, then, for $u\in\gH$,
    \[
        \cM_{\lambda}^A[u]
        =
        \inprod{u}{Au}-\tr_{\gH}(A\mathsf{C}_\lambda).
    \]
    In particular, for $\lambda>0$ and $u\in\gH$,
    \[
        \cM_{\lambda}^{\tau_rJ_\omega}[u]
        =
        \inprod{u}{\tau_rJ_\omega u}-\tr_{\gH}((\tau_rJ_\omega)\mathsf{C}_\lambda)
        =
        \inprod{u}{\tau_rJ_\omega u}-m_{\lambda,\omega}.   
    \]
\end{remark}

\begin{remark}\label{rem:Mass-L2-complex-channels}
For the quadratic channel interaction, only the $L^2$ version of the
mass renormalization is needed.  In that case the self-adjointness
assumption on $A$ can be removed, and the estimate follows from the
Hilbert--Schmidt isometry for centered complex Gaussian quadratic forms;
see \cite[Lemma~5.2]{LNR21}.  Consequently, in the quadratic setting our
method allows the channel functions $J_\omega$ to be complex-valued
bounded functions.
\end{remark}

\medskip

\begin{proof}
It is enough to consider $0<\lambda\leq1$; the case $\lambda=0$ is the
same argument with $\mathsf{C}_{0}=h^{-1}$.  For finite $K$, write
\[
    P_Ku=P_K\mathsf{C}_{\lambda}^{1/2}g,
\]
where $g$ is the standard complex Gaussian vector in $P_K\gH$.  Then
\begin{align*}
\cM_{\lambda,K}^A
&=
\inprod{P_K\mathsf{C}_{\lambda}^{1/2}g}{A P_K\mathsf{C}_{\lambda}^{1/2}g}
-
\tr\left(AP_K\mathsf{C}_{\lambda} P_K\right)                                           \\
&=
\inprod{g}{P_K\mathsf{C}_{\lambda}^{1/2}A\mathsf{C}_{\lambda}^{1/2}P_Kg}
-
\tr\left(P_K\mathsf{C}_{\lambda}^{1/2}A\mathsf{C}_{\lambda}^{1/2}P_K\right).
\end{align*}
The finite-rank operator
\[
    P_K\mathsf{C}_{\lambda}^{1/2}A\mathsf{C}_{\lambda}^{1/2}P_K
\]
is self-adjoint.  If its eigenvalues are $\alpha_{\lambda,j,K}$, unitary
invariance of the standard complex Gaussian vector gives
\[
    \cM_{\lambda,K}^A
    \stackrel{\mathrm{law}}=
    \sum_j\alpha_{\lambda,j,K}(|g_j|^2-1).
\]
Moreover, by the Schatten Hölder inequality and $\mathsf{C}_{\lambda}\leq h^{-1}$,
\begin{align*}
\sum_j\alpha_{\lambda,j,K}^2
=
\norm{P_K\mathsf{C}_{\lambda}^{1/2}A\mathsf{C}_{\lambda}^{1/2}P_K}_{\gS^2}^2                    
\leq
\norm A^2\tr(\mathsf{C}_{\lambda}^2)                                                   
\leq
\norm A^2\tr h^{-2}.
\end{align*}
Lemma~\ref{lem:Gaussian-Lp} therefore gives
\[
    \norm{\cM_{\lambda,K}^A}_{L^p(\mu_\lambda)}
    \leq
    \left[2\left(1+p4^p\Gamma(p)\right)\right]^{1/p}
    \norm{A}(\tr h^{-2})^{1/2}.
\]
Taking the supremum over $K$ and $\lambda\in[0,1]$ proves
\eqref{eq:MK-uniform-Lp}.

It remains to prove the Cauchy property.  For $L\geq K$,
\begin{align*}
\cM_{\lambda,L}^A-\cM_{\lambda,K}^A
&=
\inprod{g}{
\left(
P_L\mathsf{C}_{\lambda}^{1/2}A\mathsf{C}_{\lambda}^{1/2}P_L
-
P_K\mathsf{C}_{\lambda}^{1/2}A\mathsf{C}_{\lambda}^{1/2}P_K
\right)g}                                                                    \\
&\quad-
\tr\left(
P_L\mathsf{C}_{\lambda}^{1/2}A\mathsf{C}_{\lambda}^{1/2}P_L
-
P_K\mathsf{C}_{\lambda}^{1/2}A\mathsf{C}_{\lambda}^{1/2}P_K
\right).
\end{align*}
Applying Lemma~\ref{lem:Gaussian-Lp} to the eigenvalues of the self-adjoint
difference gives
\begin{align}
&
\norm{\cM_{\lambda,L}^A-\cM_{\lambda,K}^A}_{L^p(\mu_\lambda)}
\nn\\
&\leq
\left[2\left(1+p4^p\Gamma(p)\right)\right]^{1/p}
\norm{
P_L\mathsf{C}_{\lambda}^{1/2}A\mathsf{C}_{\lambda}^{1/2}P_L
-
P_K\mathsf{C}_{\lambda}^{1/2}A\mathsf{C}_{\lambda}^{1/2}P_K
}_{\gS^2}.
\label{eq:lambda-MK-Cauchy-HS}
\end{align}
Since $P_K$ commutes with $\mathsf{C}_{\lambda}$,
\begin{align*}
&
P_L\mathsf{C}_{\lambda}^{1/2}A\mathsf{C}_{\lambda}^{1/2}P_L
-
P_K\mathsf{C}_{\lambda}^{1/2}A\mathsf{C}_{\lambda}^{1/2}P_K                                      \\
&=
P_K\mathsf{C}_{\lambda}^{1/2}A(P_L-P_K)\mathsf{C}_{\lambda}^{1/2}
+
(P_L-P_K)\mathsf{C}_{\lambda}^{1/2}A\mathsf{C}_{\lambda}^{1/2}P_L .
\end{align*}
By the Schatten Hölder inequality,
\begin{align*}
\norm{
P_K\mathsf{C}_{\lambda}^{1/2}A(P_L-P_K)\mathsf{C}_{\lambda}^{1/2}
}_{\gS^2}                                                                   
&\leq
\norm A\,
\norm{P_K\mathsf{C}_{\lambda}^{1/2}}_{\gS^4}
\norm{(P_L-P_K)\mathsf{C}_{\lambda}^{1/2}}_{\gS^4}                                      \\
&\leq
\norm A\,
(\tr h^{-2})^{1/4}
\left(\tr((1-P_K)h^{-2})\right)^{1/4},
\end{align*}
and
\begin{align*}
\norm{
(P_L-P_K)\mathsf{C}_{\lambda}^{1/2}A\mathsf{C}_{\lambda}^{1/2}P_L
}_{\gS^2}                                                                    
&\leq
\norm A\,
\norm{(P_L-P_K)\mathsf{C}_{\lambda}^{1/2}}_{\gS^4}
\norm{\mathsf{C}_{\lambda}^{1/2}P_L}_{\gS^4}                                            \\
&\leq
\norm A\,
\left(\tr((1-P_K)h^{-2})\right)^{1/4}
(\tr h^{-2})^{1/4}.
\end{align*}
Consequently,
\begin{align*}
&
\sup_{0\leq\lambda\leq1}
\norm{
P_L\mathsf{C}_{\lambda}^{1/2}A\mathsf{C}_{\lambda}^{1/2}P_L
-
P_K\mathsf{C}_{\lambda}^{1/2}A\mathsf{C}_{\lambda}^{1/2}P_K
}_{\gS^2}                                                                    \\
&\leq
2\norm A
(\tr h^{-2})^{1/4}
\left(\tr((1-P_K)h^{-2})\right)^{1/4}
\to0
\end{align*}
as $K\to\infty$, uniformly in $L\geq K$.  Combining this with
\eqref{eq:lambda-MK-Cauchy-HS} proves that $\cM_{\lambda,K}^A$ is Cauchy in
$L^p(\mu_\lambda)$, uniformly for $\lambda\in[0,1]$. Taking $A=\tau_rJ_\omega$ and using
$
    \norm{\tau_rJ_\omega}=\norm{J_\omega}_{L^\infty}
$
proves \eqref{eq:MK-channel-uniform-Lp}.
\end{proof}

The previous lemma provides the uniform bounds required to pass from
individual centered densities to their cubic channel average.

\medskip

\begin{proposition}[Limit of the centered cubic channel interaction]\label{prop:D-L1}
For each $\lambda\in[0,1]$, one can choose a jointly measurable version of
\[
    (u,\omega,r)
    \longmapsto
    \cM_\lambda^{\tau_rJ_\omega}[u].
\]
With this choice, $\cD_{\lambda,K}$ converges in $L^1(\mu_\lambda)$,
uniformly for $\lambda\in[0,1]$, and its limit satisfies
\begin{equation}\label{eq:Dlambda-channel-representation}
    \cD_\lambda[u]
    =
    \frac16
    \int_\Omega\int_{\bT^d}
    \big(\cM_\lambda^{\tau_rJ_\omega}[u]\big)^3
    \dd r\,\dd\nu(\omega)
\end{equation}
in $L^1(\mu_\lambda)$ and, after fixing representatives, for
$\mu_\lambda$-a.e. $u$.
\end{proposition}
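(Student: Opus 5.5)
Fix $\lambda\in[0,1]$. The argument has three parts: build a measurable version of the limit from its finite cutoffs, show $\cD_{\lambda,K}$ is Cauchy in $L^1(\mu_\lambda)$ uniformly in $\lambda$, and identify the limit.

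\textbf{Measurable version.} For finite $K$ the map $(u,\omega,r)\mapsto\cM_{\lambda,K}^{\tau_rJ_\omega}[u]$ is jointly measurable. Indeed,
\[
\inprod{P_Ku}{(\tau_rJ_\omega)P_Ku}=\int_{\bT^d}J(\omega,x-r)\,|P_Ku(x)|^2\dd x,
\]
where $P_Ku$ is a trigonometric polynomial whose coefficients depend continuously on $u$, and $J$ is jointly measurable by Assumption~\ref{ass:J}. The counterterm $m_{\lambda,\omega}$ (or its cutoff analogue) is measurable in $\omega$ and independent of $r$.

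By Lemma~\ref{lem:Mass-Lp}, the $L^2(\mu_\lambda)$ increments are bounded by
\[
\norm{\cM_{\lambda,L}^{\tau_rJ_\omega}-\cM_{\lambda,K}^{\tau_rJ_\omega}}_{L^2(\mu_\lambda)}\le C\,\norm{J_\omega}_{L^\infty}\,\epsilon_K,
\qquad
\epsilon_K:=(\tr h^{-2})^{1/4}\big(\tr((1-P_K)h^{-2})\big)^{1/4}\to 0 .
\]
Choose a subsequence $K_n$ with $\sum_n\epsilon_{K_n}^{1/2}<\infty$. Then, for each fixed $(\omega,r)$, $\sum_n|\cM_{\lambda,K_{n+1}}-\cM_{\lambda,K_n}|$ is finite $\mu_\lambda$-a.e. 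Defining $\cM_\lambda^{\tau_rJ_\omega}[u]:=\limsup_n\cM_{\lambda,K_n}^{\tau_rJ_\omega}[u]$ (set to $0$ where it is infinite) gives a jointly measurable function. By Fubini, it agrees with the $L^p$ limit of Lemma~\ref{lem:Mass-Lp} for a.e.\ $(\omega,r)$, which is all that matters after integration.

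\textbf{$L^1$ Cauchy property.} Use the scalar bound $|a^3-b^3|\le|a-b|(a^2+ab+b^2)$ together with H\"older with exponents $(2,2)$ or $(3,3/2)$ in $u$:
\[
\norm{(\cM_{\lambda,L})^3-(\cM_{\lambda,K})^3}_{L^1(\mu_\lambda)}
\le 3\,\norm{\cM_{\lambda,L}-\cM_{\lambda,K}}_{L^2}\,\max\big(\norm{\cM_{\lambda,L}}_{L^4},\norm{\cM_{\lambda,K}}_{L^4}\big)^2 .
\]
By \eqref{eq:MK-channel-uniform-Lp} and the Cauchy estimate of Lemma~\ref{lem:Mass-Lp}, the right side is at most $C\norm{J_\omega}_{L^\infty}^3\,\epsilon_K\,\tr h^{-2}$, uniformly in $\lambda\in[0,1]$, $r$, and $L\ge K$. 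Then apply Fubini--Tonelli, using that $|\bT^d|<\infty$ and $\int_\Omega\norm{J_\omega}_{L^\infty}^3\dd\nu<\infty$ by \eqref{eq:J-Linf-assumption}:
\[
\norm{\cD_{\lambda,L}-\cD_{\lambda,K}}_{L^1(\mu_\lambda)}\le C(2\pi)^d\,\epsilon_K\int_\Omega\norm{J_\omega}_{L^\infty}^3\dd\nu\to0 ,
\]
uniformly in $\lambda$. So $\cD_{\lambda,K}$ converges in $L^1(\mu_\lambda)$ to some limit $\cD_\lambda$.

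\textbf{Identification.} The same estimate with $L$ replaced by the limit $\cM_\lambda$ (whose $L^4$ norm obeys the same bound, by Fatou) shows that $\cD_{\lambda,K}$ converges in $L^1(\mu_\lambda)$ to the right side of \eqref{eq:Dlambda-channel-representation}. By Tonelli, the integrand $(u,\omega,r)\mapsto(\cM_\lambda^{\tau_rJ_\omega}[u])^3$ lies in $L^1(\mu_\lambda\otimes\nu\otimes\dd r)$, so that right side is defined for $\mu_\lambda$-a.e.\ $u$. Uniqueness of $L^1$ limits then gives equality in $L^1(\mu_\lambda)$, and hence $\mu_\lambda$-a.e.\ once representatives are fixed.

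\textbf{Main obstacle.} The delicate step is the first one: a pointwise version that is jointly measurable in $(u,\omega,r)$, not merely a family of $L^p$ classes indexed by $(\omega,r)$. The subsequence construction, with one $K_n$ uniform in $(\omega,r)$ and relying on the uniform rate $\epsilon_K$, is designed to handle this. When $\lambda>0$, Remark~\ref{rem:M-lambda-positive} simplifies the step, since the explicit formula $\inprod{u}{\tau_rJ_\omega u}-m_{\lambda,\omega}$ on $\gH$ is already jointly measurable.
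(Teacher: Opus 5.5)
Your proposal is correct and follows essentially the same route as the paper: the Cauchy estimate and uniform $L^p$ bounds from Lemma~\ref{lem:Mass-Lp}, the factorization of $a^3-b^3$ combined with H\"older, and Fubini--Tonelli using $\int_\Omega\norm{J_\omega}_{L^\infty}^3\dd\nu<\infty$. The differences are only technical. The paper obtains the jointly measurable version as the limit of $F_{\lambda,K}$ in $L^3(\mu_\lambda\otimes\nu\otimes\mathrm{Leb}_{\bT^d})$ and applies H\"older on that product space. You instead take an a.e.\ limit along a subsequence chosen uniformly in $(\omega,r)$, so your version agrees with $\cM_\lambda^{\tau_rJ_\omega}$ for every $(\omega,r)$ rather than almost every one, and you apply $L^2$--$L^4$ H\"older channel by channel before integrating.
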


\medskip

\begin{proof}
For fixed $\lambda\in[0,1]$, set
\[
    F_{\lambda,K}(u,\omega,r)
    :=
    \cM_{\lambda,K}^{\tau_rJ_\omega}[u].
\]
The joint measurability in Assumption~\ref{ass:J} and the finite rank of
$P_K$ imply that $F_{\lambda,K}$ is jointly measurable.  Let
\[
    \mathfrak m_\lambda
    :=
    \mu_\lambda\otimes\nu\otimes\mathrm{Leb}_{\bT^d}.
\]
The estimates in the proof of Lemma~\ref{lem:Mass-Lp} give, for $L\geq K$,
\begin{align*}
\norm{
F_{\lambda,L}(\,\cdot\,,\omega,r)
-
F_{\lambda,K}(\,\cdot\,,\omega,r)
}_{L^3(\mu_\lambda)}
\lesssim
\norm{J_\omega}_{L^\infty}
(\tr h^{-2})^{1/4}
\left(\tr\big((1-P_K)h^{-2}\big)\right)^{1/4},
\end{align*}
uniformly in $\lambda$, $\omega$, and $r$.  Hence
$(F_{\lambda,K})_K$ is Cauchy in
$L^3(\mathfrak m_\lambda)$, uniformly in $\lambda$, because
\begin{align*}
&
\int_\Omega\int_{\bT^d}
\norm{
F_{\lambda,L}(\,\cdot\,,\omega,r)
-
F_{\lambda,K}(\,\cdot\,,\omega,r)
}_{L^3(\mu_\lambda)}^3
\dd r\,\dd\nu(\omega)
\\
&\qquad\lesssim
(2\pi)^d
(\tr h^{-2})^{3/4}
\left(\tr\big((1-P_K)h^{-2}\big)\right)^{3/4}
\int_\Omega\norm{J_\omega}_{L^\infty}^3\dd\nu(\omega),
\end{align*}
and the right-hand side tends to zero.  Denote the product-space limit by
$F_\lambda$.  By uniqueness of the $L^3$ limit, for
$\dd\nu(\omega)\dd r$-a.e. $(\omega,r)$ it agrees with
$\cM_\lambda^{\tau_rJ_\omega}$; this gives the asserted jointly measurable
version.

Lemma~\ref{lem:Mass-Lp} also yields
\[
    \sup_{K}\sup_{\lambda\in[0,1]}
    \int_\Omega\int_{\bT^d}
    \norm{F_{\lambda,K}(\,\cdot\,,\omega,r)}_{L^3(\mu_\lambda)}^3
    \dd r\,\dd\nu(\omega)
    \lesssim
    (2\pi)^d(\tr h^{-2})^{3/2}
    \int_\Omega\norm{J_\omega}_{L^\infty}^3\dd\nu(\omega).
\]
Consequently, using
\[
    |x^3-y^3|
    \leq
    |x-y|\big(|x|^2+|x||y|+|y|^2\big)
\]
and H\"older's inequality on the product space, we obtain
\[
    F_{\lambda,K}^3\longrightarrow F_\lambda^3
    \quad\text{in }
    L^1(\mathfrak m_\lambda),
\]
uniformly in $\lambda$.  Fubini's theorem therefore shows that
\[
    \cD_\lambda^\sharp[u]
    :=
    \frac16\int_\Omega\int_{\bT^d}
    F_\lambda(u,\omega,r)^3\dd r\,\dd\nu(\omega)
\]
is well-defined in $L^1(\mu_\lambda)$ and that
\[
    \sup_{\lambda\in[0,1]}
    \norm{\cD_{\lambda,K}-\cD_\lambda^\sharp}_{L^1(\mu_\lambda)}
    \longrightarrow0.
\]
We henceforth denote $\cD_\lambda^\sharp$ by $\cD_\lambda$, proving \eqref{eq:Dlambda-channel-representation}.
\end{proof}

\subsection{Exponential estimates}

\medskip

\begin{lemma}[One-channel exponential tail bound]\label{lem:Chernoff-channel}
Assume $\norm{J_\omega}_{L^\infty}>0$.  For every $K,\omega,r$ and every $R\geq0$,
\begin{equation}\label{eq:Chernoff-basic}
    \sup_{\lambda\in[0,1]}\mu_\lambda\Big(
    -\frac{\cM_{\lambda,K}^{\tau_rJ_\omega}}
    {\norm{J_\omega}_{L^\infty}}
    \geq R
    \Big)
    \leq
    \inf_{t\geq0}
    \exp\Big[
    -tR+
    \sum_{p\in\Z^d}
\min\left\{
\frac{t}{h(p)},
\frac{t^2}{2h(p)^2}
\right\}
    \Big].
\end{equation}
\end{lemma}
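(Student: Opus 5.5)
The plan is a Chernoff bound built on the diagonal representation from the proof of Lemma~\ref{lem:Mass-Lp}. Write $A=\tau_rJ_\omega/\norm{J_\omega}_{L^\infty}$, so that $0\leq A\leq 1$ as a multiplication operator; this uses the nonnegativity of $J_\omega$ from Assumption~\ref{ass:J}. As in that proof,
\[
    \cM_{\lambda,K}^{A}\stackrel{\mathrm{law}}{=}\sum_j\alpha_j(|g_j|^2-1),
\]
where $\alpha_j=\alpha_{\lambda,j,K}$ are the eigenvalues of $B:=P_K\mathsf C_\lambda^{1/2}A\mathsf C_\lambda^{1/2}P_K$ and the $|g_j|^2$ are i.i.d.\ $\mathrm{Exp}(1)$. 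Since $A\geq0$ we have $\alpha_j\geq0$, and since $A\leq 1$ we have $B\leq P_K\mathsf C_\lambda\leq h^{-1}$.

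For $t\geq0$, Markov's inequality gives
\[
    \mu_\lambda(-\cM^A\geq R)\leq e^{-tR}\,\mathbb E\, e^{-t\cM^A}.
\]
Independence and the exponential law give
\[
    \log\mathbb E\, e^{-t\cM^A}=\sum_j\phi(t\alpha_j),\qquad \phi(x)=x-\log(1+x),\ x\geq0.
\]
There is no integrability issue, because we are exponentiating the lower tail of nonnegative variables. The function $\phi$ satisfies the two elementary bounds $\phi(x)\leq x$ and $\phi(x)\leq x^2/2$. The second follows from $\phi(0)=\phi'(0)=0$ and $\phi''(x)=(1+x)^{-2}\leq1$. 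Hence
\[
    \phi(x)\leq\min\{x,x^2/2\}.
\]

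The remaining step, which I expect to be the main obstacle, is to pass from $\sum_j\min\{t\alpha_j,t^2\alpha_j^2/2\}$ to $\sum_p\min\{t/h(p),t^2/(2h(p)^2)\}$. The difficulty is that $B$ is not diagonal in the Fourier basis. The key observation is that $\psi(x)=\min\{tx,t^2x^2/2\}$ is nondecreasing on $[0,\infty)$. Since $0\leq B\leq P_Kh^{-1}$, the min--max principle gives that the $j$-th largest eigenvalue of $B$ is at most the $j$-th largest eigenvalue of $P_Kh^{-1}$. The number of nonzero eigenvalues of $B$ is at most $\operatorname{rank}P_K$. Monotonicity of $\psi$ then yields
\[
    \sum_j\psi(\alpha_j)\leq\sum_{h(p)\leq K}\psi(1/h(p))\leq\sum_{p\in\Z^d}\psi(1/h(p)).
\]
This bound is uniform in $\lambda\in[0,1]$ (including $\lambda=0$, where $\mathsf C_0=h^{-1}$), in $K$ and in $r$. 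Finally, we take the infimum over $t\geq0$ and then the supremum over $\lambda$.
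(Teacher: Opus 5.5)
Your proposal is correct and follows the paper's proof essentially step for step. Both arguments diagonalize $P_K\mathsf C_\lambda^{1/2}(\tau_rJ_\omega)\mathsf C_\lambda^{1/2}P_K$, write the Laplace transform as $\exp[\sum_j f(t\alpha_j)]$ with $f(x)=x-\log(1+x)$, compare eigenvalues with those of $P_Kh^{-1}$ via min--max, and finish with Markov's inequality. The only cosmetic difference is the order of two steps: the paper uses monotonicity of $f$ before bounding $f\leq\min\{x,x^2/2\}$, while you bound first and then use monotonicity of the minimum; both orders work.
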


\medskip

\begin{proof}
The operator
\[
    P_K \mathsf{C}_{\lambda}^{1/2}(\tau_rJ_\omega) \mathsf{C}_{\lambda}^{1/2}P_K
\]
is finite-rank, self-adjoint and nonnegative.  Denote its eigenvalues by
$\alpha_{\lambda,j,K}^{\omega,r}$.  Since
\[
    \langle P_Ku,\tau_rJ_\omega P_Ku\rangle
    \stackrel{\mathrm{law}}=
    \sum_j\alpha_{\lambda,j,K}^{\omega,r}|g_j|^2,
\]
one has
\[
    \cM_{\lambda,K}^{\tau_rJ_\omega}
    \stackrel{\mathrm{law}}=
    \sum_j\alpha_{\lambda,j,K}^{\omega,r}(|g_j|^2-1).
\]
For $t\geq0$,
\begin{align*}
\mathbb E_{\mu_\lambda}
\exp\Big(
-t\frac{\cM_{\lambda,K}^{\tau_rJ_\omega}}{\norm{J_\omega}_{L^\infty}}
\Big)                                                                      
=
\prod_j
\frac{
\exp\left(
t\alpha_{\lambda,j,K}^{\omega,r}/\norm{J_\omega}_{L^\infty}
\right)
}{
1+t\alpha_{\lambda,j,K}^{\omega,r}/\norm{J_\omega}_{L^\infty}
}                                                                            
=
\exp\Big[
\sum_j
f\Big(
\frac{t\alpha_{\lambda,j,K}^{\omega,r}}{\norm{J_\omega}_{L^\infty}}
\Big)
\Big],
\end{align*}
where $f(x)=x-\log(1+x)$. The quadratic form inequality
\[
    0\leq
    P_K \mathsf{C}_{\lambda}^{1/2}(\tau_rJ_\omega) \mathsf{C}_{\lambda}^{1/2}P_K
    \leq
    \norm{J_\omega}_{L^\infty}P_Kh^{-1}P_K
\]
holds because $0\leq\tau_rJ_\omega\leq\norm{J_\omega}_{L^\infty}$.  Since
$f$ is increasing on $[0,\infty)$ and $f(x)\leq\min\{x,x^2/2\}$ for nonnegative $x$, the min--max principle gives
\[
    \sum_j
    f\Big(
    \frac{t\alpha_{j,K}^{\omega,r}}{\norm{J_\omega}_{L^\infty}}
    \Big)
    \leq
    \sum_{p:\,h(p)\leq K}f\Big(\frac{t}{h(p)}\Big)
    \leq
    \sum_{p\in\Z^d}
\min\left\{
\frac{t}{h(p)},
\frac{t^2}{2h(p)^2}
\right\}.
\]
Markov's inequality then yields, for every $R\geq0$,
\begin{align*}
\mu_\lambda\Big(
-\frac{\cM_{\lambda,K}^{\tau_rJ_\omega}}{\norm{J_\omega}_{L^\infty}}
\geq R
\Big)                                                                     
&=
\mu_\lambda\Big[
\exp\left(
-t\frac{\cM_{\lambda,K}^{\tau_rJ_\omega}}{\norm{J_\omega}_{L^\infty}}
\right)
\geq e^{tR}
\Big]                                                                      \\
&\leq
\exp\Big[
-tR+
\sum_{p\in\Z^d}
\min\left\{
\frac{t}{h(p)},
\frac{t^2}{2h(p)^2}
\right\}
\Big].
\end{align*}
Taking the infimum over $t\geq0$ proves \eqref{eq:Chernoff-basic}.
\end{proof}

Combining this estimate with the lattice bounds in
Lemma~\ref{lem:lattice-sums} yields the one-channel exponential moments
below.

\medskip

\begin{proposition}[One-channel exponential estimates]\label{prop:one-channel-exp}
Assume $\norm{J_\omega}_{L^\infty}>0$.

If $d=3$, then for every $0\leq\theta<1/43200$,
\begin{equation}
    \sup_{\lambda\in[0,1]}\mathbb E_{\mu_\lambda}\left[
    \exp\left(
    \theta
    \frac{\left(\cM_{\lambda,K}^{\tau_rJ_\omega}\right)_-^3}
    {\norm{J_\omega}_{L^\infty}^3}
    \right)\right]
    \leq
    \frac1{1-43200\,\theta}.
\end{equation}
If $d=2$, then for every $\theta>0$,
\begin{equation}\label{eq:one-channel-2d}
    \sup_{\lambda\in[0,1]}\mathbb E_{\mu_\lambda}\left[
    \exp\left(
    \theta
    \frac{\left(\cM_{\lambda,K}^{\tau_rJ_\omega}\right)_-^3}
    {\norm{J_\omega}_{L^\infty}^3}
    \right)\right]
    \leq
    2\exp\left(\theta R_\theta^3\right),
\end{equation}
where
\begin{equation}\label{eq:Rtheta-def}
    R_\theta
    =
    320\log\left(320\max\{e,\sqrt\theta\}\right).
\end{equation}
The estimates are uniform in $K,r,\omega$.
\end{proposition}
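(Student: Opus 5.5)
Set $X:=\big(\cM_{\lambda,K}^{\tau_rJ_\omega}\big)_-/\norm{J_\omega}_{L^\infty}\geq0$. By Lemma~\ref{lem:Chernoff-channel}, $\mu_\lambda(X\geq R)\leq\exp[-\Lambda^*(R)]$, where
\[
\Lambda^*(R):=\sup_{t\geq0}\Big(tR-\sum_{p}\min\{t/h(p),t^2/(2h(p)^2)\}\Big),
\]
uniformly in $\lambda\in[0,1]$, $K$, $r$, $\omega$. We then use the layer-cake formula
\[
\mathbb E\,e^{\theta X^3}=1+\int_0^\infty 3\theta R^2e^{\theta R^3}\,\mu_\lambda(X\geq R)\dd R,
\]
so the whole task is to get a lower bound on $\Lambda^*(R)$ good enough to beat $\theta R^3$. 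Uniformity in $K,r,\omega,\lambda$ is automatic, since the Chernoff bound already is uniform.

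The key input is the lattice-sum estimate (Lemma~\ref{lem:lattice-sums}) for $S(t):=\sum_p\min\{t/h(p),t^2/(2h(p)^2)\}$. Splitting at $h(p)\sim t$, the region $h(p)\leq t$ contributes $t\,\#\{|p|^2\lesssim t\}$ weighted by $1/h$, and the tail contributes $t^2\sum_{h>t}h^{-2}$. In $d=3$ this gives $S(t)\leq C_3t^{3/2}$ for $t\geq1$, with $S(t)\lesssim t^2$ for small $t$. In $d=2$ it gives $S(t)\leq C_2t\log(e+t)$. For $d=3$ we optimize with $t\sim R^2$: we get $tR-C_3t^{3/2}\geq c\,R^3$ for $R$ large, with $c=4/(27C_3^2)$ from the exact Legendre transform of $C_3t^{3/2}$. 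Then
\[
\mathbb E e^{\theta X^3}\leq 1+\int_0^\infty 3\theta R^2e^{-(c-\theta)R^3}\dd R=1+\frac{\theta}{c-\theta}=\frac{1}{1-\theta/c}.
\]
This is exactly the claimed form with $c^{-1}=43200$, provided the constant from Lemma~\ref{lem:lattice-sums} yields $\Lambda^*(R)\geq R^3/43200$ for all $R\geq0$. For small $R$ we can simply bound the probability by $1$ if needed, or absorb that region into the constant by choosing $t$ so that the quadratic regime $S(t)\leq\frac{t^2}{2}\tr h^{-2}$ is used. I expect checking that the explicit constants close to produce the clean bound $(1-43200\theta)^{-1}$ to be the fiddly step, possibly needing a global bound $S(t)\leq C t^{3/2}$ valid for all $t\geq0$, which holds since $S(t)\leq t^2\tr h^{-2}/2$ for $t\leq1$.

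For $d=2$, $\Lambda^*$ is only superlinear, $\Lambda^*(R)\gtrsim e^{cR}$. Indeed, with $S(t)\leq C_2t\log(e+t)$, choosing $t=e^{R/(2C_2)}$ gives $tR-S(t)\geq tR/2$ once $R\geq R_0$. The main obstacle is to choose the threshold where $\mu(X\geq R)\leq \exp(-\tfrac12 Re^{R/(2C_2)})$ dominates $e^{\theta R^3}$. We split the expectation at $R_\theta$: on $\{X\leq R_\theta\}$ the integrand is at most $e^{\theta R_\theta^3}$. On $\{X>R_\theta\}$, after choosing $R_\theta=320\log(320\max\{e,\sqrt\theta\})$ from the constants of Lemma~\ref{lem:lattice-sums} (essentially $2C_2=320$), we need $\tfrac12Re^{R/320}\geq 2\theta R^3+R$ for $R\geq R_\theta$. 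This holds because $e^{R/320}\geq 320\sqrt\theta\,e^{(R-R_\theta)/320}$ dominates $\theta R^2$ after using $\log$-growth. The tail integral is then at most $e^{\theta R_\theta^3}$, which gives the total $2e^{\theta R_\theta^3}$. Verifying this elementary inequality with the specific constant $320$ is the only delicate point; if it fails by a constant, one adjusts $t$ to $e^{R/320}$ directly and compares $\theta R^3$ against $Re^{R/320}/2$ using $R\geq R_\theta$.
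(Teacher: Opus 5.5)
Your route is the paper's: the Chernoff bound of Lemma~\ref{lem:Chernoff-channel}, the lattice sums of Lemma~\ref{lem:lattice-sums}, and the layer-cake formula. For $d=3$ your argument is correct and coincides with the paper's. With $C_3=80$ the Legendre constant is $4/(27\cdot 80^2)=1/43200$, and the maximizer is $t=(R/120)^2$, which is exactly the paper's choice. Since Lemma~\ref{lem:lattice-sums} gives $S(t)\le 80t^{3/2}$ for all $t\ge0$, the tail bound $e^{-R^3/43200}$ holds for every $R\ge0$. No small-$R$ splitting is needed, and bounding the probability by $1$ on a small-$R$ region would in fact spoil the clean constant $(1-43200\theta)^{-1}$.

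For $d=2$, however, the parameter choice you commit to does not produce the stated $R_\theta$.

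\begin{itemize}
\item Lemma~\ref{lem:lattice-sums} gives $S(t)\le 20t\log(2+t)$, so $C_2=20$, not $160$.
\item With $t=e^{R/(2C_2)}$ you do not get $tR-S(t)\ge tR/2$; you only get $tR/2-20t\log3$.
\item With your final choice $t=e^{R/320}$, the inequality you need, $e^{R/320}\ge 4\theta R^2$ for $R\ge R_\theta$, is false at $R=R_\theta$. There $e^{R_\theta/320}=320\max\{e,\sqrt\theta\}$ is of order $\sqrt\theta$, while $4\theta R_\theta^2$ is of order $\theta\log^2\theta$.
\item It fails even for $\theta\le e^2$: then $e^{R_\theta/320}=320e\approx 870$, while $4\theta R_\theta^2\approx 1.9\cdot10^{7}\theta$, so it breaks once $\theta\gtrsim 5\cdot 10^{-5}$.
\end{itemize}

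Your phrase ``dominates $\theta R^2$ after using log-growth'' hides exactly this failure.

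The missing point is that the exponential rate must be $R/80$, so that the threshold $R_\theta$ produces a \emph{fourth} power. Take $t=e^{R/80}$. For $R\ge 80\log3$ one has $\log(2+t)\le\log(3t)\le R/40$, hence $\mu_\lambda(X\ge R)\le\exp(-\tfrac12 Re^{R/80})$. Then
\[
e^{R_\theta/80}=\bigl(320\max\{e,\sqrt\theta\}\bigr)^4\ge 4\theta R_\theta^2,
\]
using $\theta\le\max\{e,\sqrt\theta\}^2$ and $x\ge 2\log x$. The map $R\mapsto e^{R/80}/R^2$ is nondecreasing on $[160,\infty)$, and $R_\theta>160$. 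Hence $\mu_\lambda(X\ge R)\le e^{-2\theta R^3}$ for all $R\ge R_\theta$. The tail of the layer-cake integral is then at most $e^{-\theta R_\theta^3}\le 1\le e^{\theta R_\theta^3}$, which gives the bound $2e^{\theta R_\theta^3}$. The extra ``$+R$'' you require in your inequality is unnecessary.
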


\medskip

\begin{proof}
For $d=3$, Lemmas~\ref{lem:Chernoff-channel} and
\ref{lem:lattice-sums} imply
\[
    \sup_{\lambda\in[0,1]}\mu_\lambda\Big(
    -\frac{\cM_{\lambda,K}^{\tau_rJ_\omega}}{\norm{J_\omega}_{L^\infty}}
    \geq R
    \Big)
    \leq
    \inf_{t\geq0}e^{-tR+80t^{3/2}} .
\]
For $R\geq0$, inserting $t=(R/120)^2$ and $s=R^3$ gives
\begin{align*}
    \sup_{\lambda\in[0,1]}\mu_\lambda\Big(
    \frac{(\cM_{\lambda,K}^{\tau_rJ_\omega})_-^3}
    {\norm{J_\omega}_{L^\infty}^3}
    \geq s
    \Big)
    \leq
    e^{-s/43200}.
\end{align*}
The layer-cake formula gives
\begin{align*}
&\sup_{\lambda\in[0,1]}\mathbb E_{\mu_\lambda}
\exp\left[
\theta
\frac{(\cM_{\lambda,K}^{\tau_rJ_\omega})_-^3}
{\norm{J_\omega}_{L^\infty}^3}
\right]                                                                     
\leq
1+
\theta\int_0^\infty
e^{\theta s}
\sup_{\lambda\in[0,1]}\mu_\lambda\Big(
\frac{(\cM_{\lambda,K}^{\tau_rJ_\omega})_-^3}{\norm{J_\omega}_{L^\infty}^3}
\geq s
\Big)\dd s                                                                \\
&\qquad\leq
1+\theta\int_0^\infty e^{-(1/43200-\theta)s}\dd s
=
\frac1{1-43200\theta}.
\end{align*}

For $d=2$, Lemmas~\ref{lem:Chernoff-channel} and
\ref{lem:lattice-sums} give
\[
    \sup_{\lambda\in[0,1]}\mu_\lambda\Big(
    -\frac{\cM_{\lambda,K}^{\tau_rJ_\omega}}{\norm{J_\omega}_{L^\infty}}
    \geq R
    \Big)
    \leq
    \inf_{t\geq0}e^{-tR+20t\log(2+t)} .
\]
When $R\geq80\log3$, choosing $t=e^{R/80}$ gives
\[
    \log(2+t)\leq\log(3t)=\log3+\frac{R}{80}\leq\frac{R}{40},
\]
and hence
\[
    -tR+20t\log(2+t)\leq-\frac12Re^{R/80}.
\]
With
$R_\theta=320\log\big(320\max\{e,\sqrt\theta\}\big)$, one has
$R_\theta>160$ and
\[
    \frac{\dd}{\dd R}\frac{e^{R/80}}{R^2}
    =
    \frac{e^{R/80}}{R^3}\Big(\frac{R}{80}-2\Big)\geq0
    \qquad (R\geq R_\theta).
\]
Moreover,
\[
    e^{R_\theta/80}
    =
    \big(320\max\{e,\sqrt\theta\}\big)^4
    \geq
    4\theta R_\theta^2 .
\]
Thus $e^{R/80}\geq4\theta R^2$ for $R\geq R_\theta$, and consequently
\[
    \sup_{\lambda\in[0,1]}\mu_\lambda\Big(
    -\frac{\cM_{\lambda,K}^{\tau_rJ_\omega}}{\norm{J_\omega}_{L^\infty}}
    \geq R
    \Big)
    \leq
    e^{-2\theta R^3},
    \qquad R\geq R_\theta .
\]
Therefore,
\begin{align*}
\sup_{\lambda\in[0,1]}\mathbb E_{\mu_\lambda}
\exp\left[
\theta
\frac{(\cM_{\lambda,K}^{\tau_rJ_\omega})_-^3}
{\norm{J_\omega}_{L^\infty}^3}
\right]                                                                     
\leq
e^{\theta R_\theta^3}
+
\theta\int_{R_\theta^3}^\infty e^{\theta s}e^{-2\theta s}\dd s
\leq
2e^{\theta R_\theta^3},
\end{align*}
which proves \eqref{eq:one-channel-2d}.
\end{proof}

\medskip

Recall, for $\delta>0$,
\[
    \Theta_{d,\delta}
    =
    \frac{1+\delta}{6}(2\pi)^d
    \int_\Omega\norm{J_\omega}_{L^\infty}^3\dd\nu(\omega).
\]
The parameter $\Theta_{d,\delta}$ collects the total channel strength
arising from Jensen's inequality.  Combining the one-channel estimates
then gives the following uniform bound.

\medskip

\begin{proposition}[Uniform exponential estimates]\label{prop:full-exp}
Under the assumptions of Theorem~\ref{thm:quantum-classical-main},
\begin{equation}\label{eq:full-exp-2d}
    \sup_K
    \sup_{\lambda\in[0,1]}\mathbb E_{\mu_\lambda}\left[
    e^{-(1+\delta)\cD_{\lambda,K}}\right]
    \leq
    \begin{cases}
    2\exp\left(\Theta_{2,\delta}R_{\Theta_{2,\delta}}^3\right), & d=2, \\
    (1-43200\,\Theta_{3,\delta})^{-1}, & d=3.
    \end{cases}
\end{equation}
Here $R_\theta$ is defined in \eqref{eq:Rtheta-def}.
\end{proposition}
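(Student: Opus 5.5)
The plan is to reduce the bound to the one-channel estimates of Proposition~\ref{prop:one-channel-exp} by Jensen's inequality with respect to the finite measure $\dd\nu(\omega)\,\dd r$ weighted by $\norm{J_\omega}_{L^\infty}^3$.

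\textbf{Step 1: pass to the negative part.} Since $x^3\geq -(x)_-^3$ for every real $x$, we have
\[
    -(1+\delta)\cD_{\lambda,K}[u]
    \leq
    \frac{1+\delta}{6}\int_\Omega\int_{\bT^d}
    \big(\cM_{\lambda,K}^{\tau_rJ_\omega}[u]\big)_-^3\dd r\,\dd\nu(\omega).
\]
Channels with $\norm{J_\omega}_{L^\infty}=0$ contribute zero and can be discarded.

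\textbf{Step 2: rewrite as an average over a probability measure.} Set $W:=(2\pi)^d\int_\Omega\norm{J_\omega}_{L^\infty}^3\dd\nu$, which we may assume positive, and define the probability measure
\[
    \dd\pi(\omega,r)=W^{-1}\norm{J_\omega}_{L^\infty}^3\dd\nu(\omega)\dd r .
\]
With this notation,
\[
    -(1+\delta)\cD_{\lambda,K}
    \leq
    \Theta_{d,\delta}\int
    \frac{(\cM_{\lambda,K}^{\tau_rJ_\omega})_-^3}{\norm{J_\omega}_{L^\infty}^3}\dd\pi(\omega,r),
\]
since $\frac{1+\delta}{6}W=\Theta_{d,\delta}$.

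\textbf{Step 3: Jensen and Fubini.} Jensen's inequality for the convex function $\exp$ gives
\[
    e^{-(1+\delta)\cD_{\lambda,K}}
    \leq
    \int \exp\Big(\Theta_{d,\delta}\frac{(\cM^{\tau_rJ_\omega}_{\lambda,K})_-^3}{\norm{J_\omega}_{L^\infty}^3}\Big)\dd\pi .
\]
Taking $\mathbb E_{\mu_\lambda}$ and using Tonelli's theorem (the integrand is nonnegative and jointly measurable for finite $K$, as in Proposition~\ref{prop:D-L1}), the expectation is at most $\sup_{\omega,r}$ of the one-channel exponential moments.

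\textbf{Step 4: insert the one-channel bounds.} Apply Proposition~\ref{prop:one-channel-exp} with $\theta=\Theta_{d,\delta}$. The bounds there are uniform in $K,r,\omega,\lambda$. In $d=2$ this gives $2\exp(\Theta_{2,\delta}R^3_{\Theta_{2,\delta}})$. In $d=3$ the assumption $\Theta_{3,\delta}<1/43200$ gives $(1-43200\,\Theta_{3,\delta})^{-1}$.

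The only delicate point is the measurability and integrability needed for Jensen and Fubini: one needs $\pi$ to be a genuine probability measure, which follows from \eqref{eq:J-Linf-assumption}. The analytic content is entirely in Proposition~\ref{prop:one-channel-exp}, so no real obstacle remains.
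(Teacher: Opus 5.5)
Your proposal is correct and follows the paper's proof essentially step for step. You bound $-x^3\leq x_-^3$, renormalize by the probability measure proportional to $\norm{J_\omega}_{L^\infty}^3\dd\nu(\omega)\dd r$, apply Jensen (with Tonelli), and insert Proposition~\ref{prop:one-channel-exp} with $\theta=\Theta_{d,\delta}$; the degenerate case $\int_\Omega\norm{J_\omega}_{L^\infty}^3\dd\nu=0$, which you set aside, is trivial since then $\cD_{\lambda,K}=0$.
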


\medskip

\begin{proof}
If $\int_\Omega\norm{J_\omega}_{L^\infty}^3\dd\nu(\omega)=0$, then
$\cM_{\lambda,K}^{\tau_rJ_\omega}=0$ for $\dd\nu(\omega)\dd r$-a.e.
$(\omega,r)$, and hence $\cD_{\lambda,K}=0$.  The estimates follow.  Assume the
integral is positive.  Since $-x^3\leq x_-^3$ for $x\in\R$,
\begin{align*}
-(1+\delta)\cD_{\lambda,K}
&=
-\frac{1+\delta}{6}
\int_\Omega\dd\nu(\omega)\int_{\bT^d}
\left(\cM_{\lambda,K}^{\tau_rJ_\omega}\right)^3\dd r                                  
\leq
\frac{1+\delta}{6}
\int_\Omega\dd\nu(\omega)\int_{\bT^d}
\left(\cM_{\lambda,K}^{\tau_rJ_\omega}\right)_-^3\dd r                                  \\
&=
\Theta_{d,\delta}
\frac{
\int_\Omega\dd\nu(\omega)\int_{\bT^d}
\norm{J_\omega}_{L^\infty}^3
\frac{\left(\cM_{\lambda,K}^{\tau_rJ_\omega}\right)_-^3}
{\norm{J_\omega}_{L^\infty}^3}
\dd r
}{
(2\pi)^d\int_\Omega\norm{J_\omega}_{L^\infty}^3\dd\nu(\omega)
},
\end{align*}
where the quotient is defined as zero on
$\{\norm{J_\omega}_{L^\infty}=0\}$.  Jensen's inequality applied to the
probability measure proportional to
$\norm{J_\omega}_{L^\infty}^3\dd\nu(\omega)\dd r$ gives
\begin{align*}
\mathbb E_{\mu_\lambda}e^{-(1+\delta)\cD_{\lambda,K}}                                      
\leq
\frac{
\int_\Omega\dd\nu(\omega)\int_{\bT^d}\dd r\,
\norm{J_\omega}_{L^\infty}^3
\mathbb E_{\mu_\lambda}
\exp\Big[
\Theta_{d,\delta}
\frac{\left(\cM_{\lambda,K}^{\tau_rJ_\omega}\right)_-^3}
{\norm{J_\omega}_{L^\infty}^3}
\Big]
}{
(2\pi)^d\int_\Omega\norm{J_\omega}_{L^\infty}^3\dd\nu(\omega)
}.
\end{align*}
Therefore, Proposition~\ref{prop:one-channel-exp} gives \eqref{eq:full-exp-2d}.
\end{proof}

\medskip

\begin{proposition}[Classical cubic Gibbs measure]
Assume the same assumptions as in Theorem~\ref{thm:quantum-classical-main}.  Then
\begin{equation}\label{eq:exp-conv}
    \lim_{K\to\infty}
    \sup_{\lambda\in[0,1]}
    \norm{
    e^{-\cD_{\lambda,K}}
    -
    e^{-\cD_{\lambda}}
    }_{L^1(\mu_\lambda)}
    =
    0 .
\end{equation}
Moreover,
\begin{equation}
    z_{\lambda}
    :=
    \int e^{-\cD_{\lambda}[u]}\dd\mu_\lambda(u),
    \qquad
    z
    :=
    \int e^{-\cD_{0}[u]}\dd\mu_0(u)
\end{equation}
satisfy $0<z_{\lambda},z<\infty$.  The probability measure
\begin{equation}
    \dd\mu(u)
    =
    z^{-1}e^{-\cD_{0}[u]}\dd\mu_0(u)
\end{equation}
is well-defined.
\end{proposition}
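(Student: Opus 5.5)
The proof is a uniform-integrability argument. It combines the $L^1$ convergence of $\cD_{\lambda,K}$ from Proposition~\ref{prop:D-L1} with the uniform exponential bound of Proposition~\ref{prop:full-exp}.

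\textbf{Step 1: convergence in measure and a bound for the limit.} By Proposition~\ref{prop:D-L1}, $\sup_{\lambda}\norm{\cD_{\lambda,K}-\cD_\lambda}_{L^1(\mu_\lambda)}\to0$, so $\cD_{\lambda,K}\to\cD_\lambda$ in $\mu_\lambda$-probability, uniformly in $\lambda\in[0,1]$. Passing to an a.e.-convergent subsequence for each fixed $\lambda$ and applying Fatou's lemma to Proposition~\ref{prop:full-exp} gives
\[
    \sup_\lambda\mathbb E_{\mu_\lambda}\big[e^{-(1+\delta)\cD_\lambda}\big]\le C_{d,\delta}<\infty,
\]
where $C_{d,\delta}$ is the right-hand side of \eqref{eq:full-exp-2d}.

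\textbf{Step 2: splitting the difference.} Fix $\eta\in(0,1)$ and set $E_{K,\lambda}=\{|\cD_{\lambda,K}-\cD_\lambda|>\eta\}$. On the complement of $E_{K,\lambda}$, the mean value theorem gives
\[
    |e^{-\cD_{\lambda,K}}-e^{-\cD_\lambda}|\le (e^{\eta}-1)\,e^{-\cD_\lambda},
\]
so the integral over this set is at most $(e^\eta-1)C_{d,\delta}^{1/(1+\delta)}$ by H\"older's inequality. On $E_{K,\lambda}$, bound the difference by $e^{-\cD_{\lambda,K}}+e^{-\cD_\lambda}$ and apply H\"older with exponents $1+\delta$ and $(1+\delta)/\delta$:
\[
    \int_{E_{K,\lambda}}\big(e^{-\cD_{\lambda,K}}+e^{-\cD_\lambda}\big)\dd\mu_\lambda\le 2C_{d,\delta}^{1/(1+\delta)}\,\mu_\lambda(E_{K,\lambda})^{\delta/(1+\delta)}.
\]
By Chebyshev, $\mu_\lambda(E_{K,\lambda})\le\eta^{-1}\sup_\lambda\norm{\cD_{\lambda,K}-\cD_\lambda}_{L^1(\mu_\lambda)}\to0$ uniformly in $\lambda$. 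Letting $K\to\infty$ and then $\eta\to0$ proves \eqref{eq:exp-conv}.

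\textbf{Step 3: the partition functions.} Finiteness of $z_\lambda$ and $z$ follows from Step 1 together with H\"older (or Jensen) to pass from the exponent $1+\delta$ to $1$. Positivity follows from Jensen's inequality:
\[
    z_\lambda\ge\exp\Big(-\int\cD_\lambda\dd\mu_\lambda\Big).
\]
The right-hand side is positive because $\cD_\lambda\in L^1(\mu_\lambda)$. In fact $\int\cD_{\lambda,K}\dd\mu_\lambda$ is a finite number, uniformly bounded by the $L^3$ bounds of Lemma~\ref{lem:Mass-Lp}, and it passes to the limit. Hence $\mu$ is a well-defined probability measure.

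The main obstacle is really only Step 1, namely that the exponential bound survives the limit $K\to\infty$. This is handled by Fatou once the a.e.-convergent subsequence is fixed for each $\lambda$. Uniformity in $\lambda$ causes no trouble, because both the constant $C_{d,\delta}$ and the $L^1$ convergence rate are uniform in $\lambda$. The smallness condition for $d=3$ enters only through $C_{d,\delta}<\infty$.
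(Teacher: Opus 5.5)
Your proposal is correct and follows essentially the paper's argument: Fatou along an a.e.-convergent subsequence transfers the $(1+\delta)$-exponential bound of Proposition~\ref{prop:full-exp} to $\cD_\lambda$, and uniform integrability combined with the uniform $L^1$ convergence of Proposition~\ref{prop:D-L1} then gives the result. The only differences are cosmetic: you control the bad set $\{|\cD_{\lambda,K}-\cD_\lambda|>\eta\}$ by H\"older where the paper truncates at a level $R$, and you obtain $z_\lambda>0$ from Jensen where the paper uses $e^{-\cD_\lambda}>0$ a.e.; both variants are valid.
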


\medskip

\begin{proof}
We first prove \eqref{eq:exp-conv}, which follows from a uniform-integrability argument. Proposition~\ref{prop:full-exp} and Fatou's lemma
applied to a subsequence give
\[
    \sup_{\lambda\in[0,1]}
    \mathbb E_{\mu_\lambda}
    \left[
    e^{-(1+\delta)\cD_\lambda}
    \right]
    <\infty .
\]

For $R>0$ and $\varepsilon>0$,
\begin{align*}
&
\mu_\lambda\left(
\abs{e^{-\cD_{\lambda,K}}-e^{-\cD_\lambda}}>\varepsilon
\right)
\\
&\leq
\mu_\lambda\left(
\abs{\cD_{\lambda,K}-\cD_\lambda}>\frac{\varepsilon}{2R}
\right)
+
\mu_\lambda\left(e^{-\cD_{\lambda,K}}>R\right)
+
\mu_\lambda\left(e^{-\cD_\lambda}>R\right).
\end{align*}
Indeed, on the complement of the three events on the right,
\[
    \abs{e^{-\cD_{\lambda,K}}-e^{-\cD_\lambda}}
    \leq
    \abs{\cD_{\lambda,K}-\cD_\lambda}
    \left(e^{-\cD_{\lambda,K}}+e^{-\cD_\lambda}\right)
    \leq
    \varepsilon .
\]
By Markov's inequality,
\[
    \mu_\lambda\left(e^{-\cD_{\lambda,K}}>R\right)
    \leq
    R^{-(1+\delta)}
    \mathbb E_{\mu_\lambda}
    \left[
    e^{-(1+\delta)\cD_{\lambda,K}}
    \right],
\]
and
\[
    \mu_\lambda\left(e^{-\cD_\lambda}>R\right)
    \leq
    R^{-(1+\delta)}
    \mathbb E_{\mu_\lambda}
    \left[
    e^{-(1+\delta)\cD_\lambda}
    \right].
\]
Therefore
\[
    e^{-\cD_{\lambda,K}}
    -
    e^{-\cD_\lambda}
    \to0
\]
in $\mu_\lambda$-probability, uniformly for $\lambda\in[0,1]$.

For $R>0$,
\begin{align*}
&
\norm{
e^{-\cD_{\lambda,K}}
-
e^{-\cD_\lambda}
}_{L^1(\mu_\lambda)}
\\
&\leq
\mathbb E_{\mu_\lambda}
\left[
\min\left\{
\abs{e^{-\cD_{\lambda,K}}-e^{-\cD_\lambda}},
R
\right\}
\right]
\\
&\quad+
\mathbb E_{\mu_\lambda}
\left[
e^{-\cD_{\lambda,K}}
\1_{\{e^{-\cD_{\lambda,K}}>R\}}
\right]
+
\mathbb E_{\mu_\lambda}
\left[
e^{-\cD_\lambda}
\1_{\{e^{-\cD_\lambda}>R\}}
\right].
\end{align*}
For every $\varepsilon>0$,
\begin{align*}
\mathbb E_{\mu_\lambda}
\left[
\min\left\{
\abs{e^{-\cD_{\lambda,K}}-e^{-\cD_\lambda}},
R
\right\}
\right]
\leq
\varepsilon
+
R\,
\mu_\lambda\left(
\abs{e^{-\cD_{\lambda,K}}-e^{-\cD_\lambda}}>\varepsilon
\right).
\end{align*}
Moreover,
\[
    \mathbb E_{\mu_\lambda}
    \left[
    e^{-\cD_{\lambda,K}}
    \1_{\{e^{-\cD_{\lambda,K}}>R\}}
    \right]
    \leq
    R^{-\delta}
    \mathbb E_{\mu_\lambda}
    \left[
    e^{-(1+\delta)\cD_{\lambda,K}}
    \right],
\]
and
\[
    \mathbb E_{\mu_\lambda}
    \left[
    e^{-\cD_\lambda}
    \1_{\{e^{-\cD_\lambda}>R\}}
    \right]
    \leq
    R^{-\delta}
    \mathbb E_{\mu_\lambda}
    \left[
    e^{-(1+\delta)\cD_\lambda}
    \right].
\]
Taking first $K\to\infty$, then $\varepsilon\downarrow0$, and then
$R\to\infty$, gives \eqref{eq:exp-conv}.

The estimate above implies $e^{-\cD_\lambda}\in L^1(\mu_\lambda)$ for every
$\lambda\in[0,1]$.  Since $e^{-\cD_\lambda}>0$ $\mu_\lambda$-a.e.,
\[
    0<z_\lambda<\infty
    \qquad(\lambda\in[0,1]).
\]
In particular,
\[
    0<z<\infty .
\]
Thus $\mu$ is a probability measure.
\end{proof}

\bigskip

\section{Convergence of the classical partition functions}\label{sec:convergence-classical-partition}

We next compare the $\lambda$-dependent Gaussian objects with their
limiting counterparts.  To avoid changing probability spaces as
$\lambda\to0$, all Gaussian fields are realized on a common Fourier
coordinate space.

Let
$
    \gX=\CC^{\Z^d},
$
and let $\mathbb P$ be the product probability measure under which the
coordinate functions $g_p$, $p\in\Z^d$, are independent standard complex
Gaussian variables:
\[
    \dd\mathbb P(g)
    =
    \bigotimes_{p\in\Z^d}
    \pi^{-1}e^{-\abs{g_p}^2}\dd\Re g_p\,\dd\Im g_p .
\]
Put $\mathsf{C}_{0}=h^{-1}$.  For $0\leq\lambda\leq1$, write
\[
    \mathsf{C}_{\lambda} e_p=c_\lambda(p)e_p,
    \qquad
    c_\lambda(p)=
    \begin{cases}
    \displaystyle\frac{\lambda}{e^{\lambda h(p)}-1},&\lambda>0,\\[2mm]
    h(p)^{-1},&\lambda=0,
    \end{cases}
\]
and
\[
    u_\lambda(g)
    =
    \sum_{p\in\Z^d}c_\lambda(p)^{1/2}g_pe_p.
\]

For a bounded self-adjoint operator $A$ and an integer $K$, define
\[
    \widehat{\cM}_{\lambda,K}^A(g)
    =
    \cM_{\lambda,K}^A \left( u_{\lambda}(g) \right) \qquad (0\leq \lambda\leq1).
\]
Similarly, define
\[
    \widehat{\cD}_\lambda
    =
    \frac16
    \int_\Omega\dd\nu(\omega)
    \int_{\bT^d}
    \left(
    \widehat{\cM}_\lambda^{\tau_rJ_\omega}
    \right)^3\dd r .
\]

\medskip

\begin{lemma}
\label{lem:common-Gaussian-coordinate}
For every $\sigma>d/2-1$, the series
\[
    u_\lambda(g)
    =
    \sum_{p\in\Z^d}c_\lambda(p)^{1/2}g_pe_p
\]
converges in $L^2(\gX,\mathbb P;H^{-\sigma})$, and
\[
    (u_\lambda)_\#\mathbb P=\mu_\lambda
    \qquad(0\leq\lambda\leq1).
\]

For every $1\leq p<\infty$, $\widehat{\cM}_{\lambda,K}^A$ converges uniformly for $\lambda\in[0,1]$, as
$K\to\infty$ in $L^p(\gX,\mathbb P)$.  Its limit is denoted by
$\widehat{\cM}_\lambda^A$, and
\[
    \widehat{\cM}_\lambda^A=\cM_\lambda^A\circ u_\lambda
    \qquad(\lambda>0),
    \qquad
    \widehat{\cM}_0^A=\cM_0^A\circ u_0
\]
in $L^p(\gX,\mathbb P)$.
Similarly,
\[
    \widehat{\cD}_\lambda=\cD_\lambda\circ u_\lambda
    \qquad(\lambda>0),
    \qquad
    \widehat{\cD}_0=\cD_0\circ u_0
\]
in $L^1(\gX,\mathbb P)$, and
\begin{equation}
    \widehat{\cD}_\lambda\to\widehat{\cD}_0
    \qquad\text{in }L^1(\gX,\mathbb P).
\end{equation}
Moreover,
\[
    z_\lambda
    =
    \mathbb E e^{-\widehat{\cD}_\lambda}
    \qquad(\lambda>0),
    \qquad
    z
    =
    \mathbb E e^{-\widehat{\cD}_0}.
\]
\end{lemma}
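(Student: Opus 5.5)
The plan is to handle the four claims in order: the Gaussian realization, the identification of $\widehat{\cM}$ and $\widehat{\cD}$ with the earlier objects, the convergence $\widehat{\cD}_\lambda\to\widehat{\cD}_0$ as $\lambda\to0$, and the formulas for the partition functions.

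\textbf{Gaussian realization.} Since $c_\lambda(p)\le h(p)^{-1}$ (from $e^x-1\ge x$), we have
\[
\mathbb E\norm{u_\lambda}_{H^{-\sigma}}^2=\sum_p c_\lambda(p)(1+|p|^2)^{-\sigma}\le\sum_p(1+|p|^2)^{-1-\sigma}<\infty
\]
for $\sigma>d/2-1$. The partial sums are therefore Cauchy in $L^2(\mathbb P;H^{-\sigma})$. The law of $u_\lambda$ is determined by its finite-dimensional Fourier marginals, which are independent complex Gaussians with variances $c_\lambda(p)$. Hence $(u_\lambda)_\#\mathbb P=\mu_\lambda$.

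\textbf{Identification with $\cM$ and $\cD$.} Because $\widehat{\cM}^A_{\lambda,K}=\cM^A_{\lambda,K}\circ u_\lambda$ and pushforward preserves $L^p$ norms, the uniform Cauchy property and the bounds of Lemma~\ref{lem:Mass-Lp} transfer verbatim. Thus the limit $\widehat{\cM}^A_\lambda$ exists uniformly in $\lambda$ and equals $\cM^A_\lambda\circ u_\lambda$. The same transfer applied to Proposition~\ref{prop:D-L1}, with a jointly measurable version in $(g,\omega,r)$ chosen as there, gives $\widehat{\cD}_\lambda=\cD_\lambda\circ u_\lambda$. The formulas $z_\lambda=\mathbb E e^{-\widehat{\cD}_\lambda}$ and $z=\mathbb E e^{-\widehat{\cD}_0}$ then follow by change of variables.

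\textbf{Convergence as $\lambda\to0$ (main step).} First fix $K$. Then $\widehat{\cM}^A_{\lambda,K}-\widehat{\cM}^A_{0,K}$ is a centered quadratic form in finitely many Gaussians $g_p$, $h(p)\le K$, with coefficient matrix
\[
B=P_K\big(\mathsf C_\lambda^{1/2}A\mathsf C_\lambda^{1/2}-\mathsf C_0^{1/2}A\mathsf C_0^{1/2}\big)P_K
\]
in the Fourier basis. Writing $B=P_K(\mathsf C_\lambda^{1/2}-\mathsf C_0^{1/2})A\mathsf C_\lambda^{1/2}P_K+P_K\mathsf C_0^{1/2}A(\mathsf C_\lambda^{1/2}-\mathsf C_0^{1/2})P_K$, the Gaussian moment estimate (Lemma~\ref{lem:Gaussian-Lp}) bounds the $L^p$ norm by $\lesssim\norm{B}_{\gS^2}$. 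By Hölder in Schatten classes this is at most
\[
2\norm A\,(\tr h^{-2})^{1/4}\Big(\sum_p\big|c_\lambda(p)^{1/2}-h(p)^{-1/2}\big|^4\Big)^{1/4}.
\]
This bound holds uniformly in $K$. The last sum tends to $0$ as $\lambda\to0$ by dominated convergence, since $|c_\lambda^{1/2}-h^{-1/2}|\le h^{-1/2}$ and $h^{-2}$ is summable. Combining with the uniform-in-$\lambda$ convergence in $K$ gives $\widehat{\cM}^{\tau_rJ_\omega}_\lambda\to\widehat{\cM}^{\tau_rJ_\omega}_0$ in $L^3(\mathbb P)$, with error $\lesssim\norm{J_\omega}_{L^\infty}\epsilon(\lambda)$ uniformly in $(\omega,r)$.

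\textbf{Passing to $\widehat{\cD}$.} Integrate the cubed $L^3$ error against $\dd\nu\,\dd r$ using $\int\norm{J_\omega}^3_{L^\infty}\dd\nu<\infty$. Then apply $|x^3-y^3|\le|x-y|(x^2+|xy|+y^2)$, Hölder on $\mathbb P\otimes\nu\otimes\dd r$, and the uniform $L^3$ bounds. This yields $\widehat{\cD}_\lambda\to\widehat{\cD}_0$ in $L^1(\mathbb P)$.

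The only delicate point is keeping all estimates uniform in $(\omega,r)$ and $K$, so that the limits in $K$ and $\lambda$ can be interchanged. This is exactly what the Schatten bound above provides.
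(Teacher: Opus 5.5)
Your proof is correct. Its overall architecture matches the paper's: common Gaussian coordinates, the difference $\widehat{\cM}^A_{\lambda,K}-\widehat{\cM}^A_{0,K}$ written as a centered quadratic form controlled by Lemma~\ref{lem:Gaussian-Lp}, then difference of cubes and H\"older.

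The main convergence step is organized differently, though. The paper argues as follows:
\begin{enumerate}
\item It fixes $K$ and uses only the trivial fact that the finite-rank operators $P_K\mathsf C_\lambda^{1/2}A\mathsf C_\lambda^{1/2}P_K$ converge to $P_Kh^{-1/2}Ah^{-1/2}P_K$ in $\gS^2$.
\item It gets $\widehat{\cD}_{\lambda,K}\to\widehat{\cD}_{0,K}$ in $L^1$ by dominated convergence in $(\omega,r)$.
\item It closes with a three-term triangle inequality. This relies on the uniform-in-$\lambda$ cutoff removal $\sup_\eta\norm{\widehat{\cD}_{\eta,K}-\widehat{\cD}_\eta}_{L^1}\to0$ inherited from Proposition~\ref{prop:D-L1}.
\end{enumerate}

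You instead split $B$ telescopically and apply Schatten--H\"older in $\gS^4$. This gives $\norm{B}_{\gS^2}\le 2\norm{A}(\tr h^{-2})^{1/4}\epsilon(\lambda)$ with $\epsilon(\lambda)=\big(\sum_p|c_\lambda(p)^{1/2}-h(p)^{-1/2}|^4\big)^{1/4}\to0$. The point is that this bound is uniform in $K$. Consequences:
\begin{itemize}
\item You can remove the cutoff separately for each fixed $\lambda$. For the convergence $\widehat{\cD}_\lambda\to\widehat{\cD}_0$ you therefore do not actually need the uniformity in $\lambda$ of the $K$-limit, despite your remark; it is still needed for the lemma's separate claim, which you correctly transfer from Lemma~\ref{lem:Mass-Lp}.
\item You get $\widehat{\cM}^{\tau_rJ_\omega}_\lambda\to\widehat{\cM}^{\tau_rJ_\omega}_0$ directly in every $L^p$, with an explicit rate that is uniform in $(\omega,r)$ after dividing by $\norm{J_\omega}_{L^\infty}$.
\item You reach $\norm{\widehat{\cD}_\lambda-\widehat{\cD}_0}_{L^1}\lesssim_d\epsilon(\lambda)\int_\Omega\norm{J_\omega}_{L^\infty}^3\dd\nu(\omega)$ in one step, with no dominated convergence in $(\omega,r)$ and no interchange of limits.
\end{itemize}

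In short, the paper's route needs only soft finite-dimensional convergence but leans on Proposition~\ref{prop:D-L1}. Yours costs one extra Schatten estimate and buys a quantitative rate.
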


\medskip

We postpone the proof of Lemma~\ref{lem:common-Gaussian-coordinate} to
Appendix~\ref{app:Gaussian}.

\medskip

\begin{proposition}[Convergence of the classical partition functions]\label{prop:tilted-density-common}
As $\lambda \downarrow 0$, we have $z_\lambda \to z$. Moreover, for every $s\in[1,1+\delta)$,
\begin{equation}
    z_\lambda^{-1}e^{-\widehat{\cD}_\lambda}\to z^{-1}e^{-\widehat{\cD}_0}
    \qquad\text{in }L^s(\gX,\mathbb P).
\end{equation}
Therefore, for some $\lambda_*>0$,
\begin{equation}\label{eq:tilted-density-common-uniform}
    \sup_{0<\lambda\leq\lambda_*}
    \norm{z_\lambda^{-1}e^{-\widehat{\cD}_\lambda}}_{L^{1+\delta}(\gX)}
    <\infty .
\end{equation}
\end{proposition}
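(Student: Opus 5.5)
The plan is to work on the common probability space $(\gX,\mathbb P)$ and combine convergence in probability with uniform integrability. Lemma~\ref{lem:common-Gaussian-coordinate} gives $\widehat{\cD}_\lambda\to\widehat{\cD}_0$ in $L^1(\gX,\mathbb P)$, hence in $\mathbb P$-probability. Since $x\mapsto e^{-x}$ is continuous, it follows that $e^{-\widehat{\cD}_\lambda}\to e^{-\widehat{\cD}_0}$ in probability.

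The integrability input is a uniform bound
\[
\sup_{\lambda\in[0,1]}\mathbb E\big[e^{-(1+\delta)\widehat{\cD}_\lambda}\big]<\infty .
\]
To obtain it, I would start from Proposition~\ref{prop:full-exp}, which bounds $\mathbb E_{\mu_\lambda}e^{-(1+\delta)\cD_{\lambda,K}}$ uniformly in $K$ and $\lambda$. Since $\cD_{\lambda,K}\to\cD_\lambda$ in $L^1(\mu_\lambda)$ by Proposition~\ref{prop:D-L1}, I pass to an a.e.\ convergent subsequence and apply Fatou's lemma, exactly as in the proof of the classical cubic Gibbs measure proposition. This gives $\sup_{\lambda\in[0,1]}\mathbb E_{\mu_\lambda}e^{-(1+\delta)\cD_\lambda}<\infty$. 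I then transport the bound to $\gX$ using $(u_\lambda)_\#\mathbb P=\mu_\lambda$ and the identity $\widehat{\cD}_\lambda=\cD_\lambda\circ u_\lambda$ from Lemma~\ref{lem:common-Gaussian-coordinate}.

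\textbf{Convergence of $z_\lambda$.} Fix $s\in[1,1+\delta)$. The family $\{e^{-s\widehat{\cD}_\lambda}\}_\lambda$ is bounded in $L^{(1+\delta)/s}$ with $(1+\delta)/s>1$, so it is uniformly integrable. Combined with convergence in probability, Vitali's theorem gives $e^{-\widehat{\cD}_\lambda}\to e^{-\widehat{\cD}_0}$ in $L^s(\gX,\mathbb P)$. For $s=1$ this yields $z_\lambda=\mathbb E e^{-\widehat{\cD}_\lambda}\to\mathbb E e^{-\widehat{\cD}_0}=z$. Here I use the representations of $z_\lambda$ and $z$ from Lemma~\ref{lem:common-Gaussian-coordinate}.

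\textbf{Convergence of the normalized densities.} The preceding proposition gives $0<z<\infty$, so $z_\lambda^{-1}\to z^{-1}$. Writing
\[
z_\lambda^{-1}e^{-\widehat{\cD}_\lambda}-z^{-1}e^{-\widehat{\cD}_0}
=z_\lambda^{-1}\big(e^{-\widehat{\cD}_\lambda}-e^{-\widehat{\cD}_0}\big)+(z_\lambda^{-1}-z^{-1})e^{-\widehat{\cD}_0},
\]
both terms tend to zero in $L^s$. In the first, $z_\lambda^{-1}$ stays bounded once $z_\lambda\geq z/2$. In the second, $e^{-\widehat{\cD}_0}\in L^{1+\delta}\subset L^s$.

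\textbf{The bound \eqref{eq:tilted-density-common-uniform}.} I choose $\lambda_*$ such that $z_\lambda\geq z/2$ for all $0<\lambda\leq\lambda_*$. Then
\[
\norm{z_\lambda^{-1}e^{-\widehat{\cD}_\lambda}}_{L^{1+\delta}}\leq 2z^{-1}\sup_{\lambda}\norm{e^{-\widehat{\cD}_\lambda}}_{L^{1+\delta}},
\]
and the right-hand side is finite by the uniform exponential bound.

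\textbf{Main obstacle.} The only delicate step is the uniform exponential moment of the limit $\widehat{\cD}_\lambda$, rather than of the cutoffs $\cD_{\lambda,K}$. This is handled by the Fatou argument above. It needs only a.e.\ convergence along a subsequence for each fixed $\lambda$, so no uniformity in $\lambda$ of the a.e.\ convergence is required.
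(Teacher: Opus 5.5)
Your proposal is correct and follows essentially the same route as the paper: $L^1$ convergence from Lemma~\ref{lem:common-Gaussian-coordinate}, then a uniform $(1+\delta)$-exponential moment from Proposition~\ref{prop:full-exp} via Fatou along an a.e.\ convergent subsequence, then Vitali's theorem, and finally the lower bound $z_\lambda\geq z/2$ for small $\lambda$ to handle the normalization. The only cosmetic difference is that you run the Fatou argument on $\mu_\lambda$ and transport it through $(u_\lambda)_\#\mathbb P=\mu_\lambda$, whereas the paper applies it directly to $\widehat{\cD}_{\lambda,K}\to\widehat{\cD}_\lambda$ on $\gX$.
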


\medskip

\begin{proof}
By Lemma~\ref{lem:common-Gaussian-coordinate},
\[
    \widehat{\cD}_\lambda\to\widehat{\cD}_0
    \qquad\text{in }L^1(\gX).
\]
Hence $\widehat{\cD}_\lambda\to\widehat{\cD}_0$ in probability, and
$e^{-\widehat{\cD}_\lambda}\to e^{-\widehat{\cD}_0}$ in probability.
Proposition~\ref{prop:full-exp} and Fatou's lemma
applied to a subsequence of $\widehat{\cD}_{\lambda,K}\to\widehat{\cD}_\lambda$ a.e.
give
\[
    \sup_{0\leq\lambda\leq1}
    \mathbb E e^{-(1+\delta)\widehat{\cD}_\lambda}
    <\infty .
\]
Therefore, Vitali's theorem gives
\[
    e^{-\widehat{\cD}_\lambda}\to e^{-\widehat{\cD}_0}
    \qquad\text{in }L^s(\gX)
\]
for every $s\in[1,1+\delta)$.  Hence
\[
    z_\lambda
    =
    \mathbb E e^{-\widehat{\cD}_\lambda}
    \to
    \mathbb E e^{-\widehat{\cD}_0}
    =
    z.
\]
Since $z>0$, there is $\lambda_*>0$ such that
\[
    z_\lambda\geq z/2
    \qquad(0<\lambda\leq\lambda_*).
\]
Thus
\[
\norm{z_\lambda^{-1}e^{-\widehat{\cD}_\lambda}-z^{-1}e^{-\widehat{\cD}_0}}_{L^s}
\leq
\abs{z_\lambda^{-1}-z^{-1}}
\norm{e^{-\widehat{\cD}_\lambda}}_{L^s}
+
z^{-1}
\norm{e^{-\widehat{\cD}_\lambda}-e^{-\widehat{\cD}_0}}_{L^s}
\to0.
\]
The bound \eqref{eq:tilted-density-common-uniform} follows from
\[
    \norm{z_\lambda^{-1}e^{-\widehat{\cD}_\lambda}}_{L^{1+\delta}}
    =
    z_\lambda^{-1}
    \norm{e^{-\widehat{\cD}_\lambda}}_{L^{1+\delta}},
\]
the lower bound $z_\lambda\geq z/2$ for $0<\lambda\leq\lambda_*$, and
Proposition~\ref{prop:full-exp}.
\end{proof}

\bigskip

\section{Quantum model: properties of the Gibbs states}\label{sec:quantum-model}

\begin{proposition}[Self-adjointness and the Gibbs state]\label{prop:quantum-wellposedness}
For $\lambda>0$, the operator $\bH_\lambda^{\ren}$ is self-adjoint and bounded from below.
Moreover,
\[
    Z_\lambda
    :=
    \tr_{\gF}(e^{-\bH_\lambda^{\ren}})
    <\infty,
    \qquad
    \Gamma_\lambda
    :=
    Z_\lambda^{-1}e^{-\bH_\lambda^{\ren}}
\]
defines a state on $\gF$.
\end{proposition}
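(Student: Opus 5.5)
On each $n$-particle sector the operator $\bW_\lambda^{\ren}$ is bounded and bounded below by a constant independent of $n$. The plan is to use this to treat $\bH_\lambda^{\ren}$ sector by sector as a bounded perturbation of $\lambda\dG(h)$, and then to compare traces by the min--max principle.

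\textbf{Step 1: sector structure and the definition of the channel integral.} Write $X_{\omega,r}:=\lambda\dG(\tau_rJ_\omega)-m_{\lambda,\omega}$. Since $\dG(\tau_rJ_\omega)$ commutes with $\cN$, so does every power of $X_{\omega,r}$. On $\gH^{\otimes_s n}$, the assumption $0\le J_\omega\le\norm{J_\omega}_{L^\infty}$ gives
\[
0\le \lambda\dG(\tau_rJ_\omega)|_{n}\le \lambda n\norm{J_\omega}_{L^\infty},
\]
and $\widehat{J_\omega}(0)=\int J_\omega\le(2\pi)^d\norm{J_\omega}_{L^\infty}$ gives $0\le m_{\lambda,\omega}\le\lambda N_0\norm{J_\omega}_{L^\infty}$. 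Hence $\norm{X_{\omega,r}|_n}\le\lambda(n+N_0)\norm{J_\omega}_{L^\infty}$. I would define $\bW_\lambda^{\ren}|_n$ as a weak operator integral, through its quadratic form $\psi\mapsto\frac16\iint\langle\psi,X_{\omega,r}^3\psi\rangle\dd r\dd\nu(\omega)$. The integrand is measurable by the joint measurability in Assumption~\ref{ass:J}; this is clear after expanding $\dG(\tau_rJ_\omega)$ on $\gH^{\otimes_s n}$ as a multiplication operator $\sum_jJ_\omega(x_j-r)$. It is integrable thanks to \eqref{eq:J-Linf-assumption}, which gives
\[
\norm{\bW_\lambda^{\ren}|_n}\le\tfrac16(2\pi)^d\lambda^3(n+N_0)^3\int_\Omega\norm{J_\omega}_{L^\infty}^3\dd\nu .
\]
Making this weak integral rigorous, including measurability, is the only genuinely technical point. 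Since each $X_{\omega,r}|_n$ is a real multiplication operator in position variables, the integral is simply the multiplication operator by $\frac16\iint(\lambda\sum_jJ_\omega(x_j-r)-m_{\lambda,\omega})^3\dd r\dd\nu$, and this reduces the matter to Fubini.

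\textbf{Step 2: self-adjointness and the lower bound.} On each sector, $\lambda\dG(h)|_n$ is self-adjoint with compact resolvent and $\bW_\lambda^{\ren}|_n$ is bounded and self-adjoint. By Kato--Rellich, $\bH_\lambda^{\ren}|_n$ is self-adjoint on $D(\dG(h)|_n)$, and $\bH_\lambda^{\ren}=\bigoplus_n\bH_\lambda^{\ren}|_n$ is self-adjoint on the direct-sum domain. For the lower bound, $X_{\omega,r}\ge-m_{\lambda,\omega}$ implies $X_{\omega,r}^3\ge-m_{\lambda,\omega}^3$ by the spectral theorem, since $x\mapsto x^3$ is increasing. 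Integrating, uniformly in $n$,
\[
\bW_\lambda^{\ren}\ge-c_\lambda,\qquad c_\lambda:=\tfrac16(2\pi)^d(\lambda N_0)^3\int_\Omega\norm{J_\omega}_{L^\infty}^3\dd\nu(\omega)<\infty .
\]
Hence $\bH_\lambda^{\ren}\ge\lambda\dG(h)-c_\lambda\ge-c_\lambda$.

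\textbf{Step 3: finiteness of $Z_\lambda$.} On each sector both operators have compact resolvent and satisfy $\bH_\lambda^{\ren}|_n\ge\lambda\dG(h)|_n-c_\lambda$ in the form sense. By min--max, the $j$-th eigenvalues are ordered accordingly, so
\[
\tr e^{-\bH_\lambda^{\ren}|_n}\le e^{c_\lambda}\tr e^{-\lambda\dG(h)|_n}.
\]
Summing over $n$ gives $Z_\lambda\le e^{c_\lambda}Z_0<\infty$, and $Z_\lambda>0$ because it is a trace of a positive operator with trivial kernel. Therefore $\Gamma_\lambda$ is positive, trace class and of unit trace, that is, a state.
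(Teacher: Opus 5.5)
Your proposal is correct and follows essentially the same route as the paper. The common steps are: $\bW_\lambda^{\ren}$ is bounded on each sector by $\lambda^3(n+N_0)^3\norm{J_\omega}_{L^\infty}^3$ (after integrating in $r$ and $\omega$), $\bH_\lambda^{\ren}$ is self-adjoint as a direct sum of bounded perturbations of $\lambda\dG(h)$, and there is the $n$-independent lower bound $-\frac{(2\pi)^d}{6}\int_\Omega m_{\lambda,\omega}^3\dd\nu(\omega)$, which gives $Z_\lambda\leq e^{c_\lambda}Z_0$. The paper phrases the lower bound through the monotonicity of $F_b(t)=(t-b)^3+b^3$ on $[0,\infty)$, which is exactly your observation that $X_{\omega,r}\geq -m_{\lambda,\omega}$ implies $X_{\omega,r}^3\geq -m_{\lambda,\omega}^3$. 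Your min--max justification of the trace comparison and your remark that $Z_\lambda>0$ supply details the paper leaves implicit.
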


\medskip

\begin{proof}
For $b,t\geq0$, the polynomial $F_b(t)=(t-b)^3+b^3$ satisfies
\begin{equation}
    F_b'(t)=3(t-b)^2,
    \qquad
    F_b(t)-\frac14t^3
    =
    \frac34t(t-2b)^2 .
\end{equation}
Thus $F_b$ is nondecreasing on $[0,\infty)$ and $F_b(t)\geq t^3/4$.
On $\gH^{\otimes_s n}$,
\[
    \lambda\dG(\tau_rJ_\omega)
    =
    \lambda\sum_{j=1}^n J_\omega(x_j-r)
\]
is a nonnegative multiplication operator.  Therefore
\begin{align}
&\bW_{\lambda,n}^{\ren}
+
\frac{(2\pi)^d}{6}
\int_\Omega m_{\lambda,\omega}^3\dd\nu(\omega)                              \nn\\
&=
\frac16
\int_\Omega\dd\nu(\omega)\int_{\bT^d}
F_{m_{\lambda,\omega}}
\Big(
\lambda\sum_{j=1}^nJ_\omega(x_j-r)
\Big)
\dd r
\geq0 .
\end{align}
In particular,
\[
    \bW_{\lambda,n}^{\ren}
    \geq
    -\frac{(2\pi)^d}{6}
    \int_\Omega m_{\lambda,\omega}^3\dd\nu(\omega).
\]
Since
\[
    0\leq m_{\lambda,\omega}
    =
    \lambda\tr((\tau_rJ_\omega)\gamma_0)
    \leq
    \lambda\norm{J_\omega}_{L^\infty}N_0 ,
\]
we have
\[
    \int_\Omega m_{\lambda,\omega}^3\dd\nu(\omega)
    \leq
    (\lambda N_0)^3
    \int_\Omega\norm{J_\omega}_{L^\infty}^3\dd\nu(\omega)<\infty
\]
for fixed $\lambda>0$.

The operator $\bW_{\lambda,n}^{\ren}$ is a bounded multiplication operator
on each $n$-particle sector.  Indeed,
\[
    0\leq
    \lambda\sum_{j=1}^nJ_\omega(x_j-r)
    \leq
    \lambda n\norm{J_\omega}_{L^\infty},
\]
and hence
\begin{align*}
\Big|
\Big[
\lambda\sum_{j=1}^nJ_\omega(x_j-r)-m_{\lambda,\omega}
\Big]^3
\Big|
&\leq
\left(
\lambda n\norm{J_\omega}_{L^\infty}
+
\lambda N_0\norm{J_\omega}_{L^\infty}
\right)^3                                                              \\
&=
\lambda^3(n+N_0)^3\norm{J_\omega}_{L^\infty}^3 .
\end{align*}
After integration in $r,\omega$, this gives a finite sector bound.  Thus
$\bH_{\lambda,n}^{\ren}=\lambda\sum_{j=1}^nh_j+\bW_{\lambda,n}^{\ren}$ is
self-adjoint on the domain of $\sum_{j=1}^nh_j$, and the direct sum over
$n\geq0$ is self-adjoint and bounded below by
\[
    \bH_\lambda^{\ren}
    \geq
    \lambda\dG(h)
    -
    \frac{(2\pi)^d}{6}
    \int_\Omega m_{\lambda,\omega}^3\dd\nu(\omega).
\]
Consequently,
\[
    Z_\lambda
    \leq
    \exp\left[
    \frac{(2\pi)^d}{6}
    \int_\Omega m_{\lambda,\omega}^3\dd\nu(\omega)
    \right]Z_0
    <\infty ,
\]
and $\Gamma_\lambda=Z_\lambda^{-1}e^{-\bH_\lambda^{\ren}}$ is a state.
\end{proof}

\medskip

\subsection{Free Gibbs states and coherent states}

A coherent state is a Weyl-rotation of the vacuum $\ket{0}$ in the Fock space $\gF$
\[
  \xi(u):=W(u)\ket{0}:=\exp (a^{\dagger}(u)-a(u))\ket{0}=e^{-\norm{u}^2/2}\bigoplus_{n\geq 0}\frac{1}{\sqrt{n!}}u^{\otimes n},
\]
for $u\in\gH$. 

The Weyl operator $W(u)$ is a unitary operator translating creation and annihilation operators
\[
  W(f)^* a^\dagger (g) W(f)=a^\dagger (g)+\inprod{f}{g}, \qquad W(f)^*a(g)W(f)=a(g)+\inprod{g}{f}.
\]
The $k$-particle density matrix of the coherent state $\xi(u)$ is 
\begin{equation}\label{eq:k-density-coherent-state}
    \ket{\xi(u)}\bra{\xi(u)}^{(k)}=\frac{1}{k!}\ket{u^{\otimes k}}\bra{u^{\otimes k}}.
\end{equation}

\medskip

\begin{lemma}[Coherent representation of the free state]\label{lem:free-coherent-representation}
For every $\lambda>0$,
\begin{equation}\label{eq:free-coherent-representation}
    \Gamma_0
    =
    \int
    \left|\xi(u/ \sqrt{\lambda})\right\rangle
    \left\langle\xi(u/ \sqrt{\lambda})\right|
    \dd\mu_\lambda(u).
\end{equation}
\end{lemma}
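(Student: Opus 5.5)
The plan is to reduce the identity to a product over Fourier modes. Both sides are products, so it suffices to check them one mode at a time.

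\textbf{Step 1: factorization.} The Fock space factorizes as $\gF\simeq\bigotimes_{p\in\Z^d}\gF(\CC e_p)$, and the free state factorizes accordingly:
\[
\Gamma_0=\bigotimes_p (1-e^{-\lambda h(p)})\,e^{-\lambda h(p)a_p^\dagger a_p}.
\]
For $u=\sum_p u_pe_p\in\gH$ one has $W(u)=\bigotimes_p W(u_pe_p)$. Hence the coherent state factorizes as $\xi(u/\sqrt\lambda)=\bigotimes_p\xi_p(u_p/\sqrt\lambda)$. The Gaussian measure $\mu_\lambda$ is also a product of independent complex Gaussians, with variances $c_\lambda(p)=\lambda/(e^{\lambda h(p)}-1)$. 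Since $\mathsf C_\lambda$ is trace class for $\lambda>0$, $\mu_\lambda$ is supported on $\gH$ and the right-hand side is well defined.

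\textbf{Step 2: the single-mode identity.} For one mode with occupation weight $\beta=\lambda h(p)$ and rescaled variance $\sigma^2=c_\lambda(p)/\lambda=(e^\beta-1)^{-1}$, I claim
\[
\int_{\CC}|\xi(z)\rangle\langle\xi(z)|\,\frac{e^{-|z|^2/\sigma^2}}{\pi\sigma^2}\,\dd^2z=(1-e^{-\beta})e^{-\beta a^\dagger a}.
\]
To verify it, compute matrix elements in the number basis. Using $\langle n|\xi(z)\rangle=e^{-|z|^2/2}z^n/\sqrt{n!}$, rotational invariance kills the off-diagonal terms. The diagonal terms are
\[
\frac{1}{\sigma^2 n!}\int_0^\infty s^n e^{-s(1+\sigma^{-2})}\dd s=\frac{\sigma^{2n}}{(1+\sigma^2)^{n+1}}.
\]
With $\sigma^2=(e^\beta-1)^{-1}$ this equals $(1-e^{-\beta})e^{-n\beta}$, as required.

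\textbf{Step 3: passing to infinitely many modes.} This is the main, if mild, obstacle. The plan is to compare both sides as positive trace-class operators on finite-particle vectors, or on vectors that depend on finitely many modes. For $\Psi,\Phi$ supported on modes with $h(p)\le K$, write $P_K=\mathds 1_{\{h\le K\}}$. The matrix element of the right-hand side involves only the marginal on $P_K\gH$. This is because the vacuum overlaps of the high modes $\bigotimes_{h(p)>K}\xi_p$ produce the high-mode component of the state, $\int|\xi((1-P_K)u/\sqrt\lambda)\rangle\langle\cdot|\dd\mu$. Tracing it out gives $1$; equivalently, it is a product state whose finite-mode matrix elements factorize, by Fubini.

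The finitely many low modes then reduce to Step 2. Taking finite products of the one-mode identity matches $\Gamma_0$ restricted to the low modes, tensored with the high-mode Gibbs factor. For the high-mode factor, Step 2 applied mode by mode gives, for each finite set of modes, equality of the corresponding partial traces.

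To conclude, note that both sides have trace $1$. The trace of the right-hand side is $\int\|\xi\|^2\dd\mu_\lambda=1$, by Tonelli. Both sides are positive, and their matrix elements agree on the dense set of finitely supported vectors. Boundedness then gives equality as operators, which proves \eqref{eq:free-coherent-representation}, with the integral understood in the weak sense and convergent in trace norm.
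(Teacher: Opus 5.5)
Your proof is correct, but it takes a different route from the paper. The paper argues structurally. Both sides of \eqref{eq:free-coherent-representation} are gauge-invariant quasi-free states, so it suffices to match one-body density matrices. By \eqref{eq:k-density-coherent-state}, the right side has one-body density matrix $\lambda^{-1}\int|u\rangle\langle u|\,\dd\mu_\lambda=\mathsf C_\lambda/\lambda=\gamma_0$. This is short, but it uses without proof that a Gaussian average of coherent states is quasi-free. You instead verify the identity directly, mode by mode, in an eigenbasis of $h$. Your Step 2 is the Glauber--Sudarshan $P$-representation of a single-mode thermal state, and your computation $\sigma^{2n}/(1+\sigma^2)^{n+1}=(1-e^{-\beta})e^{-n\beta}$ is correct. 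This makes the argument elementary and self-contained, at the price of infinite-mode bookkeeping. It is no less general than the paper's, since it only uses an eigenbasis of $\gamma_0$ and trace-class $\mathsf C_\lambda$.

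One sentence in Step 3 is wrong as written, though the identity still closes. Test on $\Psi=\psi\otimes\ket{0}_{>K}$ and $\Phi=\phi\otimes\ket{0}_{>K}$ with $\psi,\phi\in\gF(P_K\gH)$. Then the high modes are not traced out. One has $\langle\Psi,\xi(v)\rangle=e^{-\|(1-P_K)v\|^2/2}\langle\psi,\xi(P_Kv)\rangle$. Since $P_Ku$ and $(1-P_K)u$ are independent under $\mu_\lambda$, the right side therefore picks up the factor
\[
\int e^{-\|(1-P_K)u\|^2/\lambda}\,\dd\mu_\lambda(u)=\prod_{h(p)>K}\bigl(1+c_\lambda(p)/\lambda\bigr)^{-1}=\prod_{h(p)>K}\bigl(1-e^{-\lambda h(p)}\bigr),
\]
not $1$. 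This is precisely the factor $Z_{0,K}/Z_0$ that $\Gamma_0$ produces on the left. It is the $n=0$ case of your Step 2, with the infinite product justified by dominated convergence. Then agreement of matrix elements on this dense set, together with boundedness, gives equality of operators, as you say.

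If you prefer the ``tracing out gives $1$'' formulation, you are comparing the marginals $\tr_{>K}$ of both sides. That is also valid, but it then requires a different final step. You would need that a normal state is determined by its expectations on $\bigcup_K\mathcal B(\gF(P_K\gH))\otimes\1$, a density-of-local-observables argument, rather than vector matrix elements. Pick one formulation and carry it through.
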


\medskip

\begin{proof}
Both sides of \eqref{eq:free-coherent-representation} are gauge-invariant
quasi-free states.  Hence it suffices to show that their one-particle density matrices coincide. We have from \eqref{eq:k-density-coherent-state} that 
\[
   \int
   \left|\xi(u/ \sqrt{\lambda})\right\rangle
   \left\langle\xi(u/ \sqrt{\lambda})\right|^{(1)}
   \dd\mu_\lambda(u)
   =\frac{1}{\lambda}\int \ket{u}\bra{u}\dd\mu_{\lambda}(u)
   =\frac{\mathsf{C}_{\lambda}}{\lambda}=\gamma_0.
\]
Thus, the right-hand side of \eqref{eq:free-coherent-representation} has one-particle
density matrix $\gamma_0$, which is precisely the one-particle density matrix of the free Gibbs state $\Gamma_0$.
\end{proof}

The remaining coherent-state estimates involve the expected particle
number $N_0$ of the free Gibbs state; we record its asymptotics next.

\medskip

\begin{lemma}[Asymptotic estimate for the number of particles]\label{lem:Clambda-trace-bound}
As $\lambda\downarrow0$,
\begin{equation}
    N_0:=\tr_{\gF}(\cN\Gamma_0)=\tr_{\gH}(\gamma_0)
    =
    \begin{cases}
    -\frac{\pi\log(\lambda)}{\lambda}+O(\lambda^{-1}),&d=2,\\[1mm]
    \frac{\pi^{3/2}\zeta(3/2)}{\lambda^{3/2}}+O(\lambda^{-1}),&d=3.
    \end{cases}
\end{equation}
\end{lemma}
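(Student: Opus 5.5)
We write the lattice sum as a Riemann sum. Since $h(p)=|p|^2+1$,
\[
    N_0=\sum_{p\in\Z^d}\frac{1}{e^{\lambda(|p|^2+1)}-1},
\]
and the plan is to compare it with the integral
\[
    I_d(\lambda)=\int_{\R^d}\frac{\dd x}{e^{\lambda(|x|^2+1)}-1},
\]
showing $N_0-I_d(\lambda)=O(\lambda^{-1})$ and then computing the leading term of $I_d$ exactly.

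\textbf{Step 1: the integral.} Substituting $x=\lambda^{-1/2}y$ gives
\[
    I_d=\lambda^{-d/2}\int_{\R^d}\frac{\dd y}{e^{|y|^2+\lambda}-1},
\]
and we expand $(e^{s}-1)^{-1}=\sum_{n\ge1}e^{-ns}$. Gaussian integration yields
\[
    I_d=\lambda^{-d/2}\pi^{d/2}\sum_{n\ge1}n^{-d/2}e^{-n\lambda}
    =\Big(\frac{\pi}{\lambda}\Big)^{d/2}\mathrm{Li}_{d/2}(e^{-\lambda}).
\]
For $d=3$ we have $\mathrm{Li}_{3/2}(e^{-\lambda})=\zeta(3/2)-2\sqrt{\pi}\,\lambda^{1/2}+O(\lambda)$, which gives $\pi^{3/2}\zeta(3/2)\lambda^{-3/2}+O(\lambda^{-1})$. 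For $d=2$ we have $\mathrm{Li}_1(e^{-\lambda})=-\log(1-e^{-\lambda})=-\log\lambda+O(\lambda)$, which gives $-\pi\lambda^{-1}\log\lambda+O(1)$.

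\textbf{Step 2: sum versus integral (main step).} Let $f(x)=(e^{\lambda(|x|^2+1)}-1)^{-1}$. This function is radial and decreasing in $|x|$, and it satisfies $f\le(\lambda(|x|^2+1))^{-1}$ since $e^t-1\ge t$. We compare $f(p)$ with the average of $f$ over the unit cube centred at $p$. Writing $\partial_j f=-2\lambda x_j e^{\lambda h}(e^{\lambda h}-1)^{-2}$ and using $e^{t}/(e^t-1)^2\le t^{-2}+1/4\cdot$(bounded)$\lesssim t^{-2}(1+t)e^{-t/2}$, we get
\[
    |\nabla f(x)|\lesssim \lambda|x|\,\frac{1+\lambda\langle x\rangle^2}{\lambda^2\langle x\rangle^4}\,e^{-\lambda\langle x\rangle^2/2}.
\]
Summing this over the lattice, the error $\big|\sum_p f(p)-\int f\big|$ is bounded by $\int_{\R^d}\sup_{|z|\le\sqrt d}|\nabla f(x+z)|\,\dd x$ plus a boundary term near the origin, which is $O(\lambda^{-1})$ because $f(0)\le\lambda^{-1}$. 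The gradient integral is bounded by
\[
    \lambda^{-1}\int\langle x\rangle^{-3}e^{-c\lambda\langle x\rangle^2}\dd x+\int\langle x\rangle^{-1}e^{-c\lambda\langle x\rangle^2}\dd x .
\]
In $d=2$ these terms are $O(1)$ and $O(\lambda^{-1/2})$ respectively. In $d=3$ they are $O(\lambda^{-1}\log\lambda^{-1})$ and $O(\lambda^{-1})$.

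The main obstacle is this $d=3$ logarithm, which would spoil the claimed $O(\lambda^{-1})$ error. To remove it, we will use second-order (midpoint) cancellation instead of a first-order bound. The midpoint rule gives $\sum_p f(p)-\int f=O\big(\sum\sup|\nabla^2 f|\big)$, and for the second derivatives we have $|\nabla^2 f|\lesssim\lambda^{-1}\langle x\rangle^{-4}$ times Gaussian decay plus milder terms. This second-order sum is $O(\lambda^{-1})$ in $d=3$, because $\langle x\rangle^{-4}$ is integrable there. Alternatively, in $d=3$ we can apply Poisson summation to each term $e^{-n\lambda(|p|^2+1)}$. The nonzero frequencies contribute $\sum_n (\pi/(n\lambda))^{3/2}e^{-n\lambda}\sum_{k\ne0}e^{-\pi^2|k|^2/(n\lambda)}$. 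Splitting at $n\lambda\sim1$, this contribution is $O(\lambda^{-1})$. This Poisson route is the one we would try first, since it is exact and also handles $d=2$, giving an $O(\lambda^{-1})$ error there as well.

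Combining Steps 1 and 2 yields both asymptotic formulas.
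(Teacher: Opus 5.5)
Your proposal is correct and follows the paper's route: the paper writes $N_0=\lambda^{-d/2}\int_{\R^d}(e^{|k|^2+\lambda}-1)^{-1}\dd k+O(\lambda^{-1})$, expands the integral, and defers the details to \cite[Lemma~B.1]{LNR21}; your polylogarithm evaluation, together with Poisson summation or the midpoint rule, supplies exactly those details, and the second-order/Poisson step correctly avoids the logarithmic loss that a first-order comparison gives in $d=3$. One harmless slip: in $d=2$ the term $\lambda^{-1}\int\langle x\rangle^{-3}e^{-c\lambda\langle x\rangle^2}\dd x$ is $O(\lambda^{-1})$, not $O(1)$, which is still within the claimed error.
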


\medskip

\begin{proof}
    Recall that
    \begin{align*}
    \tr_{\gH}(\gamma_0)=\sum_{p\in\Z^d}\frac1{e^{\lambda(|p|^2+1)}-1}&=\frac{1}{\lambda^{d/2}}\int_{\R^d}\frac{\dd k}{e^{|k|^2+\lambda}-1}+O(\lambda^{-1})\\
    &=
    \begin{cases}
    -\frac{\pi\log(\lambda)}{\lambda}+O(\lambda^{-1}),&d=2,\\[1mm]
    \frac{\pi^{3/2}\zeta(3/2)}{\lambda^{3/2}}+O(\lambda^{-1}),&d=3.
    \end{cases}
    \end{align*}
    See \cite[Lemma B.1]{LNR21} for the details of the asymptotic expansion.
\end{proof}

\subsection{Exponential estimates}

We now prove the quantum analogue of the classical exponential estimate in Proposition~\ref{prop:full-exp}. 

\medskip

\begin{proposition}[Quantum exponential estimate]\label{prop:quantum-exponential}
Under the assumptions of Theorem~\ref{thm:quantum-classical-main},
\begin{equation}\label{eq:quantum-full-exponential}
    \sup_{0<\lambda\leq1}
    \tr_{\gF}
    \left(
    e^{-(1+\delta)\bW_\lambda^{\ren}}\Gamma_0
    \right)
    \leq
    \begin{cases}
    2\exp\big(\Theta_{2,\delta}R_{\Theta_{2,\delta}}^3\big),&d=2,\\[1mm]
    (1-43200\Theta_{3,\delta})^{-1},&d=3.
    \end{cases}
\end{equation}
Here $R_\theta$ is defined in \eqref{eq:Rtheta-def}.
\end{proposition}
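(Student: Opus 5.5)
The plan is to reproduce the classical proof of Proposition~\ref{prop:full-exp} at the operator level, in three steps: a Jensen reduction to one channel, a one-channel Chernoff bound, and the same layer-cake integration. The key structural fact is that on each $n$-particle sector $\bW_\lambda^{\ren}$ and every $X_{\omega,r}:=\lambda\dG(\tau_rJ_\omega)-m_{\lambda,\omega}$ are commuting multiplication operators in $(x_1,\dots,x_n)$.

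\textbf{Step 1 (Jensen reduction).} Using $-x^3\le x_-^3$ pointwise, one has, as multiplication operators,
\[
    -(1+\delta)\bW_\lambda^{\ren}\le \Theta_{d,\delta}\,\mathrm{avg}_{\omega,r}\big[(X_{\omega,r})_-^3/\norm{J_\omega}_{L^\infty}^3\big],
\]
where the average is over the probability measure proportional to $\norm{J_\omega}_{L^\infty}^3\dd\nu\,\dd r$. Pointwise Jensen in $(x_1,\dots,x_n)$ then gives the operator inequality
\[
    e^{-(1+\delta)\bW^{\ren}_\lambda}\le \mathrm{avg}_{\omega,r}\exp\big[\Theta_{d,\delta}(X_{\omega,r})_-^3/\norm{J_\omega}^3_{L^\infty}\big].
\]
Taking the trace against $\Gamma_0\ge0$ and applying Fubini, it suffices to prove the one-channel bound
\[
    \tr_{\gF}\big(e^{\theta (X_{\omega,r})_-^3/\norm{J_\omega}^3}\Gamma_0\big)\le
    \begin{cases}
    (1-43200\theta)^{-1}, & d=3,\\
    2e^{\theta R_\theta^3}, & d=2,
    \end{cases}
\]
uniformly in $\lambda,\omega,r$.

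\textbf{Step 2 (one-channel law via coherent states).} The functional $g\mapsto\tr(g(X)\Gamma_0)$ is a probability law for the self-adjoint operator $X=X_{\omega,r}$. The layer-cake argument of Proposition~\ref{prop:one-channel-exp} therefore applies verbatim once we have the tail bound $\tr(\1_{\{X\le -R\norm{J}\}}\Gamma_0)\le e^{-tR}\tr(e^{-tX/\norm J}\Gamma_0)$. To compute the moment generating function, normalize $\norm{J_\omega}=1$, write $A=\tau_rJ_\omega$, and use Lemma~\ref{lem:free-coherent-representation} together with $e^{-s\dG(A)}=\Gamma(e^{-sA})$ and $\langle\xi(v),\Gamma(B)\xi(v)\rangle=e^{-\langle v,(1-B)v\rangle}$. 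With $A_t:=(1-e^{-t\lambda A})/(t\lambda)$, which satisfies $0\le A_t\le A$, this gives
\[
    \tr(e^{-tX}\Gamma_0)=e^{tm}\,\mathbb E_{\mu_\lambda}e^{-t\langle u,A_tu\rangle}
    = e^{t(m-m_t)}\;\mathbb E_{\mu_\lambda}e^{-t\cM_\lambda^{A_t}},\qquad m_t=\tr(A_t\mathsf C_\lambda).
\]
The Gaussian factor is controlled exactly as in Lemma~\ref{lem:Chernoff-channel}, with the same lattice-sum bound, since $0\le \mathsf C_\lambda^{1/2}A_t\mathsf C_\lambda^{1/2}\le h^{-1}$ and $f(x)=x-\log(1+x)$ is increasing.

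\textbf{Step 3 (main obstacle: the quantum correction $e^{t(m-m_t)}$).} Since $1-e^{-x}\ge x-x^2/2$, we get $0\le m-m_t\le \tfrac{t\lambda}2\tr(A^2\mathsf C_\lambda)\le \tfrac12 t\lambda^2N_0$. This is small only for $t\ll(\lambda^2N_0)^{-1}$, whereas the optimal choice of $t$ grows with $R$ ($t\sim R^2$ for $d=3$, $t=e^{R/80}$ for $d=2$). The remedy I would try first exploits $X\ge -m$, that is, $(X)_-\le m\le\lambda N_0$: the tail probability vanishes for $R>\lambda N_0$, so only $R\le\lambda N_0$ needs to be treated. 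I would then cap $t$ at a scale where $t\lambda^2N_0\lesssim R$, so that the correction is absorbed into $-tR/2$ at the cost of constants. Alternatively, I would avoid the Taylor loss altogether by bounding $\tr(e^{-tX}\Gamma_0)$ directly: use $e^{tm}\det(1+\gamma_0(1-e^{-t\lambda A}))^{-1}$ and the exact identity $t\lambda\tr(A\gamma_0)-\tr B=\tr\gamma_0(t\lambda A-1+e^{-t\lambda A})$, and compare eigenvalues of $B=\gamma_0^{1/2}(1-e^{-t\lambda A})\gamma_0^{1/2}$ with those of $t\lambda\gamma_0^{1/2}A\gamma_0^{1/2}$ through $f$. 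I expect the bookkeeping here, namely keeping the constants $43200$ and $R_\theta$ unchanged uniformly in $\lambda\in(0,1]$, to be the delicate point. If exact constants cannot be kept, a slightly worse constant still suffices for the subsequent use, in which $\delta$ may be shrunk.
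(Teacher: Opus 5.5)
Your Steps 1 and 2 are correct and agree with the paper: Jensen in the joint spectral calculus of the commuting multiplication operators, then a one-channel Chernoff bound and the layer-cake integration of Proposition~\ref{prop:one-channel-exp}. Step 3, however, has a genuine gap, and it is exactly where the constants $43200$ and $R_\theta$ are decided.

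You split the log-Laplace transform into two pieces: the correction $t(m-m_t)=\tr\big(\gamma_0(t\lambda A-1+e^{-t\lambda A})\big)$, and $\sum_j f(b_j)$, with $b_j$ the eigenvalues of $\gamma_0^{1/2}(1-e^{-t\lambda A})\gamma_0^{1/2}$. Bounding these two pieces separately is intrinsically lossy. The first is genuinely of size $\tfrac12t^2\lambda^2N_0$ (e.g.\ for constant $J_\omega$), while you bound the second by the full classical lattice sum. Since $A$ and $\gamma_0$ do not commute, you cannot recombine them mode by mode. Your ``alternative'' determinant identity is the same split, so it does not remove the loss.

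The cap-$t$ remedy only tames the correction. Normalize $\norm{J_\omega}_{L^\infty}=1$ and take $d=3$, $R\le m\le\lambda N_0$, $t=(R/120)^2$. Then the correction is at most $tR\,\lambda^3N_0^2/28800$. Since $\lambda^3N_0^2\to\pi^3\zeta(3/2)^2\neq0$, the resulting tail constant is strictly larger than $43200$ for all small $\lambda$; absorbing the correction into $-tR/2$ as you suggest gives $8\cdot43200$. Even this requires a non-asymptotic bound on $\sup_{0<\lambda\le1}\lambda^3N_0^2$ (resp.\ on $\lambda N_0$ for $d=2$), which Lemma~\ref{lem:Clambda-trace-bound} does not provide.

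Your fallback is not valid either. The hypothesis $\Theta_{3,\delta}<1/43200$ allows $\delta>0$ arbitrarily small. So for a fixed constant $C'>43200$ there may be no $\delta'>0$ with $\Theta_{3,\delta'}<1/C'$, and then your tail bound no longer controls $\tr(e^{-(1+\delta')\bW_\lambda^{\ren}}\Gamma_0)$ uniformly in $\lambda$.

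The missing idea is never to isolate the correction; this is what the paper does.
\begin{itemize}
\item Concavity of $x\mapsto1-e^{-\lambda tx}$ on $[0,1]$ gives $1-e^{-\lambda tA/a}\geq(1-e^{-\lambda t})A/a$.
\item Operator monotonicity of $\log(1+\cdot)$ then bounds the whole exponent by $\sum_j\phi(\beta_j)$. Here $\phi(x)=\lambda tx-\log(1+(1-e^{-\lambda t})x)$ is nondecreasing and the $\beta_j$ are the eigenvalues of $\gamma_0^{1/2}(A/a)\gamma_0^{1/2}\leq\gamma_0$.
\item Min--max replaces $\beta_j$ by the occupations $n_p=(e^{\lambda h(p)}-1)^{-1}$.
\item Per mode, with $s=\lambda t$, one has $\phi(n)\leq\min\{sn,\tfrac{s^2}{2}n(1+n)\}$.
\item The identity $\lambda^2n_p(1+n_p)=\big(\lambda/(2\sinh(\lambda h(p)/2))\big)^2\leq h(p)^{-2}$ shows that the quantum term linear in $n_p$ is absorbed into the classical $t^2/(2h(p)^2)$ on high modes. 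On low modes it is already contained in the trivial bound $sn_p\leq t/h(p)$.
\end{itemize}
This reproduces exactly the classical lattice sum $\sum_p\min\{t/h(p),t^2/(2h(p)^2)\}$. Lemma~\ref{lem:lattice-sums} and the computations of Proposition~\ref{prop:one-channel-exp} then go through verbatim with the same constants, uniformly in $\lambda\in(0,1]$ and with no information on $N_0$.
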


\medskip

\begin{proof}
Let $A$ be a bounded multiplication operator satisfying
\[
    0\leq A\leq a,
    \qquad
    a>0 .
\]
For every $t\geq0$, Lemma~\ref{lem:quasi-free-trace-identity}, applied to
$\lambda tA/a$, gives
\begin{align}
&
\tr_{\gF}
\Big[
\exp\Big(
-t\frac{\lambda(\dG(A)-\tr(A\gamma_0))}{a}
\Big)\Gamma_0
\Big]
\nn\\
&=
\exp\Big(
\frac{\lambda t}{a}\tr(A\gamma_0)
\Big)
\tr_{\gF}
\Big(
e^{-\dG(\lambda tA/a)}\Gamma_0
\Big)
\nn\\
&=
\exp\Big(
\frac{\lambda t}{a}\tr(A\gamma_0)
-
\tr_{\gH}
\log\big[
1+
\gamma_0^{1/2}
(1-e^{-\lambda tA/a})
\gamma_0^{1/2}
\big]
\Big).
\label{eq:quantum-one-channel-log-start}
\end{align}
Since $0\leq A/a\leq1$ and $x\mapsto1-e^{-\lambda tx}$ is concave on
$[0,1]$,
\[
    1-e^{-\lambda tA/a}
    \geq
    (1-e^{-\lambda t})\frac Aa .
\]
Hence
\[
    \gamma_0^{1/2}
    (1-e^{-\lambda tA/a})
    \gamma_0^{1/2}
    \geq
    (1-e^{-\lambda t})
    \gamma_0^{1/2}\frac Aa\gamma_0^{1/2}.
\]
Since $X\mapsto\log(1+X)$ is operator monotone on nonnegative operators,
\begin{align}
\tr_{\gH}
\log\left[
1+
\gamma_0^{1/2}
(1-e^{-\lambda tA/a})
\gamma_0^{1/2}
\right]
\geq
\tr_{\gH}
\log\left[
1+
(1-e^{-\lambda t})
\gamma_0^{1/2}\frac Aa\gamma_0^{1/2}
\right].
\label{eq:operator-log-monotone-step}
\end{align}
Combining \eqref{eq:quantum-one-channel-log-start} and
\eqref{eq:operator-log-monotone-step}, we obtain
\begin{align}
&
\log
\tr_{\gF}
\Big[
\exp\Big(
-t\frac{\lambda(\dG(A)-\tr(A\gamma_0))}{a}
\Big)\Gamma_0
\Big]
\nn\\
&\leq
\lambda t\,\tr_{\gH}
\Big(
\gamma_0^{1/2}\frac Aa\gamma_0^{1/2}
\Big)
-
\tr_{\gH}
\log\Big[
1+
(1-e^{-\lambda t})
\gamma_0^{1/2}\frac Aa\gamma_0^{1/2}
\Big].
\label{eq:one-channel-log-after-monotone}
\end{align}
Let $(\beta_j)_j$ be the eigenvalues of
$\gamma_0^{1/2}(A/a)\gamma_0^{1/2}$, counted with multiplicity and arranged
in decreasing order.  The quadratic form inequality
\[
    0
    \leq
    \gamma_0^{1/2}\frac Aa\gamma_0^{1/2}
    \leq
    \gamma_0
\]
and the min--max principle give
\[
    0\leq \beta_j\leq n_j,
\]
where $(n_j)_j$ is the decreasing rearrangement of
\[
    n_p=(e^{\lambda h(p)}-1)^{-1},
    \qquad p\in\Z^d .
\]
For $x\geq0$,
\[
    \frac{\dd}{\dd x}
    \left[
    \lambda tx-\log\bigl(1+(1-e^{-\lambda t})x\bigr)
    \right]
    =
    \lambda t
    -
    \frac{1-e^{-\lambda t}}
    {1+(1-e^{-\lambda t})x}
    \geq
    \lambda t-(1-e^{-\lambda t})
    \geq0.
\]
Therefore \eqref{eq:one-channel-log-after-monotone} implies
\begin{align}
\log
\tr_{\gF}
\Big[
\exp\Big(
-t\frac{\lambda(\dG(A)-\tr(A\gamma_0))}{a}
\Big)\Gamma_0
\Big]
\leq
\sum_{p\in\Z^d}
\big[
\lambda t\,n_p
-
\log\bigl(1+(1-e^{-\lambda t})n_p\bigr)
\big].
\label{eq:one-channel-log-sum}
\end{align}
For $s\geq0$ and $n\geq0$,
\[
    sn-\log(1+(1-e^{-s})n)
    \leq
    sn.
\]
Moreover,
\begin{align*}
sn-\log(1+(1-e^{-s})n)
&=
\int_0^s\int_0^\sigma
\left[
\frac{e^{-\tau}n}
{1+(1-e^{-\tau})n}
+
\frac{e^{-2\tau}n^2}
{(1+(1-e^{-\tau})n)^2}
\right]
\dd\tau\dd\sigma
\nn\\
&\leq
\int_0^s\int_0^\sigma n(1+n)\dd\tau\dd\sigma
=
\frac{s^2}{2}n(1+n).
\end{align*}
Taking $s=\lambda t$ and $n=n_p$, we get
\[
    \lambda t\,n_p
    \leq
    t\,\frac{\lambda}{e^{\lambda h(p)}-1}
    \leq
    \frac{t}{h(p)}
\]
and
\begin{align*}
\frac{(\lambda t)^2}{2}n_p(1+n_p)
=
\frac{(\lambda t)^2}{2}
\frac{e^{\lambda h(p)}}{(e^{\lambda h(p)}-1)^2}
=
\frac{t^2}{2h(p)^2}
\left(
\frac{\lambda h(p)}
{2\sinh(\lambda h(p)/2)}
\right)^2
\leq
\frac{t^2}{2h(p)^2}.
\end{align*}
Thus \eqref{eq:one-channel-log-sum} yields
\begin{align}
\tr_{\gF}
\Big[
\exp\Big(
-t\frac{\lambda(\dG(A)-\tr(A\gamma_0))}{a}
\Big)
\Gamma_0
\Big]
\leq
\exp\Big(
\sum_{p\in\Z^d}
\min\left\{
\frac{t}{h(p)},
\frac{t^2}{2h(p)^2}
\right\}
\Big).
\label{eq:quantum-one-channel-laplace}
\end{align}

For $\nu$-a.e. $\omega$ with
$\norm{J_\omega}_{L^\infty}>0$, take
\[
    A=\tau_rJ_\omega,
    \qquad
    a=\norm{J_\omega}_{L^\infty}.
\]
By Lemma~\ref{lem:lattice-sums} and
\eqref{eq:quantum-one-channel-laplace},
\begin{align}
\tr_{\gF}
\Big[
\exp\Big(
-t
\frac{
\lambda\dG(\tau_rJ_\omega)-m_{\lambda,\omega}
}
{\norm{J_\omega}_{L^\infty}}
\Big)
\Gamma_0
\Big]
\leq
\begin{cases}
\exp\big(20t\log(2+t)\big),&d=2,\\[1mm]
\exp\big(80t^{3/2}\big),&d=3.
\end{cases}
\label{eq:quantum-one-channel-laplace-dimensional}
\end{align}
If $d=3$, then for every $R\geq0$,
\begin{align*}
\tr_{\gF}
\left[
\1_{
\left\{
-
\frac{
\lambda\dG(\tau_rJ_\omega)-m_{\lambda,\omega}
}
{\norm{J_\omega}_{L^\infty}}
\geq R
\right\}
}
\Gamma_0
\right]
\leq
\inf_{t\geq0}\exp(-tR+80t^{3/2})
\leq
\exp(-R^3/43200),
\end{align*}
where the last inequality follows by choosing $t=(R/120)^2$.  Hence, for
$0\leq\theta<1/43200$,
\begin{align}
&
\tr_{\gF}
\Big[
\exp\Big(
\theta
\frac{
\left(
\lambda\dG(\tau_rJ_\omega)-m_{\lambda,\omega}
\right)_-^3
}
{\norm{J_\omega}_{L^\infty}^3}
\Big)
\Gamma_0
\Big]
\nn\\
&=
1+
\theta
\int_0^\infty
e^{\theta s}
\tr_{\gF}
\left[
\1_{
\left\{
\frac{
\left(
\lambda\dG(\tau_rJ_\omega)-m_{\lambda,\omega}
\right)_-^3
}
{\norm{J_\omega}_{L^\infty}^3}
\geq s
\right\}
}
\Gamma_0
\right]
\dd s
\nn\\
&\leq
1+
\theta
\int_0^\infty
e^{-(1/43200-\theta)s}\dd s
=
\frac1{1-43200\theta}.
\label{eq:quantum-one-channel-3d}
\end{align}
If $d=2$, then \eqref{eq:quantum-one-channel-laplace-dimensional} gives
\[
\tr_{\gF}
\left[
\1_{
\left\{
-
\frac{
\lambda\dG(\tau_rJ_\omega)-m_{\lambda,\omega}
}
{\norm{J_\omega}_{L^\infty}}
\geq R
\right\}
}
\Gamma_0
\right]
\leq
\inf_{t\geq0}\exp(-tR+20t\log(2+t)).
\]
For $R\geq80\log3$, taking $t=e^{R/80}$ gives
\[
    \log(2+t)\leq\log(3t)\leq\frac{R}{40},
\]
and hence
\[
    \inf_{t\geq0}\exp(-tR+20t\log(2+t))
    \leq
    \exp\Big(-\frac12Re^{R/80}\Big).
\]
For $R\geq R_\theta$,
\[
    e^{R/80}\geq4\theta R^2,
\]
and therefore
\[
\tr_{\gF}
\left[
\1_{
\left\{
-
\frac{
\lambda\dG(\tau_rJ_\omega)-m_{\lambda,\omega}
}
{\norm{J_\omega}_{L^\infty}}
\geq R
\right\}
}
\Gamma_0
\right]
\leq
e^{-2\theta R^3}.
\]
It follows that
\begin{align}
\tr_{\gF}
\Big[
\exp\Big(
\theta
\frac{
\left(
\lambda\dG(\tau_rJ_\omega)-m_{\lambda,\omega}
\right)_-^3
}
{\norm{J_\omega}_{L^\infty}^3}
\Big)
\Gamma_0
\Big]
\leq
e^{\theta R_\theta^3}
+
\theta
\int_{R_\theta^3}^{\infty}e^{\theta s}e^{-2\theta s}\dd s
\leq
2e^{\theta R_\theta^3}.
\label{eq:quantum-one-channel-2d}
\end{align}

If
$
    \int_\Omega\norm{J_\omega}_{L^\infty}^3\dd\nu(\omega)=0,
$
then $\bW_\lambda^{\ren}=0$ and \eqref{eq:quantum-full-exponential} follows.
Assume now that this integral is positive.  Since $-x^3\leq x_-^3$,
\begin{align*}
-(1+\delta)\bW_\lambda^{\ren}
&\leq
\frac{1+\delta}{6}
\int_\Omega\dd\nu(\omega)\int_{\bT^d}
\left(
\lambda\dG(\tau_rJ_\omega)-m_{\lambda,\omega}
\right)_-^3
\dd r
\nn\\
&=
\Theta_{d,\delta}
\frac{
\int_\Omega\dd\nu(\omega)\int_{\bT^d}
\norm{J_\omega}_{L^\infty}^3
\frac{
\left(
\lambda\dG(\tau_rJ_\omega)-m_{\lambda,\omega}
\right)_-^3
}
{\norm{J_\omega}_{L^\infty}^3}
\dd r
}{
(2\pi)^d\int_\Omega\norm{J_\omega}_{L^\infty}^3\dd\nu(\omega)
},
\end{align*}
with the quotient set equal to zero on
$\{\norm{J_\omega}_{L^\infty}=0\}$.  Since all
$\dG(\tau_rJ_\omega)$ commute, Jensen's inequality in the joint spectral
calculus gives
\begin{align}
&
\tr_{\gF}
\big(
e^{-(1+\delta)\bW_\lambda^{\ren}}\Gamma_0
\big)
\nn\\
&\leq
\frac{
\int_\Omega\dd\nu(\omega)\int_{\bT^d}
\norm{J_\omega}_{L^\infty}^3
\tr_{\gF}
\Big[
\exp\Big(
\Theta_{d,\delta}
\frac{
\left(
\lambda\dG(\tau_rJ_\omega)-m_{\lambda,\omega}
\right)_-^3
}
{\norm{J_\omega}_{L^\infty}^3}
\Big)
\Gamma_0
\Big]
\dd r
}{
(2\pi)^d\int_\Omega\norm{J_\omega}_{L^\infty}^3\dd\nu(\omega)
}.
\label{eq:quantum-jensen-channel}
\end{align}
If $d=2$, then \eqref{eq:quantum-one-channel-2d} with
$\theta=\Theta_{2,\delta}$ and \eqref{eq:quantum-jensen-channel} give
\[
    \tr_{\gF}
    \big(
    e^{-(1+\delta)\bW_\lambda^{\ren}}\Gamma_0
    \big)
    \leq
    2\exp\big(\Theta_{2,\delta}R_{\Theta_{2,\delta}}^3\big).
\]
If $d=3$, then $\Theta_{3,\delta}<1/43200$, and
\eqref{eq:quantum-one-channel-3d} with
$\theta=\Theta_{3,\delta}$ and \eqref{eq:quantum-jensen-channel} give
\[
    \tr_{\gF}
    \big(
    e^{-(1+\delta)\bW_\lambda^{\ren}}\Gamma_0
    \big)
    \leq
    (1-43200\Theta_{3,\delta})^{-1}.
\]
This proves \eqref{eq:quantum-full-exponential}.
\end{proof}

\bigskip

\section{Quantum-to-Classical convergence of the relative free energy}\label{sec:Q-C-free-energy}

We now prove the convergence of the relative free energy. 

\subsection{Relative free-energy lower bound}

\medskip

\begin{lemma}[Coherent $q$-moment comparison]\label{lem:coherent-q-moment-comparison}
Let $q\geq1$.  Under Assumption~\ref{ass:J}, for every $0<\lambda\leq1$,
\begin{align}
&
\Big(\int
\left\langle
\xi(u/ \sqrt{\lambda}),
\left|
\bW_\lambda^{\ren}
-
\cD_\lambda[u]
\right|^q
\xi(u/ \sqrt{\lambda})
\right\rangle
\dd\mu_\lambda(u)\Big)^{1/q}
\nn\\
&\lesssim_q
\Big(
\frac{(2\pi)^d}{6}
\int_\Omega\norm{J_\omega}_{L^\infty}^3\dd\nu(\omega)
\Big)
\nn\\
&\times
\Big[
(\tr_{\gH}h^{-2})
\big((\lambda^2N_0)^{1/2}+\lambda\big)
+
(\tr_{\gH}h^{-2})^{1/2}
\big(\lambda^2N_0+\lambda^{2}\big)
+
(\lambda^2N_0)^{3/2}
+
\lambda^{3}
\Big].
\label{eq:coherent-q-moment-comparison}
\end{align}
Here the implicit constant in $\lesssim_q$ depends only on $q$.
\end{lemma}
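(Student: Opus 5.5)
The plan is to follow the quadratic sketch in Section~\ref{sec:strategy}, now channel by channel for the cubic density. Fix $\omega,r$ and write $X=\lambda\dG(\tau_rJ_\omega)-m_{\lambda,\omega}$, which is a self-adjoint operator, and $a=\cM_\lambda^{\tau_rJ_\omega}[u]=\inprod{u}{\tau_rJ_\omega u}-m_{\lambda,\omega}$ (Remark~\ref{rem:M-lambda-positive}). Then $X-a=\lambda\dG(\tau_rJ_\omega)-\inprod{u}{\tau_rJ_\omega u}=:Y$. All the $\dG(\tau_rJ_\omega)$ commute, so the joint spectral calculus applies. Minkowski's inequality in $L^q$ of the state $\xi(u/\sqrt\lambda)$, together with Jensen/Hölder in $(\omega,r)$ weighted by $\norm{J_\omega}_{L^\infty}^3\dd\nu\dd r$, reduces \eqref{eq:coherent-q-moment-comparison} to a one-channel bound on $\big(\int\langle\xi,|X^3-a^3|^q\xi\rangle\dd\mu_\lambda\big)^{1/q}$ by $\norm{J_\omega}_{L^\infty}^3$ times the bracket. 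Here I use Minkowski for the $\dd\nu\dd r$ integral: $\norm{\int f}_{L^q}\le\int\norm{f}_{L^q}$.

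For the one-channel bound I use the scalar identity
\[
X^3-a^3=3a^2Y+3aY^2+Y^3 .
\]
This is legitimate because $a$ is a scalar for fixed $u$, so every term is a function of the single operator $X$. Then $|X^3-a^3|\le 3a^2|Y|+3|a|Y^2+|Y|^3$ by the spectral theorem, and Minkowski gives
\[
\big\langle\xi,|X^3-a^3|^q\xi\big\rangle^{1/q}\lesssim a^2\langle|Y|^q\rangle^{1/q}+|a|\langle|Y|^{2q}\rangle^{1/q}+\langle|Y|^{3q}\rangle^{1/q}.
\]
The key input is a coherent-state moment bound for $Y$. Under the Weyl translation, $W(u/\sqrt\lambda)^*\dG(A)W(u/\sqrt\lambda)=\dG(A)+\lambda^{-1/2}(a^\dagger(Au)+a(Au))+\lambda^{-1}\inprod{u}{Au}$. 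Hence in the vacuum
\[
Y\sim\lambda\dG(A)+\lambda^{1/2}\big(a^\dagger(Au)+a(Au)\big), \qquad A=\tau_rJ_\omega .
\]
The $\dG(A)$ term annihilates the vacuum, and in the vacuum $a^\dagger(f)+a(f)$ is Gaussian with variance $\norm{f}^2$. So I would aim for
\[
\langle\xi,|Y|^{2m}\xi\rangle^{1/(2m)}\lesssim_m \lambda^{1/2}\norm{J_\omega}_{L^\infty}\norm{u}+\lambda\norm{J_\omega}_{L^\infty}.
\]
For this I expand the even power $Y^{2m}$ by Wick ordering in the vacuum. Each $\dG(A)$ contributes at most $\norm{A}$ times number factors, and the particle number seen inside is bounded by the count of creation operators, which is at most $2m$. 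This yields a polynomial in $\lambda^{1/2}\norm{A}\norm u$ and $\lambda\norm A$. Odd or fractional moments then follow from Hölder.

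Next I integrate over $\mu_\lambda$ using Hölder. For the first term, $\norm{a^2}_{L^{2q}}\lesssim\norm{J_\omega}^2_{L^\infty}\tr h^{-2}$ by Lemma~\ref{lem:Mass-Lp}. For the $\norm{u}$ moments, $\mathbb E_{\mu_\lambda}\norm u^{2m}\lesssim_m(\tr\mathsf C_\lambda)^m=(\lambda N_0)^m$, so $\lambda^{1/2}\norm{u}$ has $L^p$ norm of order $(\lambda^2N_0)^{1/2}$. This produces the three groups in the bracket. The $a^2Y$ term gives $\tr h^{-2}\big((\lambda^2N_0)^{1/2}+\lambda\big)$. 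The $aY^2$ term gives $(\tr h^{-2})^{1/2}(\lambda^2N_0+\lambda^2)$. The $Y^3$ term gives $(\lambda^2N_0)^{3/2}+\lambda^3$. Each carries the factor $\norm{J_\omega}^3_{L^\infty}$, which is then integrated against $\dd\nu\dd r$ to give $(2\pi)^d\int\norm{J_\omega}^3\dd\nu$. Joint measurability follows from Proposition~\ref{prop:D-L1}.

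The main obstacle I expect is the moment bound for $Y$ in the coherent state, specifically controlling the $\lambda\dG(A)$ part uniformly. Naively, $\langle\xi,\dG(A)^k\xi\rangle$ involves the number operator, which in $\xi(u/\sqrt\lambda)$ has size $\norm u^2/\lambda$. The centering by $\inprod{u}{Au}$ is exactly what cancels this. To make the cancellation clean I would carry out the Weyl conjugation explicitly and bound vacuum expectations of polynomials in $\dG(A)$, $a(f)$ and $a^\dagger(f)$ with $\norm{\dG(A)\psi}\le\norm A\norm{\cN\psi}$ on finite-particle vectors. The remaining technical point is passing to non-integer $q$ via the spectral theorem and Hölder.
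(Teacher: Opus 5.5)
Your proposal is correct, and its skeleton is the paper's: the identity $X^3-a^3=3a^2Y+3aY^2+Y^3$ in the spectral calculus, H\"older in $u$ using Lemma~\ref{lem:Mass-Lp} and $\norm{\norm u^2}_{L^p(\mu_\lambda)}\lesssim_p\lambda N_0$, and channel averaging. Your Minkowski integral inequality in $\dd\nu\,\dd r$ gives exactly the prefactor $\frac{(2\pi)^d}{6}\int_\Omega\norm{J_\omega}_{L^\infty}^3\dd\nu(\omega)$ that the paper obtains from Jensen with weight $\norm{J_\omega}_{L^\infty}^3$.

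The genuine difference is how you prove the one-channel input $\langle\xi(u/\sqrt\lambda),|Y|^\alpha\xi(u/\sqrt\lambda)\rangle\lesssim_\alpha(\lambda\norm{Au}^2)^{\alpha/2}+(\lambda\norm A)^\alpha$.

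\emph{The paper's route.} It computes the coherent-state Laplace transform exactly, $\exp[\lambda^{-1}\inprod{u}{(e^{\theta\lambda A}-1-\theta\lambda A)u}]$. It bounds this by $\exp(\theta^2\lambda\inprod{u}{A^2u})$ for $|\theta|\lambda\norm A\le1$, and turns the resulting Bernstein-type tail into all real moments via the layer-cake formula.

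\emph{Your route.} You conjugate by the Weyl operator, reducing to vacuum expectations of $\big(\lambda\dG(A)+\lambda^{1/2}(a^\dagger(Au)+a(Au))\big)^{2m}$. You bound these by sector-wise operator norms: $\norm{\dG(A)}\le n\norm A$ and $\norm{a^\sharp(f)}\le\sqrt{n+1}\,\norm f$ on the $n$-particle sector, with $a^\sharp\in\{a,a^\dagger\}$ and $n\le 2m$. Non-even exponents then follow by Jensen.

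\emph{Comparison.} Your argument is more elementary and makes explicit how the centering $\inprod{u}{Au}$ cancels the large particle number $\norm u^2/\lambda$. It costs combinatorial constants growing with $m$, which is harmless since only $q$-dependence is allowed. It also requires checking that coherent vectors lie in the domain of every power of $\dG(A)$; this holds because $0\le\dG(A)\le\norm A\,\cN$ and coherent states have all moments of $\cN$. The paper's generating-function route avoids that bookkeeping, treats every real exponent at once, and yields sub-exponential tail bounds rather than moments only.
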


\medskip

\begin{proof}
Let $A=A^*$ be bounded and satisfy
\[
    0\leq A\leq a,
    \qquad
    a>0.
\]
For every $\theta\in\R$,
\[
    e^{\theta\lambda\dG(A)}
    =
    \Gamma(e^{\theta\lambda A}),
\]
and the coherent-vector identity gives
\begin{align}
\left\langle
\xi(u/ \sqrt{\lambda}),
e^{\theta(\lambda\dG(A)-\inprod{u}{Au})}
\xi(u/ \sqrt{\lambda})
\right\rangle
=
\exp\Big[
\frac1\lambda
\inprod{u}{
\big(e^{\theta\lambda A}-1-\theta\lambda A\big)u}
\Big].
\label{eq:coherent-channel-laplace}
\end{align}
If $\abs{\theta}\lambda a\leq1$, then functional calculus and
\[
    0\leq e^x-1-x\leq x^2,
    \qquad
    -1\leq x\leq1,
\]
give
\[
    0
    \leq
    e^{\theta\lambda A}-1-\theta\lambda A
    \leq
    \theta^2\lambda^2A^2 .
\]
Thus \eqref{eq:coherent-channel-laplace} implies
\begin{equation}\label{eq:coherent-channel-laplace-bound}
\left\langle
\xi(u/ \sqrt{\lambda}),
e^{\theta(\lambda\dG(A)-\inprod{u}{Au})}
\xi(u/ \sqrt{\lambda})
\right\rangle
\leq
\exp\left(
\theta^2\lambda\inprod{u}{A^2u}
\right)
\end{equation}
whenever $\abs{\theta}\lambda a\leq1$.

For every $R\geq0$, \eqref{eq:coherent-channel-laplace-bound} and the
spectral theorem give
\begin{align*}
\left\langle
\xi(u/ \sqrt{\lambda}),
\1_{\{\lambda\dG(A)-\inprod{u}{Au}\geq R\}}
\xi(u/ \sqrt{\lambda})
\right\rangle
\leq
\inf_{0\leq\theta\leq(\lambda a)^{-1}}
\exp\left(
-\theta R+\theta^2\lambda\inprod{u}{A^2u}
\right),
\end{align*}
and
\begin{align*}
\left\langle
\xi(u/ \sqrt{\lambda}),
\1_{\{-\lambda\dG(A)+\inprod{u}{Au}\geq R\}}
\xi(u/ \sqrt{\lambda})
\right\rangle
\leq
\inf_{0\leq\theta\leq(\lambda a)^{-1}}
\exp\left(
-\theta R+\theta^2\lambda\inprod{u}{A^2u}
\right).
\end{align*}
Choosing
\[
    \theta=\frac{R}{2\lambda\inprod{u}{A^2u}}
    \quad\text{if}\quad
    R\leq \frac{2\inprod{u}{A^2u}}{a},
\]
and
\[
    \theta=\frac1{\lambda a}
    \quad\text{if}\quad
    R>\frac{2\inprod{u}{A^2u}}{ a},
\]
with the first choice omitted when $\inprod{u}{A^2u}=0$, yields
\begin{align}
\left\langle
\xi(u/ \sqrt{\lambda}),
\1_{\{\abs{\lambda\dG(A)-\inprod{u}{Au}}\geq R\}}
\xi(u/ \sqrt{\lambda})
\right\rangle
\leq
2\exp\Big(
-
\min\left\{
\frac{R^2}{4\lambda\inprod{u}{A^2u}},
\frac{R}{2\lambda a}
\right\}
\Big).
\end{align}
Hence, for
$\alpha\in\{q,2q,3q\}$,
\begin{align}
&
\left\langle
\xi(u/ \sqrt{\lambda}),
\abs{
\lambda\dG(A)-\inprod{u}{Au}
}^{\alpha}
\xi(u/ \sqrt{\lambda})
\right\rangle
\nn\\
&=
\alpha\int_0^\infty R^{\alpha-1}
\left\langle
\xi(u/ \sqrt{\lambda}),
\1_{\{\abs{\lambda\dG(A)-\inprod{u}{Au}}\geq R\}}
\xi(u/ \sqrt{\lambda})
\right\rangle
\dd R
\nn\\
&\lesssim_q
(\lambda\inprod{u}{A^2u})^{\alpha/2}
+
(\lambda a)^\alpha .
\label{eq:coherent-channel-moment}
\end{align}
Since
\[
    0\leq A\leq a,
    \qquad
    \inprod{u}{A^2u}\leq a^2\norm u^2,
\]
\eqref{eq:coherent-channel-moment} gives
\begin{align}
\left\langle
\xi(u/ \sqrt{\lambda}),
\left|
\frac{\lambda\dG(A)-\inprod{u}{Au}}{a}
\right|^\alpha
\xi(u/ \sqrt{\lambda})
\right\rangle
\lesssim_q
(\lambda\norm u^2)^{\alpha/2}
+
\lambda^\alpha .
\label{eq:normalized-coherent-channel-moment}
\end{align}

Since
\[
    \lambda\dG(A)-\tr_{\gH}(A \mathsf{C}_{\lambda})
    =
    \cM_\lambda^A[u]
    +
    \lambda\dG(A)-\inprod{u}{Au},
\]
we have
\begin{align}
&
\left[
\lambda\dG(A)-\tr_{\gH}(A \mathsf{C}_{\lambda})
\right]^3
-
\big(\cM_\lambda^A[u]\big)^3
\nn\\
&=
3\big(\cM_\lambda^A[u]\big)^2
\left[
\lambda\dG(A)-\inprod{u}{Au}
\right]
+
3\cM_\lambda^A[u]
\left[
\lambda\dG(A)-\inprod{u}{Au}
\right]^2
\nn\\
&\quad+
\left[
\lambda\dG(A)-\inprod{u}{Au}
\right]^3 .
\end{align}
Applying the scalar inequality
\[
    \abs{x+y+z}^q
    \lesssim_q
    \abs x^q+\abs y^q+\abs z^q
\]
in the spectral calculus of
$\lambda\dG(A)-\inprod{u}{Au}$, and using
\eqref{eq:normalized-coherent-channel-moment}, we get
\begin{align}
&
\left\langle
\xi(u/ \sqrt{\lambda}),
\left|
\frac{
\left[
\lambda\dG(A)-\tr_{\gH}(A \mathsf{C}_{\lambda})
\right]^3
-
\big(\cM_\lambda^A[u]\big)^3
}{a^3}
\right|^q
\xi(u/ \sqrt{\lambda})
\right\rangle
\nn\\
&\lesssim_q
\left|
\frac{\cM_\lambda^A[u]}{a}
\right|^{2q}
\left[
(\lambda\norm u^2)^{q/2}
+\lambda^q
\right]
+
\left|
\frac{\cM_\lambda^A[u]}{a}
\right|^q
\left[
(\lambda\norm u^2)^q
+\lambda^{2q}
\right]
+
(\lambda\norm u^2)^{3q/2}
+\lambda^{3q}.
\label{eq:single-channel-cubic-pointwise}
\end{align}

By Lemma~\ref{lem:Mass-Lp}, applied to $A/a$,
\[
    \sup_{0<\lambda\leq1}
    \left\|
    \frac{\cM_\lambda^A}{a}
    \right\|_{L^p(\mu_\lambda)}
    \lesssim_p
    (\tr_{\gH}h^{-2})^{1/2}
\]
for every finite $p\geq1$.  Since $\mu_\lambda$ has covariance
$\mathsf{C}_{\lambda}$, writing
\[
    u=\sum_{k\in\Z^d}(\lambda n_k)^{1/2}g_k e_k
\]
with independent standard complex Gaussian variables $g_p$, Minkowski's
inequality gives
\begin{equation}\label{eq:lambda-L2norm-moment}
    \norm{\norm u^2}_{L^p(\mu_\lambda)}
    \leq
    \sum_{k\in\Z^d}\lambda n_k\norm{\abs{g_k}^2}_{L^p}
    \lesssim_p
    \lambda N_0.
\end{equation}
Using Hölder's inequality in \eqref{eq:single-channel-cubic-pointwise}, with
the exponents $4q,2q,q$ for $\cM_\lambda^A/a$ and with
\eqref{eq:lambda-L2norm-moment}, gives
\begin{align}
&
\int
\left\langle
\xi(u/ \sqrt{\lambda}),
\left|
\frac{
\left[
\lambda\dG(A)-\tr_{\gH}(A \mathsf{C}_{\lambda})
\right]^3
-
\big(\cM_\lambda^A[u]\big)^3
}{a^3}
\right|^q
\xi(u/ \sqrt{\lambda})
\right\rangle
\dd\mu_\lambda(u)
\nn\\
&\lesssim_q
(\tr_{\gH}h^{-2})^q
\big((\lambda^2N_0)^{q/2}+\lambda^q\big)
+
(\tr_{\gH}h^{-2})^{q/2}
\big((\lambda^2N_0)^q+\lambda^{2q}\big)
+
(\lambda^2N_0)^{3q/2}
+
\lambda^{3q}.
\label{eq:single-channel-cubic-integrated}
\end{align}

Since the operators $\tau_rJ_\omega$ are multiplication operators, the
family $\{\dG(\tau_rJ_\omega)\}_{\omega,r}$ has a joint spectral calculus.
Jensen's inequality in that joint spectral calculus gives
\begin{align}
&
\left|
\bW_\lambda^{\ren}
-
\cD_\lambda[u]
\right|^q
\nn\\
&\leq
\left(
\frac{(2\pi)^d}{6}
\int_\Omega\norm{J_\omega}_{L^\infty}^3\dd\nu(\omega)
\right)^q
\nn\\
&\quad\times
\frac{
\displaystyle
\int_\Omega\int_{\bT^d}
\norm{J_\omega}_{L^\infty}^3
\left|
\frac{
\left[
\lambda\dG(\tau_rJ_\omega)-m_{\lambda,\omega}
\right]^3
-
\big(\cM_\lambda^{\tau_rJ_\omega}[u]\big)^3
}{
\norm{J_\omega}_{L^\infty}^3
}
\right|^q
\dd r\,\dd\nu(\omega)
}{
\displaystyle
(2\pi)^d
\int_\Omega\norm{J_\omega}_{L^\infty}^3\dd\nu(\omega)
}.
\end{align}
Taking the coherent-state expectation,
integrating in $u$, and applying \eqref{eq:single-channel-cubic-integrated} with $A=\tau_rJ_\omega$ and $a=\norm{J_\omega}_{L^\infty}$, we
prove \eqref{eq:coherent-q-moment-comparison}.
\end{proof}

We now combine this comparison with the exponential bounds from the previous
sections and the Golden--Thompson inequality.

\medskip

\begin{proposition}[Relative free-energy lower bound]\label{prop:free-energy-lower}
It holds that
\begin{equation}\label{eq:free-energy-lower}
    \liminf_{\lambda\downarrow0}
    \Big(
    -\log\frac{Z_\lambda}{Z_0}
    \Big)
    \geq
    -\log z .
\end{equation}
\end{proposition}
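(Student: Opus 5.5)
The plan is to follow the quadratic sketch in Section~\ref{sec:strategy}. The lower bound on $-\log(Z_\lambda/Z_0)$ is equivalent to $\limsup_{\lambda\downarrow0} Z_\lambda/Z_0\le z$. The Golden--Thompson inequality gives
\[
    \frac{Z_\lambda}{Z_0}
    =\frac{\tr_{\gF}e^{-\lambda\dG(h)-\bW_\lambda^{\ren}}}{Z_0}
    \le\tr_{\gF}\big(e^{-\bW_\lambda^{\ren}}\Gamma_0\big).
\]
Lemma~\ref{lem:free-coherent-representation} then rewrites this as
\[
    \int\langle\xi(u/\sqrt\lambda),e^{-\bW_\lambda^{\ren}}\xi(u/\sqrt\lambda)\rangle\dd\mu_\lambda(u).
\]
Since $z_\lambda\to z$ by Proposition~\ref{prop:tilted-density-common}, it suffices to show
\[
    \int\big\langle\xi(u/\sqrt\lambda),\big(e^{-\bW_\lambda^{\ren}}-e^{-\cD_\lambda[u]}\big)\xi(u/\sqrt\lambda)\big\rangle\dd\mu_\lambda(u)\to0.
\]

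\textbf{Main obstacle.} The quadratic argument used that $x\mapsto e^{-x}$ is $1$-Lipschitz on $[0,\infty)$. That fails here because neither $\bW_\lambda^{\ren}$ nor $\cD_\lambda$ is nonnegative. To get around this, I will work in the spectral calculus of the self-adjoint operator $\bW_\lambda^{\ren}$, with $u$ fixed so that $\cD_\lambda[u]$ is a scalar, and use the mean value inequality
\[
    |e^{-x}-e^{-y}|\le|x-y|\,(e^{-x}+e^{-y}).
\]
This yields the operator inequality
\[
    \big|e^{-\bW_\lambda^{\ren}}-e^{-\cD_\lambda[u]}\big|
    \le\big|\bW_\lambda^{\ren}-\cD_\lambda[u]\big|\big(e^{-\bW_\lambda^{\ren}}+e^{-\cD_\lambda[u]}\big),
\]
where both factors on the right are functions of the same operator. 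Taking the coherent expectation, I then apply H\"older's inequality in the spectral measure of $\xi(u/\sqrt\lambda)$ with exponents $q$ and $1+\delta$, where $q=(1+\delta)/\delta$. Applying H\"older once more in $u$, the difference is bounded by
\[
    \Big(\int\langle\xi,|\bW_\lambda^{\ren}-\cD_\lambda[u]|^q\xi\rangle\dd\mu_\lambda\Big)^{1/q}
    \Big[\big(\tr_{\gF}(e^{-(1+\delta)\bW_\lambda^{\ren}}\Gamma_0)\big)^{1/(1+\delta)}+\big(\mathbb E_{\mu_\lambda}e^{-(1+\delta)\cD_\lambda}\big)^{1/(1+\delta)}\Big].
\]
In the first of the two terms in brackets I used Lemma~\ref{lem:free-coherent-representation} again to recombine the integral over $u$ into a trace against $\Gamma_0$.

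The bracket is bounded uniformly in $\lambda$. The quantum term is controlled by Proposition~\ref{prop:quantum-exponential}. The classical term is controlled by Proposition~\ref{prop:full-exp} together with Fatou's lemma, as already done in the proof of Proposition~\ref{prop:tilted-density-common}. This is exactly where the smallness condition on $\Theta_{3,\delta}$ enters in $d=3$.

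The first factor is controlled by Lemma~\ref{lem:coherent-q-moment-comparison}. Its right-hand side involves only $\tr h^{-2}<\infty$ for $d\le3$, powers of $\lambda$, and powers of $\lambda^2N_0$. By Lemma~\ref{lem:Clambda-trace-bound}, $\lambda^2N_0$ behaves like $\lambda|\log\lambda|$ in $d=2$ and like $\lambda^{1/2}$ in $d=3$, so it tends to zero. Hence the first factor vanishes as $\lambda\downarrow0$.

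Putting these together gives $\limsup_{\lambda\downarrow0} Z_\lambda/Z_0\le\lim z_\lambda=z$. Taking $-\log$ yields \eqref{eq:free-energy-lower}.
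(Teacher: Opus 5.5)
Your proposal is correct and follows the paper's proof essentially step for step. The shared steps are Golden--Thompson, the coherent-state representation of $\Gamma_0$, and the scalar bound $|e^{-x}-e^{-y}|\le|x-y|(e^{-x}+e^{-y})$ in the spectral measure of $\bW_\lambda^{\ren}$ at $\xi(u/\sqrt\lambda)$, followed by H\"older with exponents $1+\delta$ and $q=(1+\delta)/\delta$ (first spectrally, then in $u$), and finally the quantum and classical exponential bounds (the latter via Fatou) together with Lemma~\ref{lem:coherent-q-moment-comparison} and $\lambda^2N_0\to0$.
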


\medskip

\begin{proof}
Golden--Thompson's inequality gives
\begin{align}
0\leq\frac{Z_\lambda}{Z_0}
=
\frac{\tr_{\gF}\left(e^{-\lambda\dG(h)-\bW_\lambda^{\ren}}\right)}{Z_0}
\leq
\frac{
\tr_{\gF}\left(e^{-\lambda\dG(h)}e^{-\bW_\lambda^{\ren}}\right)
}{Z_0}
=
\tr_{\gF}\big(e^{-\bW_\lambda^{\ren}}\Gamma_0\big).
\label{eq:free-energy-lower-GT}
\end{align}
By Lemma~\ref{lem:free-coherent-representation},
\begin{align}
\tr_{\gF}\big(e^{-\bW_\lambda^{\ren}}\Gamma_0\big)
&=
\int
\left\langle
\xi(u/ \sqrt{\lambda}),
e^{-\bW_\lambda^{\ren}}
\xi(u/ \sqrt{\lambda})
\right\rangle
\dd\mu_\lambda(u).
\end{align}

Fix $u\in L^2(\bT^d)$.  Let $\dd\rho_{u,\lambda}$ be the spectral measure
of $\bW_\lambda^{\ren}$ in the vector
$\xi(u/\sqrt\lambda)$, namely
\[
    \int_{\R} F(x)\dd\rho_{u,\lambda}(x)
    =
    \left\langle
    \xi(u/ \sqrt{\lambda}),
    F(\bW_\lambda^{\ren})
    \xi(u/ \sqrt{\lambda})
    \right\rangle
\]
for every nonnegative Borel function $F$.  Since
\[
    \int_{\R}\dd\rho_{u,\lambda}(x)=1,
\]
the spectral theorem gives
\begin{align}
\left|
\left\langle
\xi(u/ \sqrt{\lambda}),
e^{-\bW_\lambda^{\ren}}
\xi(u/ \sqrt{\lambda})
\right\rangle
-
e^{-\cD_\lambda[u]}
\right|
&=
\left|
\int_{\R}
\left(e^{-x}-e^{-\cD_\lambda[u]}\right)
\dd\rho_{u,\lambda}(x)
\right|\nn\\
&
\leq
\int_{\R}
\left|
e^{-x}-e^{-\cD_\lambda[u]}
\right|
\dd\rho_{u,\lambda}(x).
\label{eq:spectral-difference-start}
\end{align}
For $x,y\in\R$,
\[
    \abs{e^{-x}-e^{-y}}
    =
    \left|
    \int_y^x e^{-s}\dd s
    \right|
    \leq
    \abs{x-y}\big(e^{-x}+e^{-y}\big).
\]
Taking $y=\cD_\lambda[u]$ in \eqref{eq:spectral-difference-start}, we get
\begin{align}
&
\left|
\left\langle
\xi(u/ \sqrt{\lambda}),
e^{-\bW_\lambda^{\ren}}
\xi(u/ \sqrt{\lambda})
\right\rangle
-
e^{-\cD_\lambda[u]}
\right|
\nn\\
&\leq
\int_{\R}
\abs{x-\cD_\lambda[u]}e^{-x}
\dd\rho_{u,\lambda}(x)
+
e^{-\cD_\lambda[u]}
\int_{\R}
\abs{x-\cD_\lambda[u]}
\dd\rho_{u,\lambda}(x).
\label{eq:spectral-difference-scalar-bound}
\end{align}
Hölder's inequality with respect to $\dd\rho_{u,\lambda}$ gives
\begin{align}
&
\int_{\R}
\abs{x-\cD_\lambda[u]}e^{-x}
\dd\rho_{u,\lambda}(x)
\nn\\
&\leq
\Big(
\int_{\R}e^{-(1+\delta)x}\dd\rho_{u,\lambda}(x)
\Big)^{\frac{1}{1+\delta}}
\Big(
\int_{\R}\abs{x-\cD_\lambda[u]}^q\dd\rho_{u,\lambda}(x)
\Big)^{\frac1q}
\nn\\
&=
\left\langle
\xi(u/ \sqrt{\lambda}),
e^{-(1+\delta)\bW_\lambda^{\ren}}
\xi(u/ \sqrt{\lambda})
\right\rangle^{\frac{1}{1+\delta}}
\left\langle
\xi(u/ \sqrt{\lambda}),
\left|
\bW_\lambda^{\ren}-\cD_\lambda[u]
\right|^q
\xi(u/ \sqrt{\lambda})
\right\rangle^{\frac{1}{q}},
\label{eq:spectral-holder-first}
\end{align}
where $q=(1+\delta)/\delta$. Also,
\begin{align}
\int_{\R}
\abs{x-\cD_\lambda[u]}
\dd\rho_{u,\lambda}(x)
&\leq
\Big(
\int_{\R}
\abs{x-\cD_\lambda[u]}^q
\dd\rho_{u,\lambda}(x)
\Big)^{\frac{1}{q}}
\nn\\
&=
\left\langle
\xi(u/ \sqrt{\lambda}),
\left|
\bW_\lambda^{\ren}-\cD_\lambda[u]
\right|^q
\xi(u/ \sqrt{\lambda})
\right\rangle^{1/q}.
\label{eq:spectral-holder-second}
\end{align}
Combining \eqref{eq:spectral-difference-scalar-bound},
\eqref{eq:spectral-holder-first}, and \eqref{eq:spectral-holder-second},
\begin{align}
&
\left|
\left\langle
\xi(u/ \sqrt{\lambda}),
e^{-\bW_\lambda^{\ren}}
\xi(u/ \sqrt{\lambda})
\right\rangle
-
e^{-\cD_\lambda[u]}
\right|
\nn\\
&\leq
\Big(\left\langle
\xi(u/ \sqrt{\lambda}),
e^{-(1+\delta)\bW_\lambda^{\ren}}
\xi(u/ \sqrt{\lambda})
\right\rangle^{\frac{1}{1+\delta}}+e^{-\cD_\lambda[u]}\Big)
\nn\\
&\times
\left\langle
\xi(u/ \sqrt{\lambda}),
\left|
\bW_\lambda^{\ren}-\cD_\lambda[u]
\right|^q
\xi(u/ \sqrt{\lambda})
\right\rangle^{1/q}.
\label{eq:free-energy-lower-pointwise-comparison}
\end{align}

Integrating \eqref{eq:free-energy-lower-pointwise-comparison} with respect to
$\mu_\lambda$, and applying Hölder's inequality in $u$, yields
\begin{align}
&
\left|
\tr_{\gF}\big(e^{-\bW_\lambda^{\ren}}\Gamma_0\big)
-
z_\lambda
\right|
\nn\\
&\leq
\Big[
\int
\left\langle
\xi(u/ \sqrt{\lambda}),
e^{-(1+\delta)\bW_\lambda^{\ren}}
\xi(u/ \sqrt{\lambda})
\right\rangle
\dd\mu_\lambda(u)
\Big]^{1/(1+\delta)}
\nn\\
&\quad\times
\Big[
\int
\left\langle
\xi(u/ \sqrt{\lambda}),
\left|
\bW_\lambda^{\ren}-\cD_\lambda[u]
\right|^q
\xi(u/ \sqrt{\lambda})
\right\rangle
\dd\mu_\lambda(u)
\Big]^{1/q}
\nn\\
&\quad+
\Big[
\int e^{-(1+\delta)\cD_\lambda[u]}\dd\mu_\lambda(u)
\Big]^{1/(1+\delta)}
\nn\\
&\quad\times
\Big[
\int
\left\langle
\xi(u/ \sqrt{\lambda}),
\left|
\bW_\lambda^{\ren}-\cD_\lambda[u]
\right|^q
\xi(u/ \sqrt{\lambda})
\right\rangle
\dd\mu_\lambda(u)
\Big]^{1/q}.
\label{eq:free-energy-lower-integrated-comparison}
\end{align}
By Lemma~\ref{lem:free-coherent-representation},
\[
\int
\left\langle
\xi(u/ \sqrt{\lambda}),
e^{-(1+\delta)\bW_\lambda^{\ren}}
\xi(u/ \sqrt{\lambda})
\right\rangle
\dd\mu_\lambda(u)
=
\tr_{\gF}
\left(
e^{-(1+\delta)\bW_\lambda^{\ren}}\Gamma_0
\right).
\]
Proposition~\ref{prop:quantum-exponential} and
Proposition~\ref{prop:full-exp} give
\[
    \sup_{0<\lambda\leq1}
    \tr_{\gF}
    \left(
    e^{-(1+\delta)\bW_\lambda^{\ren}}\Gamma_0
    \right)
    <\infty
\]
and
\[
    \sup_{0<\lambda\leq1}
    \int e^{-(1+\delta)\cD_\lambda[u]}\dd\mu_\lambda(u)
    <\infty .
\]
Lemma~\ref{lem:coherent-q-moment-comparison} and Lemma~\ref{lem:Clambda-trace-bound} give
\[
\int
\left\langle
\xi(u/ \sqrt{\lambda}),
\left|
\bW_\lambda^{\ren}-\cD_\lambda[u]
\right|^q
\xi(u/ \sqrt{\lambda})
\right\rangle
\dd\mu_\lambda(u)
\to0 .
\]
Therefore, \eqref{eq:free-energy-lower-integrated-comparison} implies
\[
    \tr_{\gF}
    \left(e^{-\bW_\lambda^{\ren}}\Gamma_0\right)
    -
    z_\lambda
    \to0 ,
\]
which, together with \eqref{eq:free-energy-lower-GT} and Proposition~\ref{prop:tilted-density-common}, implies \eqref{eq:free-energy-lower}.
\end{proof}

\subsection{Relative free-energy upper bound}

For the upper bound, we use the coherent-state trial state
$\widetilde\Gamma_\lambda$, for $\lambda>0$, defined as follows:
\[
    \widetilde\Gamma_\lambda
    =
    \frac{1}{z_\lambda}
    \int
    \left|\xi(u/ \sqrt{\lambda})\right\rangle
    \left\langle\xi(u/ \sqrt{\lambda})\right|
    e^{-\cD_\lambda[u]}\dd\mu_\lambda(u).
\]

\medskip

\begin{lemma}[Coherent-state cubic comparison]\label{lem:coherent-cubic-comparison}
For every $u\in L^2(\bT^d)$, $\lambda>0$, and every $\nu$-a.e. $\omega$, $r\in\bT^d$,
\begin{align}
&
\left\langle
\xi(u/ \sqrt{\lambda}),
\left[
\lambda\dG(\tau_rJ_\omega)-m_{\lambda,\omega}
\right]^3
\xi(u/ \sqrt{\lambda})
\right\rangle                                                                  \nn\\
&=
\big(\cM_\lambda^{\tau_rJ_\omega}[u]\big)^3
+
3\lambda\cM_\lambda^{\tau_rJ_\omega}[u]
\inprod{u}{(\tau_rJ_\omega)^2u}
+
\lambda^2\inprod{u}{(\tau_rJ_\omega)^3u}.
\label{eq:coherent-cubic-exact}
\end{align}
Consequently, we have, for every $\delta>0$,
\begin{align}
&
\left|
\tr_{\gF}(\bW_\lambda^{\ren}\widetilde\Gamma_\lambda)
-
\int\cD_\lambda[u]z_\lambda^{-1}e^{-\cD_\lambda[u]}\dd\mu_\lambda(u)
\right|                                                                        \nn\\
&\lesssim_{\delta}
\frac{(2\pi)^d}{z_\lambda}
\left(\int_\Omega\norm{J_\omega}_{L^\infty}^3\dd\nu(\omega)\right)            
\left[
(\tr h^{-2})^{1/2}\lambda^2 N_0
+
\lambda^3 N_0
\right]\norm{e^{-\cD_\lambda}}_{L^{1+\delta}(\mu_\lambda)}   .
\label{eq:coherent-cubic-remainder}
\end{align}
Under the assumptions of Theorem~\ref{thm:quantum-classical-main}, the right-hand side of \eqref{eq:coherent-cubic-remainder} tends to zero.
\end{lemma}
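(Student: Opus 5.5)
The plan is to compute the third moment of $\lambda\dG(A)$ in the coherent state $\xi(u/\sqrt\lambda)$ exactly, for a multiplication operator $A=\tau_rJ_\omega$. Then we integrate against the tilted measure and control the error terms with Hölder's inequality and the moment bounds already available.

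\textbf{Exact identity.} For $A$ bounded and self-adjoint, \eqref{eq:coherent-channel-laplace} gives the cumulant generating function of $X:=\lambda\dG(A)$ in the state $\xi(u/\sqrt\lambda)$:
\[
\log\langle e^{\theta X}\rangle=\lambda^{-1}\inprod{u}{(e^{\theta\lambda A}-1)u}.
\]
Hence the $n$-th cumulant of $X$ is $\kappa_n=\lambda^{n-1}\inprod{u}{A^nu}$. In particular, the mean is $\inprod{u}{Au}$, the variance is $\lambda\inprod{u}{A^2u}$, and the third cumulant is $\lambda^2\inprod{u}{A^3u}$.

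Write $X-m=(X-\inprod{u}{Au})+c$ with $c:=\inprod{u}{Au}-m_{\lambda,\omega}$. By Remark~\ref{rem:M-lambda-positive}, $c=\cM_\lambda^{A}[u]$. Expanding the cube gives
\[
\langle (X-m)^3\rangle=c^3+3c\,\kappa_2+\kappa_3 ,
\]
because the first central moment vanishes and the third central moment equals $\kappa_3$. This is exactly \eqref{eq:coherent-cubic-exact}. Domain issues do not arise, since coherent vectors are analytic for $\dG(A)$ with $A$ bounded.

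\textbf{Integration.} Multiply \eqref{eq:coherent-cubic-exact} by $1/6$ and integrate in $(\omega,r)$. Fubini applies thanks to the jointly measurable version of Proposition~\ref{prop:D-L1}. Then integrate against $z_\lambda^{-1}e^{-\cD_\lambda}\dd\mu_\lambda$. This gives $\tr(\bW^{\ren}_\lambda\widetilde\Gamma_\lambda)$ minus the classical term, which equals the integrated remainder
\[
\frac{1}{6 z_\lambda}\iiint\Big(3\lambda\cM_\lambda^{\tau_rJ_\omega}[u]\inprod{u}{(\tau_rJ_\omega)^2u}+\lambda^2\inprod{u}{(\tau_rJ_\omega)^3u}\Big)e^{-\cD_\lambda[u]}\dd\mu_\lambda\,\dd r\,\dd\nu .
\]
We bound the two pieces pointwise using
\[
\inprod{u}{(\tau_rJ_\omega)^ju}\le\norm{J_\omega}_{L^\infty}^j\norm u^2 .
\]
We then apply Hölder in $u$, with exponent $1+\delta$ on $e^{-\cD_\lambda}$ and the conjugate exponent $q=(1+\delta)/\delta$ on the rest.

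\textbf{Controlling the factors.} For the first piece we use Hölder once more. Lemma~\ref{lem:Mass-Lp} gives
\[
\norm{\cM_\lambda^{\tau_rJ_\omega}}_{L^{2q}}\lesssim_q\norm{J_\omega}_{L^\infty}(\tr h^{-2})^{1/2},
\]
and \eqref{eq:lambda-L2norm-moment} gives $\norm{\norm u^2}_{L^{2q}}\lesssim\lambda N_0$. This yields the term $\norm{J_\omega}^3_{L^\infty}(\tr h^{-2})^{1/2}\lambda^2N_0$. The second piece gives $\norm{J_\omega}_{L^\infty}^3\lambda^3N_0$ in the same way. Integrating over $r\in\bT^d$ produces the factor $(2\pi)^d$, and integrating over $\omega$ produces $\int\norm{J_\omega}^3_{L^\infty}\dd\nu$. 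Together these give \eqref{eq:coherent-cubic-remainder}.

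\textbf{Vanishing of the right-hand side.} Lemma~\ref{lem:Clambda-trace-bound} gives $\lambda^2N_0\to0$ in both dimensions: it behaves like $\lambda|\log\lambda|$ for $d=2$ and like $\lambda^{1/2}$ for $d=3$. The remaining factors stay bounded:
\begin{itemize}
\item $z_\lambda\to z>0$ by Proposition~\ref{prop:tilted-density-common};
\item $\sup_\lambda\norm{e^{-\cD_\lambda}}_{L^{1+\delta}(\mu_\lambda)}<\infty$ by Proposition~\ref{prop:full-exp} and Fatou;
\item $\tr h^{-2}<\infty$ for $d\le3$.
\end{itemize}
Hence the right-hand side of \eqref{eq:coherent-cubic-remainder} tends to zero.

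\textbf{Main obstacle.} The only delicate point is the exact cumulant computation and its justification. Once the generating function is in hand, as it is from \eqref{eq:coherent-channel-laplace}, everything else is routine. I would verify the identity by differentiating that formula three times at $\theta=0$.
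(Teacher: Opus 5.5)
Your proposal is correct and follows the paper's proof essentially step for step: the exact identity, the pointwise bounds $\inprod{u}{(\tau_rJ_\omega)^ju}\leq\norm{J_\omega}_{L^\infty}^j\norm{u}^2$, H\"older with exponents $1+\delta$ and $q=(1+\delta)/\delta$ (split further into $2q,2q$ for the $\cM_\lambda\norm{u}^2$ term), Lemma~\ref{lem:Mass-Lp}, the Gaussian moment bound \eqref{eq:lambda-L2norm-moment}, and $\lambda^2N_0\to0$ together with $z_\lambda\to z>0$ and the uniform $L^{1+\delta}$ bound. The only difference is in how you get the exact identity: you read off the cumulants $\kappa_n=\lambda^{n-1}\inprod{u}{A^nu}$ from the Laplace identity \eqref{eq:coherent-channel-laplace}, whereas the paper computes the first three moments directly by normal-ordering $\dG(A)^2$ and $\dG(A)^3$. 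Both are valid, and yours reuses an identity already in hand (finiteness of the exponential moments for all real $\theta$ justifies differentiating at $\theta=0$).
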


\medskip

\begin{proof}
For a bounded multiplication operator $A$, the coherent vector identities are
obtained from $a(f)\xi(u/\sqrt\lambda)=\lambda^{-1/2}\inprod{f}{u}\xi(u/\sqrt\lambda)$
and the canonical commutation relations.  They read
\begin{align*}
\left\langle\xi(u/ \sqrt{\lambda}),\dG(A)
\xi(u/ \sqrt{\lambda})\right\rangle
&=
\lambda^{-1}\inprod{u}{Au},                                                 \\
\left\langle\xi(u/ \sqrt{\lambda}),\dG(A)^2
\xi(u/ \sqrt{\lambda})\right\rangle
&=
\lambda^{-2}\inprod{u}{Au}^2
+
\lambda^{-1}\inprod{u}{A^2u},                                                \\
\left\langle\xi(u/ \sqrt{\lambda}),\dG(A)^3
\xi(u/ \sqrt{\lambda})\right\rangle
&=
\lambda^{-3}\inprod{u}{Au}^3
+
3\lambda^{-2}\inprod{u}{Au}\inprod{u}{A^2u}
+
\lambda^{-1}\inprod{u}{A^3u}.
\end{align*}
The second identity follows from
\[
    \dG(A)^2
    =
    \int A(x)A(y)a_x^\dagger a_y^\dagger a_ya_x\dd x\dd y
    +
    \dG(A^2),
\]
and the third follows from
\begin{align*}
\dG(A)^3
&=
\int A(x)A(y)A(z)
a_x^\dagger a_y^\dagger a_z^\dagger a_za_ya_x\dd x\dd y\dd z                 \\
&\quad+
3\int A(x)^2A(y)a_x^\dagger a_y^\dagger a_ya_x\dd x\dd y
+
\dG(A^3).
\end{align*}
Substituting $A=\tau_rJ_\omega$ and expanding
$[\lambda\dG(A)-m_{\lambda,\omega}]^3$, with
$m_{\lambda,\omega}=\tr_{\gH}(A\mathsf{C}_{\lambda})$, gives
\begin{align*}
&
\left\langle\xi(u/ \sqrt{\lambda}),
[\lambda\dG(A)-m_{\lambda,\omega}]^3
\xi(u/ \sqrt{\lambda})\right\rangle                         \\
&=
\left(\inprod{u}{Au}-m_{\lambda,\omega}\right)^3
+
3\lambda
\left(\inprod{u}{Au}-m_{\lambda,\omega}\right)
\inprod{u}{A^2u}
+
\lambda^2\inprod{u}{A^3u},
\end{align*}
which is \eqref{eq:coherent-cubic-exact}.

For the remainder estimate, $0\leq\tau_rJ_\omega\leq\norm{J_\omega}_{L^\infty}$
implies
\[
    \inprod{u}{(\tau_rJ_\omega)^2u}
    \leq
    \norm{J_\omega}_{L^\infty}^2\norm u^2,
    \qquad
    \inprod{u}{(\tau_rJ_\omega)^3u}
    \leq
    \norm{J_\omega}_{L^\infty}^3\norm u^2 .
\]
Let $q=(1+\delta)/\delta$.  Hölder's inequality gives
\begin{align*}
&\int
\abs{\cM_\lambda^{\tau_rJ_\omega}[u]}\norm u^2
e^{-\cD_\lambda[u]}\dd\mu_\lambda(u)                                      \\
&\leq
\norm{e^{-\cD_\lambda}}_{L^{1+\delta}(\mu_\lambda)}
\norm{\cM_\lambda^{\tau_rJ_\omega}}_{L^{2q}(\mu_\lambda)}
\Big(
\int\norm u^{4q}\dd\mu_\lambda(u)
\Big)^{1/(2q)}.
\end{align*}
By Lemma~\ref{lem:Mass-Lp},
\[
    \norm{\cM_\lambda^{\tau_rJ_\omega}}_{L^{2q}(\mu_\lambda)}
    \lesssim_{\delta}
    \norm{J_\omega}_{L^\infty}(\tr h^{-2})^{1/2},
\]
Since
$\norm u^2$ is the sum of independent complex Gaussian squares, an estimate similar to \eqref{eq:lambda-L2norm-moment} gives
\[
    \Big(
    \int\norm u^{4q}\dd\mu_\lambda(u)
    \Big)^{1/(2q)}
    \lesssim_{\delta}
    \lambda N_0 .
\]
Hence,
\begin{align*}
\int
\abs{\cM_\lambda^{\tau_rJ_\omega}[u]}\norm u^2
e^{-\cD_\lambda[u]}\dd\mu_\lambda(u)                                      \lesssim_{\delta}
\norm{e^{-\cD_\lambda}}_{L^{1+\delta}(\mu_\lambda)}
\norm{J_\omega}_{L^\infty}
(\tr h^{-2})^{1/2}
\lambda N_0 .
\end{align*}
Similarly,
\[
    \int\norm u^2e^{-\cD_\lambda[u]}\dd\mu_\lambda(u)
    \lesssim_{\delta}
    \norm{e^{-\cD_\lambda}}_{L^{1+\delta}(\mu_\lambda)}
    \lambda N_0 .
\]
Integrating in $r,\omega$, gives
\eqref{eq:coherent-cubic-remainder}.  Proposition~\ref{prop:full-exp}
bounds the $L^{1+\delta}$-norm, Lemma~\ref{lem:Clambda-trace-bound} gives
$\lambda^2N_0\to0$, and $z_\lambda\to z>0$; the right-hand side of
\eqref{eq:coherent-cubic-remainder} therefore tends to zero.
\end{proof}

\medskip

\begin{proposition}[Relative free-energy upper bound]\label{prop:coherent-upper-bound}
We have
\begin{equation}\label{eq:coherent-upper-bound}
    \limsup_{\lambda\downarrow0}
    \left(
    -\log\frac{Z_\lambda}{Z_0}
    \right)
    \leq
    -\log z .
\end{equation}
\end{proposition}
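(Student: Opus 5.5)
We test the Gibbs variational principle with the coherent trial state $\widetilde\Gamma_\lambda$ defined just before Lemma~\ref{lem:coherent-cubic-comparison}, exactly as in the quadratic sketch of Section~\ref{sec:strategy}. Set $\dd\tilde\mu_\lambda=z_\lambda^{-1}e^{-\cD_\lambda}\dd\mu_\lambda$. This is a probability measure, since $0<z_\lambda<\infty$, and $\widetilde\Gamma_\lambda$ is a state. The variational principle gives
\[
    -\log\frac{Z_\lambda}{Z_0}
    \leq
    \cH(\widetilde\Gamma_\lambda,\Gamma_0)
    +
    \tr_{\gF}\big(\bW_\lambda^{\ren}\widetilde\Gamma_\lambda\big).
\]

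\textbf{Entropy term.} By Lemma~\ref{lem:free-coherent-representation}, $\Gamma_0$ and $\widetilde\Gamma_\lambda$ are images of $\mu_\lambda$ and $\tilde\mu_\lambda$ under the same map $\nu\mapsto\int|\xi(u/\sqrt\lambda)\rangle\langle\xi(u/\sqrt\lambda)|\dd\nu(u)$. This map is a quantum channel from the commutative algebra $L^\infty(\mu_\lambda)$ to states on $\gF$. Monotonicity of relative entropy under such channels (the Berezin--Lieb type inequality) therefore yields
\[
    \cH(\widetilde\Gamma_\lambda,\Gamma_0)
    \leq
    \cH_{\cl}(\tilde\mu_\lambda,\mu_\lambda)
    =
    -\log z_\lambda-\int\cD_\lambda\dd\tilde\mu_\lambda .
\]
The identity on the right uses $\cD_\lambda e^{-\cD_\lambda}\in L^1(\mu_\lambda)$. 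This follows from Proposition~\ref{prop:full-exp}, giving $e^{-\cD_\lambda}\in L^{1+\delta}$, together with $\cD_\lambda\in L^p(\mu_\lambda)$ for all $p$, which in turn follows from Lemma~\ref{lem:Mass-Lp} and Minkowski's inequality over the channels. I would state the monotonicity either via Lindblad--Uhlmann for the measure-and-prepare map, or by approximating with finite-dimensional cutoffs.

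\textbf{Interaction term.} Lemma~\ref{lem:coherent-cubic-comparison} gives directly
\[
    \tr_{\gF}\big(\bW_\lambda^{\ren}\widetilde\Gamma_\lambda\big)
    \leq
    \int\cD_\lambda\dd\tilde\mu_\lambda+o(1).
\]
Here one must justify that $\tr(\bW_\lambda^{\ren}\widetilde\Gamma_\lambda)$ equals the $u$-integral of the coherent expectations. This holds by Fubini, since every term is controlled by polynomial moments of $\norm u^2$, which are finite under $\mu_\lambda$ for $\lambda>0$.

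\textbf{Conclusion and main obstacle.} Adding the two bounds, the $\int\cD_\lambda\dd\tilde\mu_\lambda$ terms cancel, leaving
\[
    -\log\frac{Z_\lambda}{Z_0}\leq -\log z_\lambda+o(1).
\]
Proposition~\ref{prop:tilted-density-common} gives $z_\lambda\to z$, which proves \eqref{eq:coherent-upper-bound}. The main obstacle is the entropy step. Unlike the quadratic case, $\cD_\lambda$ is not bounded below, so the finiteness of $\int\cD_\lambda\dd\tilde\mu_\lambda$ and the rigor of the data-processing inequality in infinite dimensions must be checked. I would handle both by truncating the Fock space to finitely many modes and particles, then passing to the limit using lower semicontinuity of relative entropy and the $L^{1+\delta}$ bounds above.
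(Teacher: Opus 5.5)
Your proposal is correct and follows the paper's proof essentially step by step: the coherent trial state $\widetilde\Gamma_\lambda$, monotonicity of relative entropy under the coherent-state preparation map giving $\cH(\widetilde\Gamma_\lambda,\Gamma_0)\leq -\log z_\lambda-\int\cD_\lambda\dd\tilde\mu_\lambda$, Lemma~\ref{lem:coherent-cubic-comparison} for the interaction term, and $z_\lambda\to z$. The differences are minor. You justify $\cD_\lambda e^{-\cD_\lambda}\in L^1(\mu_\lambda)$ by H\"older with $\cD_\lambda\in L^q(\mu_\lambda)$, whereas the paper uses $|x|e^{-x}\leq C_\delta(1+e^{-(1+\delta)x})$. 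The finite-dimensional truncation you propose for the data-processing step is more care than the paper takes, since it simply cites Lindblad and Ohya--Petz.
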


\medskip

\begin{proof}
Set
\[
    f_\lambda(u)
    :=
    z_\lambda^{-1}e^{-\cD_\lambda[u]}.
\]
The exponential estimate in Proposition~\ref{prop:full-exp} and the
elementary bound, valid for some $C_\delta<\infty$,
\[
    |x|e^{-x}
    \leq
    C_\delta\big(1+e^{-(1+\delta)x}\big),
    \qquad x\in\R,
\]
imply
\[
    \int |\cD_\lambda[u]|f_\lambda(u)\dd\mu_\lambda(u)<\infty,
    \qquad
    \int f_\lambda|\log f_\lambda|\dd\mu_\lambda<\infty.
\]
Lemma~\ref{lem:coherent-cubic-comparison}, together with the lower bound
on $\bW_\lambda^{\ren}$ from
Proposition~\ref{prop:quantum-wellposedness}, also shows that
$\bW_\lambda^{\ren}$ is integrable in the state
$\widetilde\Gamma_\lambda$.

Define the coherent-state preparation map
\[
    \Phi_\lambda(f)
    :=
    \int
    \left|\xi(u/\sqrt\lambda)\right\rangle
    \left\langle\xi(u/\sqrt\lambda)\right|
    f(u)\dd\mu_\lambda(u),
    \qquad f\in L^1(\mu_\lambda).
\]
It is completely positive and trace preserving, since
\[
    \tr_{\gF}(\Phi_\lambda(f))
    =
    \int f\,\dd\mu_\lambda.
\]
Furthermore,
\[
    \Phi_\lambda(1)=\Gamma_0,
    \qquad
    \Phi_\lambda(f_\lambda)=\widetilde\Gamma_\lambda
\]
by Lemma~\ref{lem:free-coherent-representation}.  Monotonicity of
relative entropy under this map \cite{Lindblad75,OhyaPetz} gives
\begin{align*}
\cH(\widetilde\Gamma_\lambda,\Gamma_0)
\leq
\cH_{\cl}(f_\lambda\mu_\lambda,\mu_\lambda)
=
\int f_\lambda\log f_\lambda\,\dd\mu_\lambda
=
-\log z_\lambda
-
\int\cD_\lambda[u]f_\lambda(u)\dd\mu_\lambda(u).
\end{align*}
Adding $\tr_{\gF}(\bW_\lambda^{\ren}\widetilde\Gamma_\lambda)$ and using
Lemma~\ref{lem:coherent-cubic-comparison} yields
\begin{align*}
\cH(\widetilde\Gamma_\lambda,\Gamma_0)
+
\tr_{\gF}(\bW_\lambda^{\ren}\widetilde\Gamma_\lambda)
&\leq
-\log z_\lambda
+
\Big[
\tr_{\gF}(\bW_\lambda^{\ren}\widetilde\Gamma_\lambda)
-
\int\cD_\lambda[u]f_\lambda(u)\dd\mu_\lambda(u)
\Big]
\\
&\leq
-\log z_\lambda+o(1)
=
-\log z+o(1).
\end{align*}
The Gibbs variational principle therefore gives
\begin{align*}
-\log\frac{Z_\lambda}{Z_0}
\leq
\cH(\widetilde\Gamma_\lambda,\Gamma_0)
+
\tr_{\gF}(\bW_\lambda^{\ren}\widetilde\Gamma_\lambda)
\leq
-\log z+o(1),
\end{align*}
which proves \eqref{eq:coherent-upper-bound}.
\end{proof}

The matching lower and upper bounds determine the free energy.  The same
identity also shows that the true Gibbs state is close in trace norm to the
coherent trial state.

\medskip

\begin{corollary}[Free-energy and state convergence]\label{cor:state-trace-convergence}
The free-energy convergence \eqref{eq:free-energy-main-limit} holds and
\begin{equation}\label{eq:state-trace-convergence}
    \norm{\Gamma_\lambda-\widetilde\Gamma_\lambda}_{\gS^1}\to0 .
\end{equation}
\end{corollary}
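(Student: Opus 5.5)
The free-energy convergence \eqref{eq:free-energy-main-limit} follows immediately by combining Proposition~\ref{prop:free-energy-lower} with Proposition~\ref{prop:coherent-upper-bound}. For the trace-norm statement the plan is to use the exact Gibbs variational identity. Since $\Gamma_\lambda$ is the minimizer, for every state $\Gamma$ with finite energy we have
\[
    \cH(\Gamma,\Gamma_0)+\tr_{\gF}(\bW_\lambda^{\ren}\Gamma)
    =
    -\log\frac{Z_\lambda}{Z_0}+\cH(\Gamma,\Gamma_\lambda).
\]
Formally this comes from $\log\Gamma_\lambda=-\lambda\dG(h)-\bW_\lambda^{\ren}-\log Z_\lambda$ and $\log\Gamma_0=-\lambda\dG(h)-\log Z_0$. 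To make it rigorous, I would use that $\bW_\lambda^{\ren}$ is bounded on each $n$-particle sector and bounded below (Proposition~\ref{prop:quantum-wellposedness}). I would also need $\tr(\bW_\lambda^{\ren}\widetilde\Gamma_\lambda)$ and $\cH(\widetilde\Gamma_\lambda,\Gamma_0)$ to be finite, and both facts are established in the proof of Proposition~\ref{prop:coherent-upper-bound}. If needed, the identity can be justified by a particle-number truncation and a monotone limit.

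Next, apply the identity with $\Gamma=\widetilde\Gamma_\lambda$. The upper bound established in the proof of Proposition~\ref{prop:coherent-upper-bound} gives
\[
\cH(\widetilde\Gamma_\lambda,\Gamma_0)+\tr(\bW_\lambda^{\ren}\widetilde\Gamma_\lambda)\leq-\log z+o(1).
\]
The lower bound of Proposition~\ref{prop:free-energy-lower} gives
\[
-\log(Z_\lambda/Z_0)\geq-\log z-o(1).
\]
Subtracting these, we obtain $\cH(\widetilde\Gamma_\lambda,\Gamma_\lambda)\to0$.

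Finally, the quantum Pinsker inequality $\norm{\Gamma-\Gamma'}_{\gS^1}^2\leq2\cH(\Gamma,\Gamma')$ yields \eqref{eq:state-trace-convergence}.

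The main obstacle is the rigorous justification of the variational identity. The issue is that $\bW_\lambda^{\ren}$ is unbounded on Fock space and not of a fixed sign, so $\log\Gamma_\lambda$ is an unbounded operator. My first attempt would be to write $\cH(\Gamma,\Gamma_\lambda)=\cH(\Gamma,\Gamma_0)+\tr\big(\Gamma(\log\Gamma_0-\log\Gamma_\lambda)\big)$, which is valid because $\log\Gamma_0-\log\Gamma_\lambda=\bW_\lambda^{\ren}+\log(Z_\lambda/Z_0)$ on the common domain. I would then check this sector by sector, where everything is bounded relative to $\dG(h)$, and pass to the limit via the finite moments of $\cN$ under $\widetilde\Gamma_\lambda$. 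These moments are finite because $\widetilde\Gamma_\lambda$ is a coherent mixture with Gaussian-weighted $\norm u^2$ moments controlled by $\lambda N_0$ and the $L^{1+\delta}$ bound.
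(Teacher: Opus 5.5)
Your proposal is correct and follows essentially the same route as the paper. Both arguments combine the identity $\cH(\widetilde\Gamma_\lambda,\Gamma_\lambda)=\cH(\widetilde\Gamma_\lambda,\Gamma_0)+\tr_{\gF}(\bW_\lambda^{\ren}\widetilde\Gamma_\lambda)+\log(Z_\lambda/Z_0)$ with the upper bound from the proof of Proposition~\ref{prop:coherent-upper-bound}, the free-energy lower bound, and Pinsker's inequality; your added sector-wise justification of that identity for the unbounded $\bW_\lambda^{\ren}$, using finite moments of $\cN$ under $\widetilde\Gamma_\lambda$, covers a point the paper handles only formally.
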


\medskip

\begin{proof}
The convergence of the free energy follows from
Propositions~\ref{prop:free-energy-lower} and
\ref{prop:coherent-upper-bound}.  For the trace-norm convergence, use
\[
    \log\Gamma_\lambda
    =
    -\bH_\lambda^{\ren}-\log Z_\lambda
    =
    -\lambda\dG(h)-\bW_\lambda^{\ren}-\log Z_\lambda
\]
and
\[
    \log\Gamma_0
    =
    -\lambda\dG(h)-\log Z_0 .
\]
Therefore
\begin{align*}
\cH(\widetilde\Gamma_\lambda,\Gamma_\lambda)
&=
\tr_{\gF}
\left[
\widetilde\Gamma_\lambda
(\log\widetilde\Gamma_\lambda-\log\Gamma_\lambda)
\right]                                                                      \\
&=
\tr_{\gF}
\left[
\widetilde\Gamma_\lambda
(\log\widetilde\Gamma_\lambda-\log\Gamma_0)
\right]
+
\tr_{\gF}(\bW_\lambda^{\ren}\widetilde\Gamma_\lambda)
+
\log\frac{Z_\lambda}{Z_0}                                                    \\
&=
\cH(\widetilde\Gamma_\lambda,\Gamma_0)
+
\tr_{\gF}(\bW_\lambda^{\ren}\widetilde\Gamma_\lambda)
+
\log\frac{Z_\lambda}{Z_0}.
\end{align*}
The upper-bound proof gives
\[
    \cH(\widetilde\Gamma_\lambda,\Gamma_0)
    +
    \tr_{\gF}(\bW_\lambda^{\ren}\widetilde\Gamma_\lambda)
    \leq
    -\log z+o(1),
\]
and the free-energy convergence gives
\[
    \log\frac{Z_\lambda}{Z_0}
    \to
    \log z .
\]
Since relative entropy is nonnegative, we have
\[
    0\leq
    \cH(\widetilde\Gamma_\lambda,\Gamma_\lambda)\leq -\log z+\log\frac{Z_\lambda}{Z_0}+o(1) \to0.
\]
Pinsker's inequality gives
\[
    \norm{\Gamma_\lambda-\widetilde\Gamma_\lambda}_{\gS^1}^2
    \leq
    2\cH(\widetilde\Gamma_\lambda,\Gamma_\lambda),
\]
and \eqref{eq:state-trace-convergence} follows.
\end{proof}

\bigskip

\section{Quantum-to-Classical convergence of reduced density matrices}\label{sec:Q-C-density}

In this section, we prove the Hilbert--Schmidt convergence of the fixed-order reduced density matrices. We will use the notations introduced in Section~\ref{sec:convergence-classical-partition}. 
Before we start, we briefly sketch the strategy of the proof. 

We know from last section that the Gibbs state $\Gamma_\lambda$ is close to the coherent trial state $\tilde{\Gamma}_\lambda$. 
It remains to upgrade this trace-norm convergence of the states to convergence of their fixed-order reduced density matrices.
Lemma~11.4 of \cite{LNR21} gives
\begin{align*}
\norm{\Gamma_\lambda^{(k)}-\widetilde\Gamma_\lambda^{(k)}}_{\gS^2}^2
\lesssim_k
\norm{\Gamma_\lambda-\widetilde\Gamma_\lambda}_{\gS^1}
\sum_{\ell=k}^{2k}
\left(
\norm{\Gamma_\lambda^{(\ell)}}_{\gS^2}
+
\norm{\widetilde\Gamma_\lambda^{(\ell)}}_{\gS^2}
\right).
\end{align*}
Thus, besides the trace-norm convergence from
Corollary~\ref{cor:state-trace-convergence}, we need uniform
Hilbert--Schmidt bounds for the fixed-order reduced density matrices. Through the
Ginibre loop representation proved in
Appendix~\ref{app:Ginibre-loop-representation}, this monotonicity yields
the analogous pointwise kernel domination
\begin{equation}
    0\leq
    \Gamma_\lambda^{(\ell)}(\underline X_\ell;\underline Y_\ell)
    \leq
    \Gamma_0^{(\ell)}(\underline X_\ell;\underline Y_\ell),
    \qquad
    \ell\geq1.
\end{equation}
See Proposition~\ref{prop:Ginibre-domination} in Appendix~\ref{app:Ginibre-loop-representation}. 

\medskip

\begin{lemma}[Gaussian cross moments]\label{lem:Gaussian-cross-moments}
Let $T\in\gS^2(\gH)$.  Let $g=(g_j)$ and $g'=(g'_j)$ be two independent
copies of the coordinate family on $\gX$.  For finite-rank $T$, define
\[
    Q_T(g,g')
    =
    \sum_{i,j}
    \overline{g_i}\,\inprod{e_i}{T e_j}\,g'_j .
\]
For general $T\in\gS^2$, $Q_T$ is the $L^{2r}(\gX\times\gX)$-limit of
$Q_{T_n}$ for any finite-rank sequence $T_n\to T$ in $\gS^2$.  For every
$r\geq1$,
\begin{equation}\label{eq:Gaussian-cross-moment}
    \int_{\gX\times\gX}|Q_T(g,g')|^{2r}
    \dd \mathbb{P}^{\otimes 2}(g,g')
    \leq
    \Gamma(r+1)^2\norm{T}_{\gS^2}^{2r}.
\end{equation}
Consequently,
\begin{equation}\label{eq:Gaussian-cross-moment-difference}
    \norm{Q_T-Q_S}_{L^{2r}(\gX\times\gX)}
    \leq
    \Gamma(r+1)^{1/r}\norm{T-S}_{\gS^2}.
\end{equation}
\end{lemma}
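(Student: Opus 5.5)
We will reduce to a diagonal form by a singular value decomposition and then use independence together with rotation invariance of the complex Gaussians. Take $T$ of finite rank first. Write $T=\sum_j s_j\ket{\phi_j}\bra{\psi_j}$, with singular values $s_j\geq0$ and orthonormal families $(\phi_j)$, $(\psi_j)$. Then
\[
    Q_T(g,g')=\sum_j s_j\,\overline{G_j}\,G'_j,
    \qquad
    G_j:=\sum_i\inprod{\phi_j}{e_i}g_i,\quad G'_j:=\sum_i\inprod{e_i}{\psi_j}\ldots
\]
More precisely, $G_j=\sum_i\overline{\inprod{e_i}{\phi_j}}\,g_i$ and $G'_j=\sum_i\inprod{e_i}{\psi_j}\,g'_i$, up to the conjugation convention. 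By unitary invariance of the standard complex Gaussian vector, $(G_j)$ and $(G'_j)$ are again independent families of i.i.d. standard complex Gaussians, and the two families are independent of each other.

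\textbf{Conditioning on $g'$.} Fix $g'$. Then $Q_T$ is a centered complex Gaussian in the variable $g$, with variance $\sigma^2=\sum_j s_j^2|G'_j|^2$. For a standard complex Gaussian $Z$ one has $\mathbb E|Z|^{2r}=\Gamma(r+1)$, so
\[
    \mathbb E_g|Q_T|^{2r}=\Gamma(r+1)\Big(\sum_j s_j^2|G'_j|^2\Big)^r .
\]

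\textbf{Averaging over $g'$.} Minkowski's inequality in $L^r$ gives
\[
    \Big\|\sum_j s_j^2|G'_j|^2\Big\|_{L^r}\leq\sum_j s_j^2\,\||G'_j|^2\|_{L^r}=\Gamma(r+1)^{1/r}\norm{T}_{\gS^2}^2 .
\]
Raising to the power $r$ and combining with the previous step yields \eqref{eq:Gaussian-cross-moment} for finite-rank $T$. The bound is valid for real $r\geq1$, since only $\mathbb E|Z|^{2r}$ and Minkowski's inequality are used.

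\textbf{Passage to general $T\in\gS^2$.} The map $T\mapsto Q_T$ is linear on finite-rank operators, since $Q_T-Q_S=Q_{T-S}$. The finite-rank bound therefore gives
\[
    \norm{Q_T-Q_S}_{L^{2r}}\leq\Gamma(r+1)^{1/r}\norm{T-S}_{\gS^2}.
\]
Hence for $T_n\to T$ in $\gS^2$ the sequence $(Q_{T_n})$ is Cauchy in $L^{2r}(\gX\times\gX)$. The limit does not depend on the approximating sequence, because interleaving two such sequences produces another one. Both \eqref{eq:Gaussian-cross-moment} and \eqref{eq:Gaussian-cross-moment-difference} then pass to the limit by continuity of the norms.

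The main point needing care is the change of Gaussian coordinates. For finite rank, all the sums involve only finitely many directions after projecting onto the span of the $\phi_j,\psi_j$. If that span is not contained in a finite set of $e_i$, first approximate by operators whose range lies in finitely many coordinates; those operators are dense in $\gS^2$, so this costs nothing given the continuity argument above. Alternatively, one can define $Q_T$ from the start only for finitely supported matrices in the $e_i$ basis and apply the SVD within that finite block.
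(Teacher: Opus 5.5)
Your proof is correct and follows essentially the same route as the paper: a singular value decomposition plus unitary invariance reduces $Q_T$ to $\sum_j s_j\overline{G_j}G'_j$, conditioning on one Gaussian family gives $\Gamma(r+1)\bigl(\sum_j s_j^2|G'_j|^2\bigr)^r$, the remaining $r$-th moment is bounded by $\Gamma(r+1)\norm{T}_{\gS^2}^{2r}$, and the Cauchy argument extends the bound to $\gS^2$. The only difference in that moment bound is that you use Minkowski in $L^r$ where the paper uses convexity of $x\mapsto x^r$ with weights $s_j^2/\norm{T}_{\gS^2}^2$; this is the same estimate. Your extra caution about finite-rank $T$ not supported on finitely many $e_i$ is harmless but not needed: for any finite orthonormal family, the series $G_j$ converge in $L^2$ and already have the covariance of i.i.d.\ standard complex Gaussians.
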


\medskip

\begin{proof}
Assume first that $T$ has finite rank.  Let
\[
    T=\sum_j s_j |f_j\rangle\langle e_j|
\]
be a singular value decomposition over the nonzero singular values.  By
unitary invariance of the standard complex Gaussian family,
\[
    Q_T
    \stackrel{\mathrm{law}}=
    \sum_j s_j\overline{\eta_j}\eta'_j,
\]
where $(\eta_j)$ and $(\eta'_j)$ are independent standard complex Gaussian
families.  Conditioning on $(\eta_j)$, the random variable
$
    \sum_j s_j\overline{\eta_j}\eta'_j
$
is a centered complex Gaussian variable with variance
$
    \sum_j s_j^2\abs{\eta_j}^2.
$
Therefore
\[
    \int_{\gX}
    \big|
    \sum_j s_j\overline{\eta_j}\eta'_j
    \big|^{2r}
    \dd \mathbb{P}(\eta')
    =
    \Gamma(r+1)
    \Big(\sum_j s_j^2\abs{\eta_j}^2\Big)^r.
\]
If $\norm T_{\gS^2}>0$, set
\[
    a_j=\frac{s_j^2}{\norm T_{\gS^2}^2},
    \qquad
    \sum_j a_j=1.
\]
Convexity of $x\mapsto x^r$ gives
\[
    \Big(\sum_j a_j\abs{\eta_j}^2\Big)^r
    \leq
    \sum_j a_j\abs{\eta_j}^{2r}.
\]
Thus
\[
    \int_{\gX}
    \Big(\sum_j s_j^2\abs{\eta_j}^2\Big)^r
    \dd \mathbb{P}(\eta)
    \leq
    \Gamma(r+1)\norm T_{\gS^2}^{2r}.
\]
This proves \eqref{eq:Gaussian-cross-moment} for finite-rank $T$.  The case
$T=0$ is immediate.

For $T\in\gS^2$, choose finite-rank $T_n\to T$ in $\gS^2$.  Applying
the finite-rank estimate to $T_n-T_m$ gives
\[
    \norm{Q_{T_n}-Q_{T_m}}_{L^{2r}}
    \leq
    \Gamma(r+1)^{1/r}\norm{T_n-T_m}_{\gS^2}.
\]
Hence $Q_{T_n}$ converges in $L^{2r}$, and the limit is independent of the
approximating sequence.  Passing to the limit gives
\eqref{eq:Gaussian-cross-moment}.  Applying the same estimate to $T-S$
gives \eqref{eq:Gaussian-cross-moment-difference}.
\end{proof}

For $0\leq\lambda,\lambda'\leq1$ and an orthogonal projection $P$, define
\begin{equation}
    Q_{\lambda,\lambda'}^P
    :=
    Q_{\mathsf{C}_{\lambda}^{1/2}P \mathsf{C}_{\lambda'}^{1/2}} .
\end{equation}
We write $Q_{\lambda,\lambda'}=Q_{\lambda,\lambda'}^1$.  Since
$\mathsf{C}_{\lambda}\leq h^{-1}$,
\begin{equation}\label{eq:Q-S2-bound}
    \norm{\mathsf{C}_{\lambda}^{1/2}P \mathsf{C}_{\lambda'}^{1/2}}_{\gS^2}
    \leq
    \norm{h^{-1/2}P h^{-1/2}}_{\gS^2}.
\end{equation}
For $P=1$,
\[
    \norm{\mathsf{C}_{\lambda}^{1/2} \mathsf{C}_{\lambda'}^{1/2}}_{\gS^2}^2
    \leq
    \tr h^{-2}.
\]
For $P=1-P_K$,
\begin{equation}\label{eq:Q-tail-S2-bound}
    \norm{\mathsf{C}_{\lambda}^{1/2}(1-P_K)\mathsf{C}_{\lambda'}^{1/2}}_{\gS^2}^2
    \leq
    \tr((1-P_K)h^{-2}).
\end{equation}
The tail estimate for $Q_{\lambda,\lambda'}^{1-P_K}$ allows us first to
prove convergence for fixed Fourier cutoff and then remove the cutoff
uniformly.

\medskip

\begin{proposition}\label{prop:classical-correlation-operators}
For every fixed $k\geq1$,
\begin{equation}
    \gamma_{\lambda}^{(k)}
    :=
    \int
    |u^{\otimes k}\rangle\langle u^{\otimes k}|
    z_\lambda^{-1}e^{-\cD_\lambda[u]}\dd\mu_\lambda(u)
    \to
    \gamma_\mu^{(k)}
    \qquad\text{in }\gS^2.
\end{equation}
\end{proposition}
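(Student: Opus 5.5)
The plan is to work on the common coordinate space $\gX$ of Section~\ref{sec:convergence-classical-partition} and reduce Hilbert--Schmidt norms to Gaussian cross moments. For any two (possibly different) $\lambda,\lambda'$ and densities $F,G$ on $\gX$, the operators $\int|u_\lambda^{\otimes k}\rangle\langle u_\lambda^{\otimes k}|F\,\dd\mathbb P$ and $\int|u_{\lambda'}^{\otimes k}\rangle\langle u_{\lambda'}^{\otimes k}|G\,\dd\mathbb P$ have Hilbert--Schmidt inner product
\[
\iint \overline{F(g)}G(g')\,\big|\inprod{u_\lambda(g)}{u_{\lambda'}(g')}\big|^{2k}\dd\mathbb P(g)\dd\mathbb P(g')
=\iint \overline{F(g)}G(g')\,|Q_{\lambda,\lambda'}(g,g')|^{2k}\,\dd\mathbb P^{\otimes2},
\]
at least at finite cutoff. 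I first define everything with the projection $P_K$, i.e. with $u$ replaced by $P_Ku$, so that all objects are finite-rank and the identity is exact with $Q^{P_K}_{\lambda,\lambda'}$. Taking $F_\lambda=z_\lambda^{-1}e^{-\widehat\cD_\lambda}$ (which equals the Gibbs density by Lemma~\ref{lem:common-Gaussian-coordinate}) and $F_0=z^{-1}e^{-\widehat\cD_0}$, I expand
\[
\norm{\gamma^{(k)}_{\lambda,K}-\gamma^{(k)}_{0,K}}_{\gS^2}^2
\]
into the three terms $(\lambda,\lambda),(\lambda,0),(0,0)$.

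The first key step is convergence at fixed $K$. Since $P_K$ has finite rank, $Q^{P_K}_{\lambda,\lambda'}$ is a finite bilinear form in the Gaussians with coefficients $c_\lambda(p)^{1/2}c_{\lambda'}(p)^{1/2}$, which converge to $h(p)^{-1}$. By \eqref{eq:Gaussian-cross-moment-difference}, $Q^{P_K}_{\lambda,\lambda'}\to Q^{P_K}_{0,0}$ in every $L^{2r}$. Combining this with the $L^s$ convergence $F_\lambda\to F_0$ for $s<1+\delta$ from Proposition~\ref{prop:tilted-density-common} and Hölder's inequality (exponents $s$, $s$, and the conjugate for $|Q|^{2k}$), each cross term converges to the $(0,0)$ value, so the squared norm tends to $0$.

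The second key step is removal of the cutoff uniformly in $\lambda\in[0,\lambda_*]$. Write $Q_{\lambda,\lambda'}=Q^{P_K}_{\lambda,\lambda'}+Q^{1-P_K}_{\lambda,\lambda'}$. I would bound
\[
\norm{\gamma_\lambda^{(k)}-\gamma_{\lambda,K}^{(k)}}_{\gS^2}^2
\]
by a sum of terms of the form $\iint F_\lambda F_\lambda\,\big||Q|^{2k}-\text{mixed}\big|$. Using $|a^{2k}-b^{2k}|\lesssim_k |a-b|(|a|+|b|)^{2k-1}$, Hölder with the uniform bound \eqref{eq:tilted-density-common-uniform}, and Lemma~\ref{lem:Gaussian-cross-moments} with \eqref{eq:Q-tail-S2-bound}, this is $\lesssim (\tr((1-P_K)h^{-2}))^{1/2}(\tr h^{-2})^{(2k-1)/2}\to0$ uniformly in $\lambda$. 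The same argument applies at $\lambda=0$ and also defines $\gamma_\mu^{(k)}$ as the Hilbert--Schmidt limit of its cutoffs, consistent with \eqref{eq:classical-correlation-formal}. A $3\varepsilon$ argument then concludes.

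I expect the main obstacle to be justifying the passage from $P_K$ to the full $u$. This requires identifying $\gamma^{(k)}_\lambda$, for $\lambda>0$, as the $\gS^2$-limit of its cutoffs; for $\lambda>0$ this holds because $u\in\gH$ almost surely. It also requires writing the difference of the two cutoff-dependent kernels $|\inprod{P_Ku}{P_Kv}|^{2k}$ correctly in terms of $Q^{P_K}$ and $Q^{1-P_K}$. The delicate point is using Hölder with exponent $1+\delta$ on both densities simultaneously. On the product space $F_\lambda(g)F_\lambda(g')$ lies in $L^{1+\delta}(\mathbb P^{\otimes2})$ with norm equal to the square of the one-variable norm, so the conjugate exponent for the $Q$-moments is finite and Lemma~\ref{lem:Gaussian-cross-moments} applies.
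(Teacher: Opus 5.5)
Your proposal is correct and follows essentially the same route as the paper. Both arguments write the Hilbert--Schmidt inner products on the common coordinate space $\gX$ through $Q^{P_K}_{\lambda,\lambda'}$. Both get fixed-$K$ convergence from the $L^s$ convergence of the tilted densities together with Lemma~\ref{lem:Gaussian-cross-moments}, then remove the cutoff uniformly for $0<\lambda\leq\lambda_*$ using \eqref{eq:Q-tail-S2-bound} and \eqref{eq:tilted-density-common-uniform}, and conclude with a $3\varepsilon$ argument.

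The ``delicate'' step you flag is handled in the paper by one observation: $\inprod{u}{P_Ku'}=\inprod{P_Ku}{P_Ku'}$, so the cross term and the pure-cutoff term coincide. This gives the exact identity $\norm{\gamma_\lambda^{(k)}-\gamma_{\lambda,K}^{(k)}}_{\gS^2}^2=\iint F_\lambda(g)F_\lambda(g')\big(|Q_{\lambda,\lambda}|^{2k}-|Q^{P_K}_{\lambda,\lambda}|^{2k}\big)\dd\mathbb P^{\otimes2}$, to which your Hölder estimate applies directly.
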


\medskip

\begin{proof}
For $K<\infty$ and $0\leq\lambda\leq1$, set
\begin{equation}
    \gamma_{\lambda,K}^{(k)}
    =
    \int_{\gX}
    z_\lambda^{-1}e^{-\widehat{\cD}_\lambda}(g)
    |(P_Ku_\lambda(g))^{\otimes k}\rangle
    \langle(P_Ku_\lambda(g))^{\otimes k}|
    \dd \mathbb P(g).
\end{equation}

For $0\leq\lambda,\lambda'\leq1$,
\begin{align}
    \tr\left[
    \gamma_{\lambda,K}^{(k)}
    \gamma_{\lambda',K}^{(k)}
    \right]
    &=
    \int_{\gX\times\gX}
    z_\lambda^{-1}e^{-\widehat{\cD}_\lambda}(g) z_{\lambda'}^{-1}e^{-\widehat{\cD}_{\lambda'}}(g')
    \left|Q_{\lambda,\lambda'}^{P_K}(g,g')\right|^{2k}
    \dd \mathbb{P}^{\otimes 2}(g,g').
\end{align}
The identity follows from
\[
    \tr\left[
    |a^{\otimes k}\rangle\langle a^{\otimes k}|
    |b^{\otimes k}\rangle\langle b^{\otimes k}|
    \right]
    =
    \abs{\inprod{a}{b}}^{2k},
    \qquad
    a,b\in P_K\gH .
\]

Let $s\in(1,1+\delta)$ and set $q=s/(s-1)$.  By
Proposition~\ref{prop:tilted-density-common},
\[
    z_\lambda^{-1}e^{-\widehat{\cD}_\lambda}\to z^{-1}e^{-\widehat{\cD}_0}
    \qquad\text{in }L^s(\gX).
\]
Thus,
\[
    z_\lambda^{-1}e^{-\widehat{\cD}_\lambda}(g) z_{\lambda'}^{-1}e^{-\widehat{\cD}_{\lambda'}}(g')
    \to
    z^{-1}e^{-\widehat{\cD}_0}(g)z^{-1}e^{-\widehat{\cD}_0}(g')
    \qquad\text{in }L^s(\gX\times\gX)
\]
as $\lambda,\lambda'\downarrow0$.  For fixed $K$,
\[
    \mathsf{C}_{\lambda}^{1/2}P_K\mathsf{C}_{\lambda'}^{1/2}
    \to
    h^{-1/2}P_Kh^{-1/2}
    \qquad\text{in }\gS^2 .
\]
Lemma~\ref{lem:Gaussian-cross-moments} gives
\[
    Q_{\lambda,\lambda'}^{P_K}
    \to
    Q_{0,0}^{P_K}
    \qquad\text{in }L^{2kq}(\gX\times\gX),
\]
and hence
\[
    \abs{Q_{\lambda,\lambda'}^{P_K}}^{2k}
    \to
    \abs{Q_{0,0}^{P_K}}^{2k}
    \qquad\text{in }L^q(\gX\times\gX).
\]
Hölder's inequality in $\gX\times\gX$ yields
\begin{align}
    \tr\left[
    \gamma_{\lambda,K}^{(k)}
    \gamma_{\lambda',K}^{(k)}
    \right]
    \to
    \tr\left[
    \gamma_{0,K}^{(k)}
    \gamma_{0,K}^{(k)}
    \right]
\label{eq:finite-K-correlation-inner-conv}
\end{align}
as $\lambda,\lambda'\downarrow0$.  Taking $\lambda'=\lambda$ and
$\lambda'=0$ in \eqref{eq:finite-K-correlation-inner-conv} gives
\[
    \gamma_{\lambda,K}^{(k)}
    \to
    \gamma_{0,K}^{(k)}
    \qquad\text{in }\gS^2
\]
for every fixed $K$.

We now pass to $K\to\infty$.  For $0<\lambda\leq1$,
\[
    \gamma_{\lambda}^{(k)}
    =
    \int_{\gX}
    z_\lambda^{-1}e^{-\widehat{\cD}_\lambda}(g)
    |u_\lambda(g)^{\otimes k}\rangle
    \langle u_\lambda(g)^{\otimes k}|
    \dd \mathbb P(g),
\]
because $\mu_\lambda(L^2(\bT^d))=1$.  The Hilbert--Schmidt inner product formula
with $P=1$ and $P=P_K$ gives
\begin{align}
&
\norm{\gamma_{\lambda}^{(k)}-\gamma_{\lambda,K}^{(k)}}_{\gS^2}^2
\nn\\
&=
\int_{\gX\times\gX}
z_\lambda^{-1}e^{-\widehat{\cD}_\lambda}(g)z_\lambda^{-1}e^{-\widehat{\cD}_\lambda}(g')
\Big(
\abs{Q_{\lambda,\lambda}(g,g')}^{2k}
-
\abs{Q_{\lambda,\lambda}^{P_K}(g,g')}^{2k}
\Big)
\dd \mathbb{P}^{\otimes 2}(g,g').
\end{align}
Consequently,
\begin{align}
\norm{\gamma_{\lambda}^{(k)}-\gamma_{\lambda,K}^{(k)}}_{\gS^2}^2
\leq
\int_{\gX\times\gX}
z_\lambda^{-1}e^{-\widehat{\cD}_\lambda}(g)z_\lambda^{-1}e^{-\widehat{\cD}_\lambda}(g')
\Big|
\abs{Q_{\lambda,\lambda}}^{2k}
-
\abs{Q_{\lambda,\lambda}^{P_K}}^{2k}
\Big|
\dd \mathbb{P}^{\otimes 2}(g,g').
\end{align}
Since
$
    Q_{\lambda,\lambda}
    -
    Q_{\lambda,\lambda}^{P_K}
    =
    Q_{\lambda,\lambda}^{1-P_K}
$
and
\[
    \abs{\abs a^{2k}-\abs b^{2k}}
    \lesssim_k
    \abs{a-b}\big(\abs a^{2k-1}+\abs b^{2k-1}\big),
\]
Hölder's inequality, \eqref{eq:tilted-density-common-uniform}, and
Lemma~\ref{lem:Gaussian-cross-moments} imply
\begin{align}
\sup_{0<\lambda\leq\lambda_*}
\norm{\gamma_{\lambda}^{(k)}-\gamma_{\lambda,K}^{(k)}}_{\gS^2}^2
\lesssim_{k,\delta}
\left(\tr((1-P_K)h^{-2})\right)^{1/2}
\left(\tr h^{-2}\right)^{k-1/2}.
\end{align}
Indeed, \eqref{eq:Q-tail-S2-bound} controls the
$L^{2q}(\gX\times\gX)$-norm of $Q_{\lambda,\lambda}^{1-P_K}$, while
\eqref{eq:Q-S2-bound} controls the remaining $Q$-factors.  Therefore
\[
    \lim_{K\to\infty}
    \sup_{0<\lambda\leq\lambda_*}
    \norm{\gamma_{\lambda}^{(k)}-\gamma_{\lambda,K}^{(k)}}_{\gS^2}
    =
    0.
\]

The same argument with $\lambda=0$ and $P_L-P_K$ in place of $1-P_K$
gives
\[
    \norm{\gamma_{0,L}^{(k)}-\gamma_{0,K}^{(k)}}_{\gS^2}\to0
    \qquad(K,L\to\infty).
\]
Recall we define in \eqref{eq:classical-correlation-formal} that
\[
    \gamma_\mu^{(k)}
    :=
    \lim_{K\to\infty}\gamma_{0,K}^{(k)}
    \qquad\text{in }\gS^2.
\]

For every $K$,
\begin{align*}
\norm{\gamma_{\lambda}^{(k)}-\gamma_\mu^{(k)}}_{\gS^2}
\leq
\norm{\gamma_{\lambda}^{(k)}-\gamma_{\lambda,K}^{(k)}}_{\gS^2}
+
\norm{\gamma_{\lambda,K}^{(k)}-\gamma_{0,K}^{(k)}}_{\gS^2}
+
\norm{\gamma_{0,K}^{(k)}-\gamma_\mu^{(k)}}_{\gS^2}.
\end{align*}
First let $\lambda\downarrow0$ with $K$ fixed, and then let
$K\to\infty$.  This proves
\[
    \gamma_{\lambda}^{(k)}
    \to
    \gamma_\mu^{(k)}
    \qquad\text{in }\gS^2 .
\]
\end{proof}
The preceding proposition identifies the limiting correlations of the
coherent trial states.  We now pass from the trial states to the
Gibbs states using the trace-norm convergence and the uniform
Hilbert--Schmidt bounds.

\medskip

\begin{proposition}[Convergence of the fixed-order reduced density matrices]
For every fixed $k\geq1$, \eqref{eq:density-main-limit} holds.
\end{proposition}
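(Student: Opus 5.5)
We split the error into three pieces and prove that each tends to zero:
\[
    \norm{k!\lambda^k\Gamma_\lambda^{(k)}-\gamma_\mu^{(k)}}_{\gS^2}
    \leq
    k!\lambda^k\norm{\Gamma_\lambda^{(k)}-\widetilde\Gamma_\lambda^{(k)}}_{\gS^2}
    +\norm{k!\lambda^k\widetilde\Gamma_\lambda^{(k)}-\gamma_\lambda^{(k)}}_{\gS^2}
    +\norm{\gamma_\lambda^{(k)}-\gamma_\mu^{(k)}}_{\gS^2}.
\]
The third term vanishes by Proposition~\ref{prop:classical-correlation-operators}.

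\textbf{Second term.} By \eqref{eq:k-density-coherent-state}, each coherent state $\xi(u/\sqrt\lambda)$ has $k$-body density matrix $\frac{1}{k!\lambda^k}\ket{u^{\otimes k}}\bra{u^{\otimes k}}$. Integrating against $z_\lambda^{-1}e^{-\cD_\lambda}\dd\mu_\lambda$ therefore gives
\[
    k!\lambda^k\widetilde\Gamma_\lambda^{(k)}=\gamma_\lambda^{(k)}
\]
exactly. To justify exchanging the integral with the partial trace, we use the finite moments of $\norm u^{2k}$ together with the $L^{1+\delta}$ bound on the density, so the second term is zero.

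\textbf{First term.} We apply Lemma~11.4 of \cite{LNR21} in its scaled form. Multiplying by $(k!\lambda^k)^2$ gives
\[
    \norm{k!\lambda^k(\Gamma_\lambda^{(k)}-\widetilde\Gamma_\lambda^{(k)})}_{\gS^2}^2
    \lesssim_k
    \norm{\Gamma_\lambda-\widetilde\Gamma_\lambda}_{\gS^1}
    \sum_{\ell=k}^{2k}\lambda^{2k-\ell}\big(\ell!\lambda^\ell\norm{\Gamma_\lambda^{(\ell)}}_{\gS^2}+\ell!\lambda^\ell\norm{\widetilde\Gamma_\lambda^{(\ell)}}_{\gS^2}\big).
\]
Since $\lambda\le1$, the powers $\lambda^{2k-\ell}\le1$ are harmless. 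The trace-norm factor tends to zero by Corollary~\ref{cor:state-trace-convergence}, so it remains to show that $\ell!\lambda^\ell\Gamma_\lambda^{(\ell)}$ and $\ell!\lambda^\ell\widetilde\Gamma_\lambda^{(\ell)}$ are bounded in $\gS^2$ uniformly for small $\lambda$, for $k\le\ell\le2k$.

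For the trial state this is immediate: $\ell!\lambda^\ell\widetilde\Gamma_\lambda^{(\ell)}=\gamma_\lambda^{(\ell)}$ converges in $\gS^2$, hence is bounded.

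For the Gibbs state we use the kernel domination of Proposition~\ref{prop:Ginibre-domination}, namely $0\le\Gamma_\lambda^{(\ell)}(X;Y)\le C_\ell\Gamma_0^{(\ell)}(X;Y)$ pointwise. It implies $\norm{\Gamma_\lambda^{(\ell)}}_{\gS^2}\le C_\ell\norm{\Gamma_0^{(\ell)}}_{\gS^2}$, because the Hilbert--Schmidt norm is the $L^2$ norm of the kernel. Since $\Gamma_0$ is quasi-free, Wick's theorem gives
\[
    \Gamma_0^{(\ell)}=\gamma_0^{\otimes\ell}\ \text{restricted to the symmetric subspace},
\]
so $\ell!\lambda^\ell\Gamma_0^{(\ell)}=\ell!\,\mathsf C_\lambda^{\otimes\ell}$ on $\gH^{\otimes_s\ell}$. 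This has $\gS^2$ norm at most $\ell!(\tr\mathsf C_\lambda^2)^{\ell/2}\le\ell!(\tr h^{-2})^{\ell/2}$, which is uniform in $\lambda$ because $\tr h^{-2}<\infty$ for $d\le3$.

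\textbf{Main obstacle.} The hard step is this uniform Hilbert--Schmidt bound for the interacting state. It depends entirely on the pointwise domination from the Ginibre loop representation, which we take as given from the appendix, and on checking that the exponent normalization of Lemma~11.4 of \cite{LNR21} matches the $\lambda^\ell$ scaling. Combining the three estimates proves \eqref{eq:density-main-limit}.
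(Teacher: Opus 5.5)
Your proposal is correct and follows essentially the paper's argument: you use the exact identity $k!\lambda^k\widetilde\Gamma_\lambda^{(k)}=\gamma_\lambda^{(k)}$, Proposition~\ref{prop:classical-correlation-operators}, Lemma~11.4 of \cite{LNR21} together with the trace-norm convergence of Corollary~\ref{cor:state-trace-convergence}, and the uniform Hilbert--Schmidt bound \eqref{eq:interacting-HS-bound} coming from the Ginibre kernel domination. The only minor deviation is how you bound the trial-state terms: you get a uniform bound on $\lambda^\ell\norm{\widetilde\Gamma_\lambda^{(\ell)}}_{\gS^2}$ from the $\gS^2$-convergence of $\gamma_\lambda^{(\ell)}$ (Proposition~\ref{prop:classical-correlation-operators} at order $\ell$), whereas the paper bounds $\norm{\gamma_\lambda^{(\ell)}}_{\gS^2}$ directly via H\"older's inequality and Lemma~\ref{lem:Gaussian-cross-moments}, and both routes are valid.
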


\medskip

\begin{proof}
For $u\in L^2(\bT^d)$ and $\lambda>0$,
\[
    \Big(
    \left|\xi(u/ \sqrt{\lambda})\right\rangle
    \left\langle\xi(u/ \sqrt{\lambda})\right|
    \Big)^{(k)}
    =
    \frac1{k!\lambda^k}
    |u^{\otimes k}\rangle\langle u^{\otimes k}|;
\]
see \eqref{eq:k-density-coherent-state}.  Since $\mu_\lambda(L^2(\bT^d))=1$ for $\lambda>0$, integration gives
\[
    k!\lambda^k\widetilde\Gamma_\lambda^{(k)}
    =
    \gamma_{\lambda}^{(k)} .
\]
Proposition~\ref{prop:classical-correlation-operators} gives
\[
    k!\lambda^k\widetilde\Gamma_\lambda^{(k)}
    \to
    \gamma_\mu^{(k)}
    \qquad\text{in }\gS^2 .
\]

It remains to compare $\Gamma_\lambda$ and $\widetilde\Gamma_\lambda$.
Lemma~11.4 of \cite{LNR21} gives,
\begin{align}
\norm{\Gamma_\lambda^{(k)}-\widetilde\Gamma_\lambda^{(k)}}_{\gS^2}^2
\lesssim_k
\norm{\Gamma_\lambda-\widetilde\Gamma_\lambda}_{\gS^1}
\sum_{\ell=k}^{2k}
\big(
\norm{\Gamma_\lambda^{(\ell)}}_{\gS^2}
+
\norm{\widetilde\Gamma_\lambda^{(\ell)}}_{\gS^2}
\big).
\label{eq:LNR-density-comparison-used}
\end{align}
By \eqref{eq:interacting-HS-bound},
\[
    \sup_{0<\lambda\leq1}
    \lambda^\ell\norm{\Gamma_\lambda^{(\ell)}}_{\gS^2}<\infty .
\]
For $\widetilde\Gamma_\lambda$,
\[
    \lambda^\ell\norm{\widetilde\Gamma_\lambda^{(\ell)}}_{\gS^2}
    =
    \frac1{\ell!}
    \norm{\gamma_{\lambda}^{(\ell)}}_{\gS^2}.
\]
With $q=(1+\delta)/\delta$,
\[
    \norm{\gamma_{\lambda}^{(\ell)}}_{\gS^2}^2
    =
    \mathbb E_{g,g'}
    \left[
    z_\lambda^{-1}e^{-\widehat{\cD}_\lambda}(g)z_\lambda^{-1}e^{-\widehat{\cD}_\lambda}(g')
    \abs{Q_{\lambda,\lambda}(g,g')}^{2\ell}
    \right].
\]
Hölder's inequality and Lemma~\ref{lem:Gaussian-cross-moments}, applied to
$T=\mathsf{C}_{\lambda}$, give
\begin{align*}
\norm{\gamma_{\lambda}^{(\ell)}}_{\gS^2}
&\leq
z_\lambda^{-1}
\norm{e^{-\widehat{\cD}_\lambda}}_{L^{1+\delta}(\gX)}
\Gamma(\ell q+1)^{1/q}
\left(\tr \mathsf{C}_{\lambda}^2\right)^{\ell/2}.
\end{align*}
Since
$
    \tr \mathsf{C}_{\lambda}^2\leq\tr h^{-2},
$
$z_\lambda\to z>0$, and Proposition~\ref{prop:full-exp} holds,
\[
    \sup_{0<\lambda\leq\lambda_*}
    \lambda^\ell\norm{\widetilde\Gamma_\lambda^{(\ell)}}_{\gS^2}<\infty .
\]
Multiplying \eqref{eq:LNR-density-comparison-used} by $\lambda^{2k}$ gives
\[
\begin{aligned}
\lambda^{2k}
\norm{\Gamma_\lambda^{(k)}-\widetilde\Gamma_\lambda^{(k)}}_{\gS^2}^2
\lesssim_k
\norm{\Gamma_\lambda-\widetilde\Gamma_\lambda}_{\gS^1}
\sum_{\ell=k}^{2k}
\lambda^{2k-\ell}
\big(
\lambda^\ell\norm{\Gamma_\lambda^{(\ell)}}_{\gS^2}
+
\lambda^\ell\norm{\widetilde\Gamma_\lambda^{(\ell)}}_{\gS^2}
\big).
\end{aligned}
\]
The sum is uniformly bounded for $0<\lambda\leq\lambda_*$, and
\eqref{eq:state-trace-convergence} gives
\[
    \lambda^k
    \norm{\Gamma_\lambda^{(k)}-\widetilde\Gamma_\lambda^{(k)}}_{\gS^2}
    \to0 .
\]
Therefore
\[
\begin{aligned}
\norm{k!\lambda^k\Gamma_\lambda^{(k)}-\gamma_\mu^{(k)}}_{\gS^2}
\leq
k!\lambda^k
\norm{\Gamma_\lambda^{(k)}-\widetilde\Gamma_\lambda^{(k)}}_{\gS^2}
+
\norm{k!\lambda^k\widetilde\Gamma_\lambda^{(k)}-\gamma_\mu^{(k)}}_{\gS^2}
\to0.
\end{aligned}
\]
This proves \eqref{eq:density-main-limit}.
\end{proof}

It remains to prove Corollary~\ref{co:Lr-kernal-convergence}.

\begin{proof}[Proof of Corollary~\ref{co:Lr-kernal-convergence}]
    Recall that, by the properties of the free Gibbs state,
\[
    k!\lambda^k\Gamma_0^{(k)}
    =
    k!\mathsf C_\lambda^{\otimes k},
\]
see \cite[(5.33)]{LNR21}. Hence, using the symmetrized integral kernel of
\(\mathsf C_\lambda^{\otimes k}\), the triangle inequality and Fubini's
theorem give
\[
    \left\|
    k!\lambda^k\Gamma_0^{(k)}
    \right\|_{L^s((\bT^d)^{2k})}
    \leq
    k!\,
    \norm{\mathsf C_\lambda}_{L^s(\bT^d\times\bT^d)}^k.
\]
The one-particle kernels \(\mathsf C_\lambda\) are uniformly bounded in
\(L^s(\bT^d\times\bT^d)\) for every finite \(s\) when \(d=2\), and for
every \(1\leq s<3\) when \(d=3\).  Consequently, Proposition~\ref{prop:Ginibre-domination} yields
\[
    \sup_{0<\lambda\leq1}
    \left\|
    k!\lambda^k\Gamma_\lambda^{(k)}
    \right\|_{L^s((\bT^d)^{2k})}\leq
    \sup_{0<\lambda\leq1}
    \left\|
    k!\lambda^k\Gamma_0^{(k)}
    \right\|_{L^s((\bT^d)^{2k})}
    <\infty
\]
in the same range. Combining this with the Hilbert--Schmidt convergence in Theorem~\ref{thm:quantum-classical-main}, the finite-volume embedding for exponents below \(2\), and interpolation for exponents above \(2\), yields the asserted \(L^r\)-kernel convergence in \eqref{eq:kernal-Lr-convergence}.
\end{proof}

\appendix

\bigskip

\section{Gaussian estimates}\label{app:Gaussian}

We collect here the Gaussian estimates used in the construction 
of the nonlinear Gibbs measure and in the proof of the Hilbert--Schmidt convergence of the fixed-order reduced density matrices.

\medskip

\begin{lemma}[Quadratic complex Gaussian variables]\label{lem:Gaussian-Lp}
Let $g_1,\ldots,g_n$ be independent standard complex Gaussian variables and let $\alpha_1,\ldots,\alpha_n\in\R$.  Then for every $p\geq1$,
\begin{equation}\label{eq:Gaussian-Lp-constant}
    \left\|
    \sum_{j=1}^n\alpha_j(|g_j|^2-1)
    \right\|_{L^p}
    \leq
    \left[2\left(1+p\,4^p\Gamma(p)\right)\right]^{1/p}
    \Big(\sum_{j=1}^n\alpha_j^2\Big)^{1/2}.
\end{equation}
\end{lemma}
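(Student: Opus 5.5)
We will prove \eqref{eq:Gaussian-Lp-constant} through a tail bound followed by the layer-cake formula. Set $X=\sum_j\alpha_j(|g_j|^2-1)$ and $\sigma^2=\sum_j\alpha_j^2$. We may assume $\sigma>0$, and by homogeneity we normalize $\sigma=1$. Each $|g_j|^2$ is a standard exponential variable, so for $|s\alpha_j|<1$
\[
    \mathbb E e^{s\alpha_j(|g_j|^2-1)}=\frac{e^{-s\alpha_j}}{1-s\alpha_j}=\exp\bigl(f(-s\alpha_j)\bigr),
    \qquad f(x)=x-\log(1+x).
\]

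\textbf{Step 1: Laplace transform bound.} For $|x|\leq1/2$ we have $f(x)\leq x^2$; this follows from the series $f(x)=\sum_{m\geq2}(-x)^m/m$. Since $\max_j|\alpha_j|\leq\sigma=1$, every $|s|\leq1/2$ satisfies $|s\alpha_j|\leq1/2$. Independence then gives
\[
    \mathbb E e^{sX}\leq e^{s^2},\qquad |s|\leq 1/2 .
\]

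\textbf{Step 2: Tail bound.} By Markov's inequality applied to $X$ and to $-X$, and optimizing over $s\in[0,1/2]$, we get $\mathbb P(|X|\geq R)\leq2\exp(-\min\{R^2/4,R/4\})$. It is convenient to use the weaker but uniform form $\mathbb P(|X|\geq R)\leq 2e^{-R/4}$ for $R\geq1$. For $R\geq4$, take $s=1/2$; this gives $2e^{-R/2+1/4}\leq2e^{-R/4}$. For $1\leq R\leq4$, take $s=R/2$ to obtain $2e^{-R^2/4}\leq 2e^{-R/4}$, which is valid because $R^2\geq R$. The choice $s=R/2$ exceeds $1/2$ in this range. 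We would handle this in one of two ways. The first option is to re-derive the bound $f(x)\leq x^2$ for $x\geq -1/2$ on the relevant side, since $f(x)\leq x^2/2$ for $x\geq0$. The second option is simply to take $s=1/4$ for the whole range $R\geq1$: this gives $2e^{-R/4+1/16}$, and the harmless factor $e^{1/16}$ can be absorbed. This bookkeeping is the only delicate point; the constants in the statement are generous, so it suffices to establish the bound $\mathbb P(|X|\geq R)\leq 2e^{-R/4}$ for $R\geq1$ (possibly with the factor $e^{1/16}$ absorbed).

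\textbf{Step 3: Layer-cake.} We split the layer-cake integral at $R=1$:
\[
    \mathbb E|X|^p=p\int_0^\infty R^{p-1}\mathbb P(|X|\geq R)\dd R
    \leq 1+2p\int_0^\infty R^{p-1}e^{-R/4}\dd R=1+2p\,4^p\Gamma(p).
\]
This is bounded by $2(1+p4^p\Gamma(p))$, and taking $p$-th roots and undoing the normalization gives \eqref{eq:Gaussian-Lp-constant}.

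The main obstacle is only the constant chasing in Step 2, namely keeping $s$ in the admissible range of the moment generating function bound. The factor $2$ in front of $(1+p4^p\Gamma(p))$ gives enough slack to absorb the loss from $e^{1/16}$.
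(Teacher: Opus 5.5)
Your route is the paper's: the same bound $\mathbb E e^{sX}\le e^{s^2}$ for $|s|\le 1/2$, a Chernoff tail estimate, and the layer-cake formula split at $R=1$. The problem is Step 2. The difficulty you flag there is self-inflicted, and neither of your two remedies works as written.

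\emph{Option 1 is invalid.} For $s>1/2$ the moment generating function need not be finite, because the $\alpha_j$ have arbitrary signs and can be as large as $\sigma=1$. For instance, with $n=1$ and $\alpha_1=1$ one has $\mathbb E e^{s(|g_1|^2-1)}=+\infty$ for $s\geq1$. So the choice $s=R/2$ is unavailable for the upper tail once $R\geq2$. Even where the transform is finite, $f(x)\le x^2$ fails for $x<-1/2$; for example $f(-0.7)\approx0.504>0.49$.

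\emph{Option 2 does not give the stated constant.} With the tail $2e^{1/16}e^{-R/4}$ on $[1,\infty)$, the layer-cake bound becomes $1+2e^{1/16}p\int_1^\infty R^{p-1}e^{-R/4}\dd R$. For $p=2$ this equals $1+80e^{-3/16}\approx 67.3$, which exceeds $2(1+2\cdot 4^2\Gamma(2))=66$. The slack between $1+2p4^p\Gamma(p)$ and $2(1+p4^p\Gamma(p))$ is only an additive $1$. The factor $e^{1/16}$, by contrast, multiplies the dominant term, which grows like $4^p\Gamma(p+1)$, so it cannot be absorbed for $p\ge2$.

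The fix is already in your own text. With the admissible choice $s=1/2$, the Chernoff bound is $e^{-R/2+1/4}$, and $-R/2+1/4\le -R/4$ holds exactly when $R\ge1$, not only when $R\geq4$. In fact $s=1/2$ is the constrained optimum of $-sR+s^2$ over $s\in[0,1/2]$ for every $R\ge1$. Hence $\mathbb P(|X|\ge R)\le 2e^{-R/4}$ for all $R\ge1$, which is exactly what your opening sentence of Step 2 asserts and what the paper uses. With this correction your Step 3 goes through. It even gives the slightly sharper bound $1+2p4^p\Gamma(p)$, because you use $\mathbb P\le1$ on $[0,1]$ where the paper uses $2$.
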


\medskip

\begin{proof}
Put $\sum_j\alpha_j^2>0$; the case $\sum_j\alpha_j^2=0$ gives zero on both sides of \eqref{eq:Gaussian-Lp-constant}.  For $|t|\leq1/2$,
\begin{align*}
\mathbb E
\exp\left[
t\frac{\sum_j\alpha_j(|g_j|^2-1)}
{\Big(\sum_j\alpha_j^2\Big)^{1/2}}
\right]
 =
\prod_j
\frac{
\exp\left(-t\alpha_j/\left(\sum_i\alpha_i^2\right)^{1/2}\right)
}{
1-t\alpha_j/\left(\sum_i\alpha_i^2\right)^{1/2}
}.
\end{align*}
For $|x|\leq1/2$,
\[
    -x-\log(1-x)
    =
    \int_0^x\frac{s}{1-s}\dd s
    \leq
    \int_0^{|x|}2s\dd s
    =
    x^2 .
\]
Hence
\begin{equation}\label{eq:normalized-quadratic-mgf}
    \mathbb E
    \exp\left[
    t\frac{\sum_j\alpha_j(|g_j|^2-1)}
    {\left(\sum_j\alpha_j^2\right)^{1/2}}
    \right]
    \leq e^{t^2},
    \qquad |t|\leq \frac12 .
\end{equation}
Let $Y=\sum_j\alpha_j(|g_j|^2-1)/(\sum_j\alpha_j^2)^{1/2}$.  For $0\leq y\leq1$, \eqref{eq:normalized-quadratic-mgf} with $t=y/2$ gives
\[
    \mathbb P(Y\geq y)
    \leq
    e^{-y^2/4}.
\]
For $y\geq1$, \eqref{eq:normalized-quadratic-mgf} with $t=1/2$ gives
\[
    \mathbb P(Y\geq y)
    \leq
    e^{-y/2+1/4}
    \leq
    e^{-y/4}.
\]
Applying the same estimates to $-Y$,
\[
    \mathbb P(|Y|\geq y)
    \leq
    2\mathbf 1_{\{0\leq y\leq1\}}
    +
    2e^{-y/4}\mathbf 1_{\{y\geq1\}}.
\]
Thus,
\begin{align*}
    \mathbb E|Y|^p
    &=
    p\int_0^\infty y^{p-1}\mathbb P(|Y|\geq y)\dd y
    \\
    &\leq
    2p\int_0^1 y^{p-1}\dd y
    +
    2p\int_1^\infty y^{p-1}e^{-y/4}\dd y\leq
    2\left(1+p\,4^p\Gamma(p)\right).
\end{align*}
This is \eqref{eq:Gaussian-Lp-constant}.
\end{proof}

\medskip

\begin{lemma}[Explicit lattice sums]\label{lem:lattice-sums}
For every $t\geq0$,
\begin{align}
    \sum_{p\in\Z^3}
\min\left\{
\frac{t}{h(p)},
\frac{t^2}{2h(p)^2}
\right\}
    &\leq 80\,t^{3/2},
    \label{eq:lattice-3d}
    \\
    \sum_{p\in\Z^2}
\min\left\{
\frac{t}{h(p)},
\frac{t^2}{2h(p)^2}
\right\}
    &\leq 20\,t\log(2+t).
    \label{eq:lattice-2d}
\end{align}
\end{lemma}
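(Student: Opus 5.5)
The plan is to bound each sum by splitting the lattice at the threshold where the two entries of the minimum cross. For each $p$ we have $t/h(p)\le t^2/(2h(p)^2)$ exactly when $h(p)\le t/2$. So I will use $t^2/(2h(p)^2)$ on the region $\{h(p)>t/2\}$ and $t/h(p)$ on the complementary region $\{h(p)\le t/2\}$. Both pieces will then be compared with integrals, using
\[
\#\{p\in\Z^d: |p|^2\le R\}\le (2\sqrt R+1)^d
\]
together with a layer-cake (dyadic shell) decomposition in $h(p)$.

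For $d=3$, first treat $t\le 2$. Then the bound $\sum_p t^2/(2h(p)^2)=\tfrac{t^2}{2}\tr h^{-2}$ suffices, since $\tr h^{-2}$ is an explicit finite constant in three dimensions. Because $t^2\le \sqrt2\, t^{3/2}$ here, this gives a bound of the required form. For $t>2$, split at $|p|^2\approx t/2$.
\begin{itemize}
\item The inner sum satisfies $\sum_{h(p)\le t/2} t/h(p)\lesssim t\cdot\sqrt{t}$. To see this, bound it by $t\sum_{h\le t/2}h^{-1}$ and compare with $\int_{|k|\le\sqrt t} (1+|k|^2)^{-1}dk\le 4\pi\sqrt t$, adjusting for lattice points.
\item The outer sum satisfies $\sum_{h(p)>t/2} t^2/(2h^2)\lesssim t^2\int_{|k|\ge c\sqrt t}|k|^{-4}dk\lesssim t^{3/2}$.
\end{itemize}
Both pieces are therefore $O(t^{3/2})$.

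For $d=2$, split in the same way. The inner sum $t\sum_{h(p)\le t/2}h(p)^{-1}$ is bounded by $t(1+2\pi\log(1+t)+\ldots)$. The outer sum $\tfrac{t^2}{2}\sum_{h>t/2}h^{-2}$ is bounded by $t^2\cdot C/t$. Since $\log(2+t)\ge\log 2$ absorbs the constants, the total is $\le 20\,t\log(2+t)$. For small $t$ I would instead use the first entry of the minimum near the origin and the second far away.

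The only real work is getting explicit constants below $80$ and $20$. I would do this by comparing each lattice sum with the integral over unit cubes centered at the lattice points. One bound used is $h(p)\ge \tfrac12(1+|k|^2)$ for $k$ in the cube around $p$. In $d=3$ this comparison will also need $1+|k|^2\le 3h(p)$, since a cube corner lies at distance $\sqrt3/2$ from its center. These constant-loss factors are crude, but the targets $80$ and $20$ look generous enough that they should fit.

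I expect the main obstacle to be the bookkeeping of these constants, especially for small $t$ in $d=2$, where $t\log(2+t)\sim t\log 2$. There the inner region is essentially only the lattice points with $h(p)\le t/2$, which is empty when $t<2$. So one must check that $\tfrac{t^2}{2}\sum_{p\in\Z^2}h(p)^{-2}\le 20\,t\log 2$ for $t\le 2$; this holds since $\sum_{p\in\Z^2} h(p)^{-2}\le 1+\pi\cdot$(const).
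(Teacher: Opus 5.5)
The qualitative argument is fine, and your splitting is essentially the paper's: use $t/h(p)$ at low frequency, $t^2/(2h(p)^2)$ at high frequency, and the quadratic bound alone for small $t$. The gap is in the constants. The lemma is purely quantitative (its content is the constants $80$ and $20$), and you leave exactly that part as ``they should fit''.

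In $d=3$ the margin for unit-cube comparisons is small, and at least one comparison you propose overshoots it. Suppose you use $1+|k|^2\le 3h(p)$ pointwise on $Q_p=p+[-\frac12,\frac12]^3$. Then $h(p)^{-2}\le 9(1+|k|^2)^{-2}$ there. The high-frequency cubes lie in $\{|k|\ge\sqrt{t/2-1}-\sqrt3/2\}$, so the resulting bound for that piece alone is asymptotically $\frac{t^2}{2}\cdot 9\cdot 4\pi/\sqrt{t/2}=18\sqrt2\,\pi\,t^{3/2}\approx 79.97\,t^{3/2}$. The low-frequency piece then pushes the total above $80$.

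Similarly, a bound of the form $t^2\int_{|k|\ge c\sqrt t}|k|^{-4}\dd k$ with one fixed $c$ cannot hold for all $t\ge2$. The available radius $\sqrt{t/2-1}-\sqrt3/2$ is nonpositive for $t\le 7/2$, so an intermediate range of $t$ must be handled separately.

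The route can be closed, but only by actually doing the computation. For $p\in\Z^d$ and $k=p+\epsilon$ with $|\epsilon_i|\le\frac12$, integrality gives $2p\cdot\epsilon\le\sum_i|p_i|\le|p|^2$. Hence $1+|k|^2\le 2|p|^2+1+\frac d4\le 2h(p)$ for $d\le3$. So the factor $3$ is never needed: your two inequalities point the same way, and the sharper one already holds in $d=3$.

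Use this factor $2$ for both pieces. For $t\ge2$ the low-frequency cubes lie in $\{|k|\le\sqrt t\}$, giving at most $8\pi t^{3/2}$. For the high-frequency piece use the exact integral $\int_{|k|\ge R}(1+|k|^2)^{-2}\dd k=2\pi\bigl(\frac\pi2-\arctan R+\frac{R}{1+R^2}\bigr)$. With these, the total in $d=3$ is at most about $74\,t^{3/2}$. The analogous computation in $d=2$ closes with roughly ten percent to spare near $t\approx4$.

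The paper avoids all of this by counting $\ell^\infty$-shells instead of comparing with integrals. It uses $\#\{\norm{p}_{\ell^\infty}=m\}=24m^2+2\le26m^2$ in $d=3$ and $=8m$ in $d=2$, with $h(p)\ge m^2+1$ on each shell, and splits at $\norm{p}_{\ell^\infty}\le\sqrt t$ for $t\ge1$. This reduces everything to the elementary series $\sum_{m\le\sqrt t}1$, $\sum_{m>\sqrt t}m^{-2}$, $\sum_{m\le\sqrt t}\frac{m}{m^2+1}$ and $\sum_{m>\sqrt t}m^{-3}$. It gives $27+52=79$ and $t(17+4\log t)\le 20t\log(2+t)$ with no delicate bookkeeping.
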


\medskip

\begin{proof}
In dimension $3$,
\[
    \#\{p\in\Z^3:\|p\|_{\ell^\infty}=m\}
    =
    (2m+1)^3-(2m-1)^3
    =
    24m^2+2
    \leq26m^2
\]
for $m\geq1$.  If $0\leq t\leq1$, then
\begin{align*}
\sum_{p\in\Z^3}
\min\left\{
\frac{t}{h(p)},
\frac{t^2}{2h(p)^2}
\right\}
&\leq
\frac{t^2}{2}
\sum_{p\in\Z^3}\frac1{(|p|^2+1)^2}                                           
\leq
\frac{t^2}{2}
\Big(
1+26\sum_{m\geq1}\frac{m^2}{(m^2+1)^2}
\Big)                                                                      \\
&\leq
\frac{t^2}{2}
\Big(
1+26\sum_{m\geq1}\frac1{m^2}
\Big)
\leq
22t^2
\leq
22t^{3/2}.
\end{align*}
If $t\geq1$, then
\begin{align*}
\sum_{\|p\|_{\ell^\infty}\leq\sqrt t}
\min\left\{
\frac{t}{h(p)},
\frac{t^2}{2h(p)^2}
\right\}
\leq
t\sum_{\|p\|_{\ell^\infty}\leq\sqrt t}\frac1{|p|^2+1}                         
\leq
t\Big(
1+26\sum_{1\leq m\leq\sqrt t}1
\Big)
\leq
27t^{3/2},
\end{align*}
and
\begin{align*}
\sum_{\|p\|_{\ell^\infty}>\sqrt t}
\min\left\{
\frac{t}{h(p)},
\frac{t^2}{2h(p)^2}
\right\}
\leq
13t^2
\sum_{m>\sqrt t}\frac1{m^2}                                                  
\leq
52t^{3/2}.
\end{align*}
The two estimates give \eqref{eq:lattice-3d}.

In dimension $2$,
\[
    \#\{p\in\Z^2:\|p\|_{\ell^\infty}=m\}
    =
    (2m+1)^2-(2m-1)^2
    =
    8m .
\]
If $0\leq t\leq1$, then
\begin{align*}
\sum_{p\in\Z^2}\min\left\{
\frac{t}{h(p)},
\frac{t^2}{2h(p)^2}
\right\}
\leq
\frac{t^2}{2}
\Big(
1+8\sum_{m\geq1}\frac{m}{(m^2+1)^2}
\Big)                                                                      
\leq
6t^2
\leq
20t\log(2+t).
\end{align*}
If $t\geq1$, the low-frequency part satisfies
\begin{align*}
\sum_{\|p\|_{\ell^\infty}\leq\sqrt t}
\min\left\{
\frac{t}{h(p)},
\frac{t^2}{2h(p)^2}
\right\}
\leq
t\Big(
1+8\sum_{1\leq m\leq\sqrt t}\frac{m}{m^2+1}
\Big)                                                                      
\leq
t\left(9+4\log t\right),
\end{align*}
while the high-frequency part gives
\begin{align*}
\sum_{\|p\|_{\ell^\infty}>\sqrt t}
\min\left\{
\frac{t}{h(p)},
\frac{t^2}{2h(p)^2}
\right\}
\leq
4t^2\sum_{m>\sqrt t}\frac{m}{(m^2+1)^2}                                          
\leq
4t^2\sum_{m>\sqrt t}\frac1{m^3}
\leq
8t .
\end{align*}
Since $17+4\log t\leq20\log(2+t)$ for $t\geq1$, \eqref{eq:lattice-2d}
follows.
\end{proof}

\medskip

It remains to prove Lemma~\ref{lem:common-Gaussian-coordinate}.

\medskip

\begin{proof}[Proof of Lemma~\ref{lem:common-Gaussian-coordinate}]
Since $0\leq \mathsf{C}_{\lambda}\leq h^{-1}$,
\[
    \mathbb E\|u_\lambda\|_{H^{-\sigma}}^2
    =
    \sum_{p\in\Z^d}h(p)^{-\sigma}c_\lambda(p)
    \leq
    \sum_{p\in\Z^d}h(p)^{-\sigma-1}
    <\infty .
\]
For $f_1,\ldots,f_m\in H^\sigma$, the vector
\[
    \big(\inprod{f_1}{u_\lambda},\ldots,\inprod{f_m}{u_\lambda}\big)
\]
is centered complex Gaussian and satisfies
\[
    \mathbb E\left[
    \overline{\inprod{f_i}{u_\lambda}}
    \inprod{f_j}{u_\lambda}
    \right]
    =
    \inprod{f_j}{\mathsf{C}_{\lambda} f_i}.
\]
Thus $(u_\lambda)_\#\mathbb P=\mu_\lambda$, and the uniform convergence of $\widehat{\cM}_{\lambda,K}^A$ follows from Lemma~\ref{lem:Gaussian-Lp}.

For $K<\infty$, set
\[
    \widehat{\cD}_{\eta,K}
    =
    \frac16
    \int_\Omega\dd\nu(\omega)
    \int_{\bT^d}
    \left(
    \widehat{\cM}_{\eta,K}^{\tau_rJ_\omega}
    \right)^3\dd r,
    \qquad 0\leq \eta\leq1 .
\]
By Proposition~\ref{prop:D-L1}, pulled back through $u_\eta$,
\[
    \lim_{K\to\infty}
    \sup_{\eta\in[0,1]}
    \norm{
    \widehat{\cD}_{\eta,K}
    -
    \widehat{\cD}_{\eta}
    }_{L^1(\gX)}
    =
    0 .
\]
For fixed $K$ and $A=\tau_rJ_\omega$,
\[
\begin{aligned}
&
\widehat{\cM}_{\lambda,K}^{A}
-
\widehat{\cM}_{0,K}^{A}
\\
&=
\inprod{g}{
\left(
P_K\mathsf{C}_{\lambda}^{1/2}A\mathsf{C}_{\lambda}^{1/2}P_K
-
P_Kh^{-1/2}Ah^{-1/2}P_K
\right)g}
\\
&\quad-
\tr_{\gH}
\left(
P_K\mathsf{C}_{\lambda}^{1/2}A\mathsf{C}_{\lambda}^{1/2}P_K
-
P_Kh^{-1/2}Ah^{-1/2}P_K
\right).
\end{aligned}
\]
Since $K<\infty$,
\[
    P_K\mathsf{C}_{\lambda}^{1/2}A\mathsf{C}_{\lambda}^{1/2}P_K
    \to
    P_Kh^{-1/2}Ah^{-1/2}P_K
    \qquad\text{in }\gS^2 .
\]
Lemma~\ref{lem:Gaussian-Lp} gives
\[
    \norm{
    \widehat{\cM}_{\lambda,K}^{\tau_rJ_\omega}
    -
    \widehat{\cM}_{0,K}^{\tau_rJ_\omega}
    }_{L^3(\gX)}
    \to0 .
\]
Using $x^3-y^3=(x-y)(x^2+xy+y^2)$, Hölder's inequality, and
\eqref{eq:MK-channel-uniform-Lp},
\[
\begin{aligned}
&
\mathbb E
\left|
\left(\widehat{\cM}_{\lambda,K}^{\tau_rJ_\omega}\right)^3
-
\left(\widehat{\cM}_{0,K}^{\tau_rJ_\omega}\right)^3
\right|
\\
&\leq
\norm{
\widehat{\cM}_{\lambda,K}^{\tau_rJ_\omega}
-
\widehat{\cM}_{0,K}^{\tau_rJ_\omega}
}_{L^3(\gX)}
\Big[
\norm{\widehat{\cM}_{\lambda,K}^{\tau_rJ_\omega}}_{L^3(\gX)}^2
+
\norm{\widehat{\cM}_{0,K}^{\tau_rJ_\omega}}_{L^3(\gX)}^2
\\
&
+
\norm{\widehat{\cM}_{\lambda,K}^{\tau_rJ_\omega}}_{L^3(\gX)}
\norm{\widehat{\cM}_{0,K}^{\tau_rJ_\omega}}_{L^3(\gX)}
\Big]
\to0,
\end{aligned}
\]
and the left side is bounded by
$
    C\norm{J_\omega}_{L^\infty}^3\tr h^{-2}.
$
Thus \eqref{eq:J-Linf-assumption} and dominated convergence imply
\[
    \widehat{\cD}_{\lambda,K}
    \to
    \widehat{\cD}_{0,K}
    \qquad\text{in }L^1(\gX)
\]
for every fixed $K$.  Hence
\[
\begin{aligned}
\norm{\widehat{\cD}_\lambda-\widehat{\cD}_0}_{L^1(\gX)}
\leq
\norm{\widehat{\cD}_\lambda-\widehat{\cD}_{\lambda,K}}_{L^1(\gX)}
+
\norm{\widehat{\cD}_{\lambda,K}-\widehat{\cD}_{0,K}}_{L^1(\gX)}
+
\norm{\widehat{\cD}_{0,K}-\widehat{\cD}_{0}}_{L^1(\gX)} .
\end{aligned}
\]
Taking $\limsup_{\lambda\downarrow0}$ and then $K\to\infty$ gives
\[
    \widehat{\cD}_\lambda\to\widehat{\cD}_0
    \qquad\text{in }L^1(\gX).
\]
The identifications
\[
    \widehat{\cD}_\lambda=\cD_\lambda\circ u_\lambda
    \quad(\lambda>0),
    \qquad
    \widehat{\cD}_0=\cD_0\circ u_0
\]
follow from the identifications of $\widehat{\cM}_\lambda^A$, the definition
of $\cD_\lambda$ for $\lambda>0$, and the definition of $\cD_0$ as the
$L^1(\mu_0)$-limit of $\cD_{0,K}$.  Finally,
\[
    \mathbb E e^{-\widehat{\cD}_\lambda}
    =
    \int e^{-\cD_\lambda[u]}\dd\mu_\lambda(u)
    =
    z_\lambda
    \qquad(\lambda>0),
\]
and
\[
    \mathbb E e^{-\widehat{\cD}_0}
    =
    \int e^{-\cD_0[u]}\dd\mu_0(u)
    =
    z.
\]
\end{proof}

\bigskip

\section{Quasi-free trace identity}

\begin{lemma}[Quasi-free trace identity]\label{lem:quasi-free-trace-identity}
Let $A=A^*$ be bounded and assume $A\geq0$.  Then
\[
    \gamma_0^{1/2}(1-e^{-A})\gamma_0^{1/2}\in\gS^1(\gH)
\]
and
\begin{equation}\label{eq:quasi-free-trace-identity}
    \tr_{\gF}\left(e^{-\dG(A)}\Gamma_0\right)
    =
    \exp\left(
    -
    \tr_{\gH}
    \log\left[
    1+
    \gamma_0^{1/2}(1-e^{-A})\gamma_0^{1/2}
    \right]
    \right).
\end{equation}
\end{lemma}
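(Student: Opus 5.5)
The plan is to diagonalize everything in the Fourier basis, where $\Gamma_0$ factorizes over modes, and to reduce the trace to a Fredholm determinant. First I treat the case where $A$ has finite rank and commutes with finitely many spectral projections of $h$; the general case will follow by approximation.

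\textbf{Trace-class claim.} Since $0\le 1-e^{-A}\le \norm{A}$ (functional calculus, $A\ge0$), we have
\[
0\le\gamma_0^{1/2}(1-e^{-A})\gamma_0^{1/2}\le \norm{A}\gamma_0 .
\]
Moreover $\tr\gamma_0=N_0<\infty$. Hence this operator is positive and trace class, and the right-hand side of \eqref{eq:quasi-free-trace-identity} is a well-defined Fredholm determinant $\det(1+\gamma_0^{1/2}(1-e^{-A})\gamma_0^{1/2})^{-1}$.

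\textbf{Identity for $\Gamma(B)$.} I first prove the more general formula
\[
\tr_{\gF}\big(\Gamma(B)\Gamma_0\big)=\det\big(1-\gamma_0^{1/2}(B-1)\gamma_0^{1/2}\big)^{-1}
\]
for $0\le B\le1$, and then set $B=e^{-A}$, using $\Gamma(e^{-A})=e^{-\dG(A)}$.

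Write $\Gamma_0=Z_0^{-1}\Gamma(e^{-\lambda h})$. Then
\[
\tr_{\gF}\Gamma(e^{-\lambda h}B)=\sum_n\tr_{\gH^{\otimes_s n}}(T^{\otimes n}),\qquad T=e^{-\lambda h}B .
\]
$T$ is trace class with $\norm{T}<1$, and the bosonic generating identity gives $\sum_n\tr(T^{\otimes n}|_{\text{sym}})=\det(1-T)^{-1}$. For finite-rank or diagonalizable $T$ this follows from the eigenvalue product $\prod_j(1-t_j)^{-1}$ and Lidskii; for general trace class $T$ it follows by continuity of both sides in $\gS^1$.

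Dividing by $Z_0=\det(1-e^{-\lambda h})^{-1}$ gives
\[
\frac{\det(1-e^{-\lambda h})}{\det(1-e^{-\lambda h}B)}=\det\big((1-e^{-\lambda h}B)(1-e^{-\lambda h})^{-1}\big)^{-1}.
\]
Next use $e^{-\lambda h}(1-e^{-\lambda h})^{-1}=\gamma_0$, so that
\[
(1-e^{-\lambda h}B)(1-e^{-\lambda h})^{-1}=1+e^{-\lambda h}(1-B)(1-e^{-\lambda h})^{-1}.
\]
This is a similarity of $1+\gamma_0(1-B)$ by the function $(1-e^{-\lambda h})$ of $h$. The determinant is invariant under similarity and under cyclic moves ($\det(1+XY)=\det(1+YX)$), so it equals
\[
\det\big(1+\gamma_0^{1/2}(1-B)\gamma_0^{1/2}\big).
\]
Since this operator is positive, writing $\det=\exp\tr\log$ gives \eqref{eq:quasi-free-trace-identity}.

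\textbf{Main obstacle: rigor with unbounded $h$.} The factor $(1-e^{-\lambda h})^{-1}$ is bounded, since $h\ge1$, but it is not close to $1$ in trace class, so the similarity step must be carried out carefully. I will work with a cutoff $P_K$: replace $\Gamma_0$ by its restriction to the Fock space of $P_K\gH$, where all determinants are finite-dimensional and the algebra is exact. Then I let $K\to\infty$.

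On the left, $\tr(e^{-\dG(A)}\Gamma_0)$ is obtained as the limit of the corresponding quantity with $A$ replaced by $P_KAP_K$. Both sides are continuous in this limit: the right by trace-class convergence of $\gamma_0^{1/2}(1-e^{-P_KAP_K})\gamma_0^{1/2}$, which is dominated by $\norm{A}\gamma_0$, together with continuity of $\det$ on $1+\gS^1$. The left is continuous because $e^{-\dG(P_KAP_K)}\to e^{-\dG(A)}$ strongly with norm at most $1$, and $\Gamma_0$ is trace class.

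An alternative route, if this approximation becomes awkward, is to compute directly with the coherent representation of Lemma~\ref{lem:free-coherent-representation}. That gives
\[
\int\exp\big(-\lambda^{-1}\inprod{u}{(1-e^{-A})u}\big)\dd\mu_\lambda(u),
\]
a Gaussian integral equal to $\det(1+\mathsf C_\lambda^{1/2}\lambda^{-1}(1-e^{-A})\mathsf C_\lambda^{1/2})^{-1}$. This is exactly the claim, since $\mathsf C_\lambda/\lambda=\gamma_0$. Here the determinant formula for complex Gaussian integrals of nonnegative trace-class quadratic forms again follows by finite-dimensional reduction and monotone convergence.
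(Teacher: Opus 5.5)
Your argument is correct, but your main route differs from the paper's. Your ``alternative route'' is exactly what the paper does. The paper inserts the coherent-state representation of Lemma~\ref{lem:free-coherent-representation} and uses $\langle\xi(v),\Gamma(e^{-A})\xi(v)\rangle=e^{-\langle v,(1-e^{-A})v\rangle}$. It then diagonalizes $\gamma_0^{1/2}(1-e^{-A})\gamma_0^{1/2}$ inside the Gaussian integral and evaluates $\mathbb E[e^{-\rho|g|^2}]=(1+\rho)^{-1}$ mode by mode.

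Your primary argument instead stays on Fock space. It uses $\tr_{\gF}\Gamma(T)=\det(1-T)^{-1}$ with $T=e^{-\lambda h}e^{-A}$, divides by $Z_0=\det(1-e^{-\lambda h})^{-1}$, and finishes with Fredholm-determinant algebra: multiplicativity, similarity invariance, and $\det(1+XY)=\det(1+YX)$. This buys two things. It does not depend on Lemma~\ref{lem:free-coherent-representation}, whose proof in the paper rests on the quasi-free characterization of states. It also gives the identity for a general $0\leq B\leq 1$ in place of $e^{-A}$ at no extra cost. The paper's route is shorter once the coherent representation is available, and it yields $\prod_j(1+\rho_j)^{-1}$ without any determinant manipulations.

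Your ``main obstacle'' is not actually an obstacle. Since $(1-e^{-\lambda h})^{-1}=1+\gamma_0$ and $\gamma_0\in\gS^1$ for fixed $\lambda>0$, every operator whose determinant you take already lies in $1+\gS^1$. Your multiplicativity, similarity and cyclicity steps are therefore legitimate directly, with no cutoff.

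If you keep the cutoff $A\mapsto P_KAP_K$, fix one justification. The bound $0\leq\gamma_0^{1/2}(1-e^{-P_KAP_K})\gamma_0^{1/2}\leq\norm{A}\gamma_0$ gives only a uniform estimate, not convergence. Trace-norm convergence comes from the strong convergence $e^{-P_KAP_K}\to e^{-A}$ (uniformly bounded) combined with $\gamma_0^{1/2}\in\gS^2$.
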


\begin{proof}
Since $0\leq1-e^{-A}\leq1$ and
$\gamma_0\in\gS^1(\gH)$,
\[
    0\leq
    \gamma_0^{1/2}(1-e^{-A})\gamma_0^{1/2}
    \leq
    \gamma_0.
\]
This proves the trace-class assertion.

By Lemma~\ref{lem:free-coherent-representation} and
$e^{-\dG(A)}=\Gamma(e^{-A})$,
\begin{align*}
    \tr_{\gF}\left(e^{-\dG(A)}\Gamma_0\right)
    &=
    \int
    \left\langle
        \xi(u/\sqrt{\lambda}),
        \Gamma(e^{-A})\xi(u/\sqrt{\lambda})
    \right\rangle
    \dd\mu_\lambda(u)
    \\
    &=
    \int
    \exp\left(
        -\frac1\lambda
        \inprod{u}{(1-e^{-A})u}
    \right)
    \dd\mu_\lambda(u).
\end{align*}
Indeed, for every $v\in\gH$,
\[
    \left\langle\xi(v),\Gamma(e^{-A})\xi(v)\right\rangle
    =
    e^{-\norm v^2}
    \sum_{n=0}^{\infty}
    \frac{\inprod{v}{e^{-A}v}^{\,n}}{n!}
    =
    e^{-\inprod{v}{(1-e^{-A})v}}.
\]

Let $(\rho_j)_{j\geq1}$ be the eigenvalues of the positive trace-class
operator
\[
    \gamma_0^{1/2}(1-e^{-A})\gamma_0^{1/2}.
\]
Since $\mu_\lambda$ has covariance
$\mathsf C_\lambda=\lambda\gamma_0$, represent $u$ by
$(\lambda\gamma_0)^{1/2}g$ in finite-dimensional approximations and
diagonalize the operator above.  Unitary invariance of the standard
complex Gaussian vector gives
\[
    \frac1\lambda\inprod{u}{(1-e^{-A})u}
    \stackrel{\mathrm{law}}=
    \sum_{j\geq1}\rho_j|g_j|^2,
\]
where the $g_j$'s are independent standard complex Gaussian variables.
Since $\sum_j\rho_j<\infty$, the series on the right is finite almost
surely, and dominated convergence applied to its partial sums gives
\begin{align*}
    \tr_{\gF}\left(e^{-\dG(A)}\Gamma_0\right)
    &=
    \prod_{j\geq1}
    \mathbb E\left[e^{-\rho_j|g_j|^2}\right]
    =
    \prod_{j\geq1}\frac1{1+\rho_j}
    \\
    &=
    \exp\left(
        -\sum_{j\geq1}\log(1+\rho_j)
    \right)
    \\
    &=
    \exp\left(
        -
        \tr_{\gH}
        \log\left[
            1+
            \gamma_0^{1/2}(1-e^{-A})\gamma_0^{1/2}
        \right]
    \right).
\end{align*}
Here we used
$\mathbb E[e^{-t|g_j|^2}]=(1+t)^{-1}$ for $t\geq0$.
This proves \eqref{eq:quasi-free-trace-identity}.
\end{proof}

\bigskip

\section{Ginibre loop representation}\label{app:Ginibre-loop-representation}

We recall the Ginibre loop representation, originating in
\cite{Ginibre65,ginibre1965reduced2,ginibre1965reduced3,
ginibre1971some}.  See also
\cite{FKSS26,frohlich2020path,garouniatis2026large} for recent
developments in loop representations for interacting Bose gases.

Let $h=-\Delta+1$ on $\bT^d$, and denote by
\[
        G_t(x,y)=e^{-th}(x,y)
\]
the heat kernel.  For $t>0$ and $x,y\in\bT^d$, let
$\mathbb{B}_{x,y}^t$ be the unnormalized Brownian bridge measure from $x$ to
$y$ in time $t$, normalized by
\[
        \int \dd\mathbb{B}_{x,y}^t(\omega)=G_t(x,y).
\]
The only property of $\mathbb{B}_{x,y}^t$ used below is the concatenation identity:
for $0<t_1,t_2<\infty$,
\begin{equation}
\label{eq:bridge-concat-two}
 \int_{\bT^d}
 \dd\mathbb{B}_{x,z}^{t_1}(\omega_1)\,
 \dd\mathbb{B}_{z,y}^{t_2}(\omega_2)\,\dd z
 =
 \dd\mathbb{B}_{x,y}^{t_1+t_2}(\omega_1\star\omega_2),
\end{equation}
where $\omega_1\star\omega_2$ denotes the path obtained by following
$\omega_1$ and then $\omega_2$.  Equivalently, for every bounded measurable
function $\Phi$ of the concatenated path,
\[
 \int_{\bT^d}\int
 \Phi(\omega_1\star\omega_2)
 \dd\mathbb{B}_{x,z}^{t_1}(\omega_1)
 \dd\mathbb{B}_{z,y}^{t_2}(\omega_2)\,\dd z
 =
 \int \Phi(\omega)\dd\mathbb{B}_{x,y}^{t_1+t_2}(\omega).
\]
This is the path-measure form of the semigroup identity
$\int_{\bT^d}G_{t_1}(x,z)G_{t_2}(z,y)\dd z=G_{t_1+t_2}(x,y)$.

Let $\underline{X}_n=(x_1,\ldots,x_n)$ and $\underline{Y}_n=(y_1,\ldots,y_n)$.  On the
$n$-particle sector, recall that the renormalized interaction is the symmetric multiplication
operator
\[
 \bW_{\lambda,n}^{{\rm ren}}(\underline{X}_n)
 =
 \frac16\int_\Omega\dd\nu(\omega)\int_{\bT^d}
 \Big(
 \lambda\sum_{j=1}^nJ_\omega(x_j-r)-m_{\lambda,\omega}
 \Big)^3
 \dd r .
\]
For every
$n\geq1$, set
$
        K_{\lambda,n}^{\rm bos}(\underline{X}_n;\underline{Y}_n)
$
to be the integral kernel on $\gH^{\otimes_s n}$ of
\[
        \exp\Big\{-\lambda\sum_{j=1}^n h_j
        -\bW_{\lambda,n}^{{\rm ren}}\Big\}.
\]
Since $\bW_{\lambda,n}^{{\rm ren}}$ is symmetric in $\underline{X}_n$, Feynman--Kac on
the full tensor product followed by the bosonic symmetrizer gives
\begin{align}
\label{eq:FK-bosonic-kernel}
K_{\lambda,n}^{\rm bos}(\underline{X}_n;\underline{Y}_n)
=
\frac1{n!}\sum_{\sigma\in S_n}
\int
\exp\left[
-\frac1\lambda\int_0^\lambda
\bW_{\lambda,n}^{{\rm ren}}
\bigl(\omega_1(s),\ldots,\omega_n(s)\bigr)\dd s
\right]                                                   
\prod_{j=1}^n
\dd\mathbb{B}_{x_j,y_{\sigma(j)}}^\lambda(\omega_j),
\end{align}
where $S_n$ is the symmetric group of $n$ elements.

\subsection{Closed-loop expansion of the Fock trace}

Recall that the partition function is defined by
\[
        Z_{\lambda}
        =
        \tr_{\gF}
        \exp\{-\lambda d\Gamma(h)-\bW_\lambda^{{\rm ren}}\}.
\]
Using \eqref{eq:FK-bosonic-kernel} with $\underline{Y}_n=\underline{X}_n$, we obtain
\begin{align}
\label{eq:Fock-trace-first}
Z_{\lambda}
&=
\sum_{n\geq0}
\int_{(\bT^d)^n}
K_{\lambda,n}^{\rm bos}(\underline{X}_n;\underline{X}_n)\dd \underline{X}_n                                           \notag\\
&=
\sum_{n\geq0}\frac1{n!}
\sum_{\sigma\in S_n}
\int_{(\bT^d)^n}
\int
\exp\left[
-\frac1\lambda\int_0^\lambda
\bW_{\lambda,n}^{{\rm ren}}
\bigl(\omega_1(s),\ldots,\omega_n(s)\bigr)\dd s
\right]                                                                           \notag\\
&\hspace{5.2cm}\times
\prod_{j=1}^n
\dd\mathbb{B}_{x_j,x_{\sigma(j)}}^\lambda(\omega_j)\dd \underline{X}_n .
\end{align}

Let a cycle of $\sigma$ be
\[
        c=(i_1\,i_2\,\cdots\,i_\ell).
\]
The part of the product in \eqref{eq:Fock-trace-first} associated with this cycle is
\[
        \dd\mathbb{B}_{x_{i_1},x_{i_2}}^\lambda(\omega_{i_1})
        \dd\mathbb{B}_{x_{i_2},x_{i_3}}^\lambda(\omega_{i_2})
        \cdots
        \dd\mathbb{B}_{x_{i_\ell},x_{i_1}}^\lambda(\omega_{i_\ell}).
\]
Repeated use of \eqref{eq:bridge-concat-two} gives, for every bounded measurable
function of the concatenated loop $\eta$,
\begin{align}
\label{eq:closed-cycle-concat}
&\int_{(\bT^d)^{\ell-1}}
\Phi(\eta)
\prod_{q=1}^{\ell}
\dd\mathbb{B}_{x_{i_q},x_{i_{q+1}}}^{\lambda}
(\omega_{i_q})
\,\dd x_{i_2}\cdots\dd x_{i_\ell}                                      
=
\int
\Phi(\eta)\,
\dd\mathbb{B}_{x_{i_1},x_{i_1}}^{\ell\lambda}(\eta),
\qquad i_{\ell+1}=i_1 .
\end{align}
For $s\in[0,\lambda]$, the loop $\eta$ of time length $\ell\lambda$
produces the $\ell$ positions
\[
        \eta(s),\eta(s+\lambda),\ldots,\eta(s+(\ell-1)\lambda).
\]

For a finite configuration $\eta=(\eta_1,\ldots,\eta_M)$ of closed loops,
let $\cX_s(\eta)$ be the finite multiset of all positions occupied at
time $s\in[0,\lambda]$, including all points separated by integer multiples
of $\lambda$ along each loop.  Define
\begin{equation}
        \cU_\lambda(\eta)
        =
        \frac1\lambda\int_0^\lambda
        \bW_{\lambda,|\cX_s(\eta)|}^{{\rm ren}}
        \bigl(\cX_s(\eta)\bigr)\dd s .
\end{equation}
Introduce the positive loop measure
\begin{equation}
\label{eq:loop-measure}
        \dd\mathfrak L_\lambda(\eta)
        =
        \sum_{\ell\geq1}\frac1\ell
        \int_{\bT^d}\dd\mathbb{B}_{x,x}^{\ell\lambda}(\eta)\dd x .
\end{equation}
The factor $1/\ell$ in \eqref{eq:loop-measure} is the residue of the
cycle counting: a cycle of length $\ell$ has $\ell$ possible starting
points.  More precisely, if a permutation has $r_\ell$ cycles of length
$\ell$, then
\[
        \frac1{n!}\#\{\sigma\in S_n:
        \sigma\hbox{ has }r_\ell\hbox{ cycles of length }\ell
        \hbox{ for all }\ell\}
        =
        \prod_{\ell\geq1}\frac1{\ell^{r_\ell}r_\ell!}.
\]
Using this identity and \eqref{eq:closed-cycle-concat} in
\eqref{eq:Fock-trace-first} yields
\begin{align}
\label{eq:Z-loop-expanded}
Z_{\lambda}
&=
\sum_{M=0}^{\infty}\frac1{M!}
\int
e^{-\cU_\lambda(\eta_1,\ldots,\eta_M)}
\prod_{a=1}^M\dd\mathfrak L_\lambda(\eta_a).
\end{align}

For the free partition function,
\[
        Z_0=Z_0(\lambda)=\tr_{\gF}e^{-\lambda d\Gamma(h)},
\]
the same computation gives
\begin{align}
\label{eq:Z0-loop}
Z_0
&=
\sum_{M=0}^{\infty}\frac1{M!}
\int\prod_{a=1}^M\dd\mathfrak L_\lambda(\eta_a)
=
\exp\Big(\int\dd\mathfrak L_\lambda(\eta)\Big)                         \notag\\
&=
\exp\Big(
\sum_{\ell\geq1}\frac1\ell\tr_{\gH}e^{-\ell\lambda h}
\Big)
=
\det(1-e^{-\lambda h})^{-1}.
\end{align}
For a nonnegative function $F$ on finite closed-loop configurations, set
\begin{equation}
\label{eq:E-loop}
\mathbb E_{\rm loop}[F(\eta)]
=
Z_0^{-1}
\sum_{M=0}^{\infty}\frac1{M!}
\int
F(\eta_1,\ldots,\eta_M)
\prod_{a=1}^M\dd\mathfrak L_\lambda(\eta_a).
\end{equation}
Then \eqref{eq:Z-loop-expanded} and \eqref{eq:Z0-loop} give
\begin{equation}
        \frac{Z_{\lambda}}{Z_0}
        =
        \mathbb E_{\rm loop}
        \left[e^{-\cU_\lambda(\eta)}\right].
\end{equation}

\subsection{Open paths and the partial trace}

Let $\underline{X}_k=(x_1,\ldots,x_k)$, $\underline{Y}_k=(y_1,\ldots,y_k)$, and
$\underline{Z}_m=(z_1,\ldots,z_m)$.  Put
\[
        \underline{U}_{k+m}=(x_1,\ldots,x_k,z_1,\ldots,z_m),
        \qquad
        \underline{V}_{k+m}=(y_1,\ldots,y_k,z_1,\ldots,z_m).
\]
By the definition of the $k$-particle reduced density matrix,
\begin{align}
\label{eq:partial-trace-start}
Z_{\lambda}\Gamma_\lambda^{(k)}(\underline{X}_k;\underline{Y}_k)
&=
\sum_{m\geq0}\binom{k+m}{k}
\int_{(\bT^d)^m}
K_{\lambda,k+m}^{\rm bos}(\underline{U}_{k+m};\underline{V}_{k+m})\dd \underline{Z}_m .
\end{align}
Substituting \eqref{eq:FK-bosonic-kernel} into
\eqref{eq:partial-trace-start} and using
$\binom{k+m}{k}/(k+m)!=1/(k!m!)$, we get
\begin{align}
\label{eq:partial-trace-FK}
&Z_{\lambda}\Gamma_\lambda^{(k)}(\underline{X}_k;\underline{Y}_k) \notag\\
&\quad=
\sum_{m\geq0}\frac1{k!m!}
\sum_{\sigma\in S_{k+m}}
\int_{(\bT^d)^m}\int
\exp\left[
-\frac1\lambda\int_0^\lambda
\bW_{\lambda,k+m}^{{\rm ren}}
\bigl(\omega_1(s),\ldots,\omega_{k+m}(s)\bigr)\dd s
\right]                                                                  \notag\\
&\qquad\qquad\times
\prod_{j=1}^{k+m}
\dd\mathbb{B}_{u_j,v_{\sigma(j)}}^\lambda(\omega_j)\,\dd \underline{Z}_m .
\end{align}

We now identify, inside \eqref{eq:partial-trace-FK}, the factors which
become open paths.  Suppose a segment of a cycle runs from the distinguished
label $i\in\{1,\ldots,k\}$ to the distinguished label $j\in\{1,\ldots,k\}$
through traced labels $a_1,\ldots,a_{\ell-1}$.  The corresponding factor in
\eqref{eq:partial-trace-FK} is
\[
 \dd\mathbb{B}_{x_i,z_{a_1}}^\lambda
 \dd\mathbb{B}_{z_{a_1},z_{a_2}}^\lambda
 \cdots
 \dd\mathbb{B}_{z_{a_{\ell-1}},y_j}^\lambda
 \,\dd z_{a_1}\cdots\dd z_{a_{\ell-1}} .
\]
Repeated use of \eqref{eq:bridge-concat-two} gives
\begin{align}
\label{eq:open-chain-concat}
&\int
 \dd\mathbb{B}_{x_i,z_{a_1}}^\lambda
 \dd\mathbb{B}_{z_{a_1},z_{a_2}}^\lambda
 \cdots
 \dd\mathbb{B}_{z_{a_{\ell-1}},y_j}^\lambda
 \,\dd z_{a_1}\cdots\dd z_{a_{\ell-1}}
 =
 \dd\mathbb{B}_{x_i,y_j}^{\ell\lambda}.
\end{align}
If $\ell=1$, \eqref{eq:open-chain-concat} means
$\dd\mathbb{B}_{x_i,y_j}^{\lambda}$, with no traced variable integrated.

For a fixed permutation $\pi\in S_k$ and lengths
$\ell_1,\ldots,\ell_k\geq1$, the $k$ open chains use
\[
        q=\sum_{i=1}^k(\ell_i-1)
\]
traced labels.  From the coefficient $1/(k!m!)$ in
\eqref{eq:partial-trace-FK}, the choice and ordering of these $q$ traced
labels gives
\[
        \frac1{k!m!}\cdot\frac{m!}{(m-q)!}
        =
        \frac1{k!(m-q)!}.
\]
The remaining $m-q$ traced labels form closed cycles and are then summed by
the same closed-loop expansion as in \eqref{eq:Z-loop-expanded}.  Hence the
positive open-path measure is
\begin{equation}
\label{eq:open-path-measure}
\dd\mathbb{B}_{X_k,Y_k}^{(k)}(\omega)
=
\frac1{k!}
\sum_{\pi\in S_k}
\sum_{\ell_1,\ldots,\ell_k\geq1}
\prod_{i=1}^k
\dd\mathbb{B}_{x_i,y_{\pi(i)}}^{\ell_i\lambda}(\omega_i).
\end{equation}

For a configuration $\omega$ of $k$ open paths and a configuration
$\eta$ of closed loops, define $\cX_s(\eta\cup\omega)$ by taking all
positions at time $s\in[0,\lambda]$, including all positions separated by
integer multiples of $\lambda$ along each loop or open path, and define
\begin{equation}
\cU_\lambda(\eta\cup\omega)
=
\frac1\lambda\int_0^\lambda
\bW_{\lambda,|\cX_s(\eta\cup\omega)|}^{{\rm ren}}
\bigl(\cX_s(\eta\cup\omega)\bigr)\dd s .
\end{equation}
Combining \eqref{eq:partial-trace-FK}, \eqref{eq:open-chain-concat},
\eqref{eq:loop-measure}, and \eqref{eq:open-path-measure}, we obtain
\begin{align}
\label{eq:partial-trace-loop-open}
Z_{\lambda}\Gamma_\lambda^{(k)}(\underline{X}_k;\underline{Y}_k)
&=
\int
\Big[
\sum_{M=0}^{\infty}\frac1{M!}
\int
e^{-\cU_\lambda(\eta_1,\ldots,\eta_M,\omega)}
\prod_{a=1}^{M}\dd\mathfrak L_\lambda(\eta_a)
\Big]\dd\mathbb{B}_{X_k,Y_k}^{(k)}(\omega).
\end{align}
Dividing \eqref{eq:partial-trace-loop-open} by
\eqref{eq:Z-loop-expanded} and using \eqref{eq:E-loop}, we arrive at the
Ginibre representation:

\medskip

\begin{proposition}[Ginibre representation for the reduced kernels]
For every $k\geq1$,
\begin{equation}
\label{eq:Ginibre-rep}
\Gamma_\lambda^{(k)}(\underline{X}_k;\underline{Y}_k)
=
\int
\frac{
\mathbb E_{\rm loop}
\left[e^{-\cU_\lambda(\eta\cup\omega)}\right]
}{
\mathbb E_{\rm loop}
\left[e^{-\cU_\lambda(\eta)}\right]
}
\dd\mathbb{B}_{X_k,Y_k}^{(k)}(\omega).
\end{equation}
\end{proposition}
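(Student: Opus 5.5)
The Ginibre representation \eqref{eq:Ginibre-rep} follows by dividing \eqref{eq:partial-trace-loop-open} by \eqref{eq:Z-loop-expanded}. The work lies in making the combinatorial regrouping that produced \eqref{eq:partial-trace-loop-open} rigorous, so I would carry out the following steps in order.

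\textbf{Step 1 (Feynman--Kac in each sector).} On the $(k+m)$-particle sector, $\bW_{\lambda,k+m}^{\ren}$ is a bounded symmetric multiplication operator. It is bounded below by Proposition~\ref{prop:quantum-wellposedness} and bounded above by $C\lambda^3(n+N_0)^3\int\norm{J_\omega}^3_{L^\infty}\dd\nu$. Hence Feynman--Kac holds on the unsymmetrized tensor product, and applying the bosonic symmetrizer gives \eqref{eq:FK-bosonic-kernel}. Inserting this into \eqref{eq:partial-trace-start} gives \eqref{eq:partial-trace-FK}. All integrands are nonnegative, so Tonelli allows every subsequent reordering of sums and integrals. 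Finiteness comes from the bound
\[
e^{-\cU}\le \exp\Big(\tfrac{(2\pi)^d}{6}\int m_{\lambda,\omega}^3\dd\nu\Big),
\]
together with the fact that the free counterparts $Z_0$ and $Z_0\Gamma_0^{(k)}$ are finite.

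\textbf{Step 2 (cycle decomposition with distinguished labels).} Fix $\sigma\in S_{k+m}$ and follow its cycles on the labels $\{1,\dots,k+m\}$, where the first $k$ labels carry endpoints $x_i\to y_{\sigma(i)}$ and the traced labels carry $z_a$. Each cycle that contains distinguished labels splits into arcs running from a distinguished $i$ to the next distinguished $j$ through traced labels. Concatenation \eqref{eq:open-chain-concat} turns each arc into $\dd\mathbb B^{\ell\lambda}_{x_i,y_j}$, with $j=\pi(i)$ for an induced permutation $\pi\in S_k$. Cycles containing only traced labels become closed loops via \eqref{eq:closed-cycle-concat}.

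The key observation is that the interaction depends only on the multiset of time-$s$ positions. For each concatenated path, these are the positions at $s+p\lambda$ for $p=0,\dots,\ell-1$, so the exponent becomes exactly $\cU_\lambda(\eta\cup\omega)$. This makes the integrand a function of the concatenated paths alone.

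\textbf{Step 3 (counting).} Counting the $\sigma$ that produce a given $\pi$, arc lengths $(\ell_i)$, and cycle type of the remainder yields the weights $\frac{1}{k!(m-q)!}$. The remaining $m-q$ labels are then summed exactly as in \eqref{eq:Z-loop-expanded}, using the $\prod_\ell(\ell^{r_\ell}r_\ell!)^{-1}$ identity, which produces $\frac1{M!}\prod\dd\mathfrak L_\lambda$. Summing over $m$ and over $q$ decouples the open and closed parts and gives \eqref{eq:partial-trace-loop-open}.

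Finally, dividing by $Z_\lambda=Z_0\,\mathbb E_{\rm loop}[e^{-\cU_\lambda(\eta)}]$ and writing the numerator as $Z_0\,\mathbb E_{\rm loop}[e^{-\cU_\lambda(\eta\cup\omega)}]$ via \eqref{eq:E-loop} gives \eqref{eq:Ginibre-rep}. The identity holds pointwise for a.e.\ $(\underline X_k,\underline Y_k)$, and everywhere if continuous versions of the bridge kernels are used.

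\textbf{Main obstacle.} I expect the bookkeeping in Step 3 to be the hardest part. One must check that the map from $(\sigma,\text{labels})$ to (open arcs with ordered traced labels, unordered closed-cycle data) has exactly the claimed multiplicity $m!/(m-q)!$ times the closed-cycle weight. In particular, the starting point of an arc is fixed by its distinguished label, so no extra factor $1/\ell$ appears for open paths, while closed cycles do carry the $1/\ell$ factor.

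I would verify this by first conditioning on the set of traced labels used in arcs together with their order along the arcs, and then applying the standard cycle-type count to the complementary permutation of the remaining $m-q$ labels. As a sanity check, setting $\bW=0$ should reproduce $\Gamma_0^{(k)}$ as a permanent of $\sum_{\ell}G_{\ell\lambda}=\gamma_0$ kernels, which confirms the normalization.
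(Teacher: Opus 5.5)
Your proposal is correct and follows the same route as the paper. The steps are Feynman--Kac in each sector, concatenation of bridges along open arcs and closed cycles, and the factor $\frac{1}{k!m!}\cdot\frac{m!}{(m-q)!}=\frac{1}{k!(m-q)!}$ from choosing and ordering the traced labels on the arcs, followed by closed-loop resummation of the remaining $m-q$ labels and division by $Z_\lambda=Z_0\,\mathbb E_{\rm loop}[e^{-\cU_\lambda(\eta)}]$; your explicit bijection $\sigma\leftrightarrow(\pi,\text{ordered arc labels},\tau\in S_{m-q})$ and the Tonelli justification via the lower bound on $\bW_{\lambda,n}^{\ren}$ only make explicit what the paper leaves implicit.
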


\medskip

\subsection{Pointwise kernel domination by the free fixed-order reduced density matrices}

The free one-particle density matrix is
\[
        \gamma_0=(e^{\lambda h}-1)^{-1}
        =
        \sum_{\ell\geq1}e^{-\ell\lambda h}.
\]
Taking the total mass of \eqref{eq:open-path-measure} gives
\begin{align}
\label{eq:open-total-mass}
\int\dd\mathbb{B}_{X_k,Y_k}^{(k)}(\omega)
&=
\frac1{k!}\sum_{\pi\in S_k}
\sum_{\ell_1,\ldots,\ell_k\geq1}
\prod_{i=1}^k
G_{\ell_i\lambda}(x_i,y_{\pi(i)})                                      \notag\\
&=
\frac1{k!}\sum_{\pi\in\mathfrak S_k}
\prod_{i=1}^k
\Big(\sum_{\ell\geq1}G_{\ell\lambda}(x_i,y_{\pi(i)})\Big)              \notag\\
&=
\frac1{k!}\sum_{\pi\in\mathfrak S_k}
\prod_{i=1}^k\gamma_0(x_i,y_{\pi(i)})
=
\Gamma_0^{(k)}(\underline{X}_k;\underline{Y}_k).
\end{align}

We now use the special structure of the three-body interaction. Suppose $\mathcal X\subset \mathcal X'$. Since $J_\omega\geq0$,
\begin{align}
&\lambda\sum_{x\in\mathcal X'}J_\omega(x-r)
=
\lambda\sum_{x\in\mathcal X}J_\omega(x-r)
+
\lambda\sum_{x\in\mathcal X'\setminus\mathcal X}J_\omega(x-r)
\geq
\lambda\sum_{x\in\mathcal X}J_\omega(x-r).
\end{align}
It follows that
\begin{align}
\label{eq:sector-monotone}
\bW_{\lambda,|\mathcal X'|}^{{\rm ren}}(\mathcal X')\geq\bW_{\lambda,|\mathcal X|}^{{\rm ren}}(\mathcal X).
\end{align}
For every $s\in[0,\lambda]$,
$
        \cX_s(\eta)\subset \cX_s(\eta\cup\omega),
$
as finite multisets.  Combining this inclusion with
\eqref{eq:sector-monotone} gives
\[
\bW_{\lambda,|\cX_s(\eta\cup\omega)|}^{{\rm ren}}
\bigl(\cX_s(\eta\cup\omega)\bigr)
\geq
\bW_{\lambda,|\cX_s(\eta)|}^{{\rm ren}}
\bigl(\cX_s(\eta)\bigr).
\]
After integration in $s$, this is
$
        \cU_\lambda(\eta\cup\omega)\geq \cU_\lambda(\eta).
$
Using the normalized loop series \eqref{eq:E-loop},
\[
0\leq
\mathbb E_{\rm loop}
\left[e^{-\cU_\lambda(\eta\cup\omega)}\right]
\leq
\mathbb E_{\rm loop}
\left[e^{-\cU_\lambda(\eta)}\right].
\]
The denominator is strictly positive, since the integrand is positive and the
empty-loop term is present.  Hence the ratio in \eqref{eq:Ginibre-rep}
belongs to $[0,1]$.

\medskip

\begin{proposition}[Pointwise kernel domination]\label{prop:Ginibre-domination}
For every $k\geq1$, the integral kernels of $\Gamma_\lambda^{(k)}$ and
$\Gamma_0^{(k)}$ are nonnegative and satisfy
\[
        0\leq \Gamma_\lambda^{(k)}(\underline{X}_k;\underline{Y}_k)
        \leq \Gamma_0^{(k)}(\underline{X}_k;\underline{Y}_k)
\]
for a.e. $\underline{X}_k,\underline{Y}_k\in(\bT^d)^k$.
\end{proposition}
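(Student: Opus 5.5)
The proof combines two facts already established: the Ginibre representation \eqref{eq:Ginibre-rep}, and the observation just made that the ratio of loop expectations lies in $[0,1]$. Fix $k\geq1$ and $(\underline X_k,\underline Y_k)$.

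\textbf{Nonnegativity.} The open-path measure $\dd\mathbb B^{(k)}_{X_k,Y_k}$ in \eqref{eq:open-path-measure} is a positive measure. It is a finite sum, over $\pi\in S_k$ and $\ell_i\geq1$, of products of unnormalized Brownian bridge measures, and these are positive because the heat kernel $G_t$ is positive.

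The integrand in \eqref{eq:Ginibre-rep} is
\[
R(\omega):=\frac{\mathbb E_{\rm loop}\big[e^{-\cU_\lambda(\eta\cup\omega)}\big]}{\mathbb E_{\rm loop}\big[e^{-\cU_\lambda(\eta)}\big]}.
\]
It is a ratio of expectations of positive quantities against the positive loop measure, and the denominator is strictly positive. Hence $R(\omega)\geq0$, and so $\Gamma_\lambda^{(k)}(\underline X_k;\underline Y_k)\geq0$.

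The free kernel $\Gamma_0^{(k)}=\frac1{k!}\sum_\pi\prod_i\gamma_0(x_i,y_{\pi(i)})$ is nonnegative because $\gamma_0(x,y)=\sum_{\ell\geq1}G_{\ell\lambda}(x,y)\geq0$.

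\textbf{Upper bound.} The monotonicity argument preceding the proposition gives, for every open-path configuration $\omega$, that $\cU_\lambda(\eta\cup\omega)\geq\cU_\lambda(\eta)$ for all loop configurations $\eta$. This uses $J_\omega\geq0$, \eqref{eq:sector-monotone}, and the multiset inclusion $\cX_s(\eta)\subset\cX_s(\eta\cup\omega)$. It follows that $R(\omega)\leq1$. Integrating against the positive measure $\dd\mathbb B^{(k)}_{X_k,Y_k}$ and using the total-mass identity \eqref{eq:open-total-mass}, we get
\[
\Gamma_\lambda^{(k)}(\underline X_k;\underline Y_k)\leq\int\dd\mathbb B^{(k)}_{X_k,Y_k}(\omega)=\Gamma_0^{(k)}(\underline X_k;\underline Y_k).
\]

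\textbf{Main obstacle: justifying the representation itself.} The inequalities above are immediate; the real work is making \eqref{eq:Ginibre-rep} rigorous.

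\emph{Interchanges of sums and integrals.} The renormalized interaction $\bW^{\ren}$ is not nonnegative, but it is bounded below sector-wise by $-\frac{(2\pi)^d}{6}\int m_{\lambda,\omega}^3\dd\nu$, uniformly in $n$ (Proposition~\ref{prop:quantum-wellposedness}). So $e^{-\cU_\lambda}\leq C_\lambda$. All terms are then dominated by the free loop expansion, which converges because $Z_0<\infty$ and $\gamma_0$ is trace class. This justifies exchanging the sums over $M$, $m$, permutations and cycle decompositions with the integrals (Tonelli/Fubini).

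\emph{Feynman--Kac on each sector.} \eqref{eq:FK-bosonic-kernel} follows from the Trotter product formula applied to $\lambda\sum_j h_j$ plus a bounded multiplication operator on each $n$-particle sector.

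\emph{Almost-everywhere statement.} Kernels of trace-class operators are defined only almost everywhere. I would fix the representatives given by the path-integral formulas, which are jointly measurable, and then the inequality holds for a.e.\ $(\underline X_k,\underline Y_k)$.
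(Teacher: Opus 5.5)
Your proof is correct and follows the paper's own argument. You combine the positivity of the open-path measure with the ratio bound $0\leq R(\omega)\leq1$, which comes from $\cU_\lambda(\eta\cup\omega)\geq\cU_\lambda(\eta)$ and the strict positivity of the denominator, insert both into the Ginibre representation \eqref{eq:Ginibre-rep}, and then use the total-mass identity \eqref{eq:open-total-mass}. Your remarks on Tonelli, Feynman--Kac and the choice of representatives just make explicit what the appendix does implicitly. One cosmetic slip: the sum over $\ell_1,\ldots,\ell_k\geq1$ in \eqref{eq:open-path-measure} is countable rather than finite, which does not affect positivity.
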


\medskip

\begin{proof}
Using \eqref{eq:Ginibre-rep}, the ratio estimate above, and
\eqref{eq:open-total-mass}, we get
\[
\begin{aligned}
0
\leq
\Gamma_\lambda^{(k)}(\underline{X}_k;\underline{Y}_k)                                                    
\leq
\int\dd\mathbb{B}_{\underline{X}_k,\underline{Y}_k}^{(k)}(\omega)                                                
=
\Gamma_0^{(k)}(\underline{X}_k;\underline{Y}_k).
\end{aligned}
\]
\end{proof}

\medskip

\begin{corollary}[Hilbert--Schmidt and projected number bounds]
For every $k\geq1$,
\begin{equation}\label{eq:interacting-HS-bound}
    \lambda^k
    \norm{\Gamma_\lambda^{(k)}}_{\gS^2}
    \leq
    \big(\tr h^{-2}\big)^{k/2}.
\end{equation}
\end{corollary}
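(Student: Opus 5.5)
The plan is to deduce the Hilbert--Schmidt bound directly from the pointwise kernel domination in Proposition~\ref{prop:Ginibre-domination}, together with the explicit structure of the free reduced density matrices. The key observation is that for nonnegative kernels, pointwise domination $0\leq K_1\leq K_2$ implies $\norm{K_1}_{L^2}\leq\norm{K_2}_{L^2}$. The Hilbert--Schmidt norm of an operator on $L^2((\bT^d)^k)$ equals the $L^2$-norm of its integral kernel. Hence
\[
    \norm{\Gamma_\lambda^{(k)}}_{\gS^2}
    =
    \norm{\Gamma_\lambda^{(k)}(\cdot;\cdot)}_{L^2((\bT^d)^{2k})}
    \leq
    \norm{\Gamma_0^{(k)}(\cdot;\cdot)}_{L^2((\bT^d)^{2k})}
    =
    \norm{\Gamma_0^{(k)}}_{\gS^2}.
\]

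Next I will compute the free Hilbert--Schmidt norm. By \eqref{eq:open-total-mass}, the kernel of $\Gamma_0^{(k)}$ is the symmetrization $\frac1{k!}\sum_{\pi}\prod_i\gamma_0(x_i,y_{\pi(i)})$. This is the kernel of $\gamma_0^{\otimes k}$ restricted to the symmetric subspace, that is, $\gamma_0^{\otimes k}P_s$ with $P_s$ the bosonic symmetrizer acting in the $y$-variables. Since $\norm{P_s}\leq1$, the Schatten Hölder inequality gives
\[
    \norm{\Gamma_0^{(k)}}_{\gS^2}\leq\norm{\gamma_0^{\otimes k}}_{\gS^2}=\norm{\gamma_0}_{\gS^2}^k.
\]
One can equivalently use the triangle inequality in $L^2$ over the $k!$ permutations, each term having $L^2$-norm $\norm{\gamma_0}_{\gS^2}^k$ by Fubini; the average then gives the same bound.

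Finally, $\lambda\gamma_0=\mathsf C_\lambda=\lambda(e^{\lambda h}-1)^{-1}\leq h^{-1}$, using $e^x-1\geq x$. Since all these operators commute and are nonnegative, $\lambda^2\tr\gamma_0^2\leq\tr h^{-2}$. Combining the three steps,
\[
    \lambda^k\norm{\Gamma_\lambda^{(k)}}_{\gS^2}\leq\norm{\mathsf C_\lambda}_{\gS^2}^k\leq(\tr h^{-2})^{k/2}.
\]

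The only delicate point is the identification of the Hilbert--Schmidt norm on $\gH^{\otimes_s k}$ with the $L^2$-norm of the kernel on the full product space, so that pointwise domination can be applied. This is standard: $\Gamma^{(k)}_\lambda$ extends by zero on the antisymmetric-type complement, and its kernel is the symmetric one appearing in \eqref{eq:Ginibre-rep}, so the norms agree. I do not expect any step to present a genuine obstacle. All the real work lies in Proposition~\ref{prop:Ginibre-domination}, which is taken as given.
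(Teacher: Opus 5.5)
Your proposal is correct and follows the paper's own route. Proposition~\ref{prop:Ginibre-domination} gives $\norm{\Gamma_\lambda^{(k)}}_{\gS^2}\leq\norm{\Gamma_0^{(k)}}_{\gS^2}$, the identification $\Gamma_0^{(k)}=\gamma_0^{\otimes k}$ on the symmetric subspace gives $\norm{\Gamma_0^{(k)}}_{\gS^2}\leq\norm{\gamma_0}_{\gS^2}^k$, and $\lambda\gamma_0\leq h^{-1}$ finishes; your $\norm{P_s}\leq1$ argument simply makes explicit the step $\norm{\gamma_0^{\otimes k}P_s}_{\gS^2}\leq\norm{\gamma_0}_{\gS^2}^k$, which the paper leaves implicit.
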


\medskip

\begin{proof}
By Proposition~\ref{prop:Ginibre-domination},
\[
    \norm{\Gamma_\lambda^{(k)}}_{\gS^2}^2
    =
    \int_{(\bT^d)^k}\int_{(\bT^d)^k}
    |\Gamma_\lambda^{(k)}(\underline{X}_k;\underline{Y}_k)|^2\dd \underline{X}_k\dd \underline{Y}_k
    \leq
    \norm{\Gamma_0^{(k)}}_{\gS^2}^2 .
\]
The free quasi-free density matrices satisfy
$\Gamma_0^{(k)}=(\gamma_0)^{\otimes k}$ on $\gH^{\otimes_s k}$. Hence
\begin{align*}
\lambda^k\norm{\Gamma_\lambda^{(k)}}_{\gS^2}
\leq
\lambda^k\norm{\gamma_0}_{\gS^2}^k
=
\Big(
\sum_{p\in\Z^d}
\Big[
\frac{\lambda}{e^{\lambda h(p)}-1}
\Big]^2
\Big)^{k/2}                                                              
\leq
\Big(
\sum_{p\in\Z^d}h(p)^{-2}
\Big)^{k/2}.
\end{align*}
This proves \eqref{eq:interacting-HS-bound}.
\end{proof}

%\bibliographystyle{plain} % apalike, ieee, plain, alpha, unsrt, abbrv
%\bibliography{cubCFT}

\end{document}